\newcolumntype{C}{>{$}c<{$}}
\newcommand{\RN}[1]{%
  \textup{\uppercase\expandafter{\romannumeral#1}}%
}
\theoremstyle{definition}
\newtheorem{definition}{Definition}
\newtheorem{theorem}{Theorem}
\newtheorem{lemma}[theorem]{Lemma}
\DeclareMathOperator\erf{erf}
\DeclareMathOperator\erfc{erfc}
\begin{document}

\preprint{AIP/123-QED}

\title{Fast evaluation of solid harmonic Gaussian integrals for local resolution-\\of-the-identity methods and range-separated hybrid functionals} 



\author{Dorothea Golze}
\email[]{dorothea.golze@chem.uzh.ch}
\affiliation{Department of Chemistry, University of Z\"urich, Winterthurerstrasse 190, CH-8057 Z\"urich, Switzerland}
\author{Niels Benedikter}
\affiliation{QMath, Department of Mathematical Sciences, University of Copenhagen, Universitetsparken 5, 2100 K{\o}benhavn, Denmark}
\author{Marcella Iannuzzi}
\affiliation{Department of Chemistry, University of Z\"urich, Winterthurerstrasse 190, CH-8057 Z\"urich, Switzerland}
\author{Jan Wilhelm}
\affiliation{Department of Chemistry, University of Z\"urich, Winterthurerstrasse 190, CH-8057 Z\"urich, Switzerland}
\author{J\"urg Hutter}
\affiliation{Department of Chemistry, University of Z\"urich, Winterthurerstrasse 190, CH-8057 Z\"urich, Switzerland}

\date{\today}

\begin{abstract}
An integral scheme for the efficient evaluation of two-center integrals over contracted solid harmonic Gaussian functions is presented. Integral expressions 
are derived for local operators that depend on the position vector of one of the two Gaussian centers. These expressions are then used to derive the 
formula for three-index overlap integrals where two of the three Gaussians are located at the same center. The efficient evaluation of the latter is essential for 
local resolution-of-the-identity techniques that employ an overlap metric. We compare the performance of our integral scheme to the widely used Cartesian 
Gaussian-based method of Obara and Saika (OS). Non-local interaction potentials such as standard Coulomb, modified Coulomb and Gaussian-type operators, that 
occur in range-separated hybrid functionals, are also included in the performance tests. The speed-up with respect to the OS scheme is up to three orders of 
magnitude for both, integrals and their derivatives. In particular, our method is increasingly efficient for large angular momenta and highly contracted basis 
sets. 
\end{abstract}

\pacs{}

\maketitle 

\section{Introduction}
The rapid analytic evaluation of two-center Gaussian integrals is important for many molecular simulation methods. For example, 
Gaussian functions are widely used as orbital basis in quantum mechanical (QM) calculations and are implemented in many electronic-structure 
codes.\cite{turbomole,Hutter2014,dalton,molpro,gamess,g09} Gaussians are further used at lower level of theory to model charge distributions in  molecular 
mechanics\cite{Piquemal2006,Cisneros2006,Elking2007,Gresh2007,Elking2010,Cisneros2012,Simmonett2014,Chaudret2014,Giese2015} (MM), 
semi-empirical\cite{Koskinen2009,Bernstein2002,Giese2005} and hybrid QM/MM methods.\cite{Giese2007,Golze2013,Giese2016} The Gaussian-based treatment of the 
electrostatic interactions requires the evaluation of two-center Coulomb integrals. \par
The efficient evaluation of two-center integrals is also important at the Kohn-Sham density functional theory (KS-DFT) level, in particular for hybrid density 
functionals. In order to speed-up the evaluation of the Hartree-Fock exchange term, the exact evaluation of the four-center integrals can be replaced by 
resolution-of-the-identity (RI) approximations.\cite{Sodt2008,Manzer2015,Ihrig2015,Levchenko2015} Especially, when an overlap metric is employed, the efficient 
evaluation of two-center integrals is required. The interaction potential can take different functional forms dependent on the hybrid 
functionals.\cite{Guidon2008} The most popular potential is the standard Coulomb operator employed in well-established functionals such as 
PBE0\cite{Perdew1996,Ernzerhof1997,Ernzerhof1999} and B3LYP.\cite{Becke1993,Lee1988,Vosko1980} A short-range Coulomb potential is, e.g., employed for the HSE06 
functional,\cite{Heyd2003,Heyd2006,Krukau2006}
whereas 
a combination of long-range Coulomb and Gaussian-type potential is used for the MCY3 functional.\cite{Cohen2007}\par
Gaussian overlap integrals, in the following denoted by $(ab)$, are computed in semi-empirical methods\cite{Koskinen2009} and QM approaches such as 
Hartree-Fock and KS-DFT. The efficient computation of $(ab)$ is not of major importance for QM methods since their contribution to the total computational cost 
is negligible. However, the efficient evaluation of the three-index overlap integrals $(ab\tilde{a})$, where two functions are located at the same center, is 
essential for local RI approaches that use an overlap metric.\cite{Baerends1973,FonsecaGuerra1998,Velde2001} Employing local RI in KS-DFT, the atomic pair 
densities are approximated by an expansion in atom-centered auxiliary functions. In order to solve the RI equations, it is necessary to calculate $(ab\tilde{a})$ for 
each pair where $a,b$ refers to orbital functions at atoms A and B and $\tilde{a}$ to the auxiliary function at A. The evaluation of $(ab\tilde{a})$ is 
computationally expensive because the auxiliary basis set is 3-5 times larger than the orbital basis set. A rapid evaluation of $(ab\tilde{a})$ is important to 
ensure that the computational overhead of the integral calculation is not larger than the speed-up gained by the RI .\par
Two-center integrals with the local operator $r_a^{2n}$ ($n\in\mathbb{N}$), where $r_a$ depends on the center of one of the Gaussian functions, are required 
for special projection and expansion techniques. For example, these integrals are used for projection of the primary orbital basis on smaller, adaptive basis 
sets.\cite{Schuett2016}\par
Numerous schemes for the evaluation of Gaussian integrals have been 
proposed based on Cartesian Gaussian,\cite{Dupuis1976,Obara1986,HeadGordon1988,Lindh1991,Bracken1997,Gill2000,Ahlrichs2006} Hermite Gaussian 
\cite{McMurchie1978,Helgaker1992,Klopper1992,Reine2007} and  solid or spherical harmonic Gaussian 
functions.\cite{Dunlap1990,Dunlap2001,Dunlap2003,Dunlap2013,Reine2007,Giese2008,Kuang1997,Kuang1997a} For a review of Gaussian integral schemes see  
Ref.~\onlinecite{Reine2012}. A very popular approach is the Obara-Saika (OS) 
scheme,\cite{Obara1986} which employs a recursive formalism over primitive Cartesian Gaussian functions. However, electronic-structure codes utilize spherical 
harmonic Gaussians (SpHGs) since the number of SpHGs is equal or smaller than the number of Cartesian Gaussians, i.e. for fixed angular momentum $l$, $(2l+1)$ 
SpHGs compare to $(l+1)(l+2)/2$ Cartesian Gaussians. Furthermore, Gaussian basis sets are often constituted of contracted functions. Thus, the primitive 
Cartesian integrals obtained from the OS recursion are subsequently contracted and transformed to SpHGs.\par
In this work, we further develop an alternative integral scheme\cite{Dunlap1990,Dunlap2001,Dunlap2003,Giese2008} that employs contracted solid harmonic 
Gaussians (SHGs). The latter are closely related to SpHG functions  and differ solely by a constant factor. The SHG integral scheme is based on the application of the spherical tensor gradient operator (STGO).\cite{Weniger1985,Weniger2005} The expressions resulting from Hobson's theorem of 
differentiation\cite{Hobson1892} contain an angular momentum term that is independent on the exponents and contraction coefficients. This term is obtained by relatively simple recursions. It can be pre-computed and re-used multiple times for all functions in the basis set with the same $l$ and $m$ quantum number. The integral and derivative evaluation requires the contraction of a set of auxiliary integrals over $s$ functions and their scalar derivatives. The same contracted quantity is re-used several times for the evaluation of functions with the same set of exponents and contraction coefficients, but different angular dependency $m$. Unlike for Cartesian functions, subsequent transformation and contraction steps are not required.\par
This work is based on Refs.~\onlinecite{Giese2008,Giese2016_book}, where the two-index integral expressions for the overlap operator and general non-local 
operators are given. We extend the SHG scheme to the local operator $r_a^{2n}$ and derive formulas for the integrals $(a|r_a^{2n}|b)$. The latter are 
fundamental for the subsequent derivation of the three-index overlap integral $(ab\tilde{a})$. The performance of the SHG method is compared to the OS scheme. 
We also include integrals with different non-local operators such as standard Coulomb, modified Coulomb or Gaussian-type operators in our comparison. \par
In the next section, the expressions for the integrals and their Cartesian derivatives are given followed by details on the implementation of the integral 
schemes. The performance of the SHG scheme is then discussed in terms of number of operations and empirical timings. The derivations of the expressions for $(a|r_a^{2n}|b)$ are given in Appendix~\ref{app:proof_product} and \ref{app:proof_sra2m}.
%
\section{Integral and derivative evaluation}
After introducing the relevant definitions and notations, we summarize the work of Giese and York\cite{Giese2008} in 
Section~\ref{sec:integrals_r12}. The integral expressions of $(a|r_a^{2n}|b)$ and $(ab\tilde{a})$ are then derived in Sections~\ref{sec:integrals_ra2n} and 
\ref{sec:integrals_aba}, respectively. Subsequently, the formulas for the Cartesian derivatives are given (Section~\ref{sec:cartesian_derivatives}) as well as 
the details on the computation of the angular-dependent term in the SHG integral expressions (Section~\ref{sec:real_translation_matrix}).
\subsection{Definitions and notations}
\label{sec:shg_definitions_and_notations}
The notations used herein correspond to Refs.~\onlinecite{Watson2004,Helgaker_book,Giese2008} unless otherwise indicated. 
An unnormalized, primitive SHG function is defined as
\begin{equation}
 \chi_{l,m}(\alpha,\mathbf{r}) = C_{l,m}(\mathbf{r})\exp{(-\alpha r^2)}
 \label{eq:SHG_primitive}
\end{equation}
where the complex solid harmonics $C_{l,m}(\mathbf{r})$,
\begin{equation}
 C_{l,m}(\mathbf{r}) = \sqrt{\frac{4\pi}{2l+1}}r^lY_{l,m}(\theta,\phi),
 \label{eq:solid_harominc_C}
\end{equation}
are obtained by rescaling the spherical harmonics $Y_{l,m}(\theta,\phi)$.  Contracted SpHG functions 
$\varphi_{l,m}(\mathbf{r})$ are constructed as linear combination of the primitive SHG functions
\begin{equation}
 \varphi_{l,m}(\mathbf{r}) = N_l\sum_{\alpha\in A}c_\alpha\chi_{l,m}(\alpha,\mathbf{r}),
 \label{eq:contracted_phi}
\end{equation}
where  $\{c_{\alpha}\}$ are the contraction coefficients for the set of exponents $A=\{\alpha\}$ and $N_l$ is the normalization constant given 
by\cite{Schlegel1995}
\begin{equation}
 N_{l} =K_l\left[\sum_{\alpha \in A}\sum_{\hat{\alpha}\in A} \frac{\pi^{1/2}(2l+2)!c_{\alpha}c_{\hat{\alpha}}}{2^{2l+3}(l+1)!(\alpha + 
\hat{\alpha})^{l+3/2}}\right]^{-1/2}.
\end{equation}
The factor
\begin{equation}
  K_l = \sqrt{\frac{2l+1}{4\pi}}
  \label{eq:Kl}
\end{equation}
is included in the normalization constant to convert from SHG to SpHG functions.\par
In the following, the absolute value of the $m$ quantum number is denoted by
\begin{equation}
 \mu = |m|.
 \label{eq:mu}
\end{equation}
Furthermore, we use the notations,
\begin{equation}
 \mathbf{r}_a=\mathbf{r}-\mathbf{R}_a,\quad
\mathbf{r}_b=\mathbf{r}-\mathbf{R}_b,\quad
R_{ab}^2 = |\mathbf{R}_a-\mathbf{R}_b|^2,
\end{equation}
where $\mathbf{R}_a$ references the position of the Gaussian center A and $\mathbf{R}_b$ the position of center B. The scalar derivative of $X(r^2)$ with 
respect to $r^2$ is denoted by
\begin{equation}
 X^{(k)}(r^2) =\left(\frac{\partial}{\partial r^2}\right)^{k}X(r^2). 
 \label{eq:deriv}
\end{equation}
%
%
%
\subsection{Integrals \boldmath{$(a|\mathcal{O}|b)$}}
\label{sec:integrals_r12}
In this section, the expression to compute the two-center integral $(a|\mathcal{O}|b)$ is given which is defined as
\begin{equation}
 \begin{split}
 (a|\mathcal{O}|b) = 
\iint&\varphi_{l_a,m_a}(\mathbf{r}_1-\mathbf{R}_a)\mathcal{O}(\mathbf{r}_1-\mathbf{r}_2)\\&\times\varphi_{l_b,m_b}(\mathbf{r}_2-\mathbf{R}_b) 
d\mathbf{r}_1d\mathbf{r}_2.
 \end{split}
\label{eq:contractedr12int}
\end{equation}
$\varphi_{l_a,m_a}(\mathbf{r}_a)$ and $\varphi_{l_b,m_b}(\mathbf{r}_b)$ are contracted SpHG functions as defined in Equation~\eqref{eq:contracted_phi}, which are 
centered at  $\mathbf{R}_a$ and $\mathbf{R}_b$, respectively. $\mathcal{O}(\mathbf{r})$ is an operator that is explicitly independent on the position vectors 
$\mathbf{R}_a$ or $\mathbf{R}_b$. Such operators are, e.g., the non-local Coulomb operator $\mathcal{O}(\mathbf{r})=1/r$ or the local overlap  
$\mathcal{O}(\mathbf{r})=\delta(\mathbf{r})$. \par
The derivation for an efficient expression to compute $(a|\mathcal{O}|b)$ follows Ref.~\onlinecite{Giese2008}. It is based on Hobson's theorem\cite{Hobson1892} 
of differentiation, which states that
\begin{equation}
 C_{l,m}(\nabla)f(r^2) = 2^lC_{l,m}(\mathbf{r})\left(\frac{\partial}{\partial r^2}\right)^l f(r^2),
 \label{eq:Hobson}
\end{equation}
where the differential operator $C_{l,m}(\nabla)$ is called STGO. The differential operator is obtained by replacing $\mathbf{r}$ in the solid harmonic 
$C_{l,m}(\mathbf{r})$ by $\nabla=(\partial /\partial x, \partial /\partial y, \partial /\partial z)$. The derivation of the $(a|\mathcal{O}|b)$ integrals 
starts by noting that $\exp(-\alpha r^2)$ is an eigenfunction of $(\displaystyle{\partial /\partial r^2})^l$ with the eigenvalue $(-\alpha)^l$. Using Equation~\eqref{eq:Hobson} and the definition of primitive SHGs from Equation~\eqref{eq:SHG_primitive}, the primitive SHG at center $\mathbf{R}_a$ can be rewritten as
\begin{equation}
  \chi_{l,m}(\alpha,\mathbf{r}_a) = \frac{C_{l,m}(\nabla_a)\exp{\left(-\alpha r_a^2\right)}}{(2\alpha)^l},
  \label{eq:chi_STGO}
\end{equation} 
where $C_{l,m}(\nabla_a)$ acts on $\mathbf{R}_a$. Inserting Equation~\eqref{eq:SHG_primitive} for $s$ functions,  
$\chi_{0,0}(\alpha,\mathbf{r})=\exp{\left(-\alpha r^2\right)}$, yields
\begin{equation}
  \chi_{l,m}(\alpha,\mathbf{r}_a) = \frac{C_{l,m}(\nabla_a)\chi_{0,0}(\alpha,\mathbf{r}_a)}{(2\alpha)^l}.
  \label{eq:primitive_shg_STGO}
\end{equation}
Inserting the STGO formulation of $\chi_{l,m}$ from Equation~\eqref{eq:primitive_shg_STGO} in Equation~\eqref{eq:contractedr12int} gives 
\begin{equation}
  (a|\mathcal{O}|b) = C_{l_a,m_a}(\nabla_a) C_{l_b,m_b}(\nabla_b) O_{l_a,l_b}(R_{ab}^2).
 \label{eq:contracted_r12int_ab_STGO}
\end{equation}
The contracted integral over $s$ functions is denoted by
\begin{equation}
 O_{l_a,l_b}(R_{ab}^2)  =  N_{l_a} N_{l_b}
\sum_{\alpha \in  A}\sum_{\beta \in  B}
\frac{c_{\alpha}c_{\beta} }{(2\alpha)^{l_a}(2\beta)^{l_b}}
(0_a|\mathcal{O}|0_b),
 \label{eq:O_rab2}
\end{equation}
where $c_{\alpha}$ and $c_{\beta}$ are the expansion coefficients of $\varphi_{l_a,m_a}(\mathbf{r}_a)$ and $\varphi_{l_b,m_b}(\mathbf{r}_b)$, respectively, with corresponding exponents~$\alpha$ and $\beta$. The integral $(0_a|\mathcal{O}|0_b)$ over primitive 
$s$ functions is given by
\begin{equation}
 \begin{split}
 (0_a|\mathcal{O}|0_b) 
=\iint&\chi_{0,0}(\alpha,\mathbf{r}_1-\mathbf{R}_a)\mathcal{O}(\mathbf{r}_1-\mathbf{r}_2)\\&\times\chi_{0,0}(\beta,\mathbf{r}_2-\mathbf{R}_b)d\mathbf{r}
_1d\mathbf{r}_2.
 \end{split}
 \label{eq:ss_r12int}
\end{equation}
The analytic expressions of $(0_a|\mathcal{O}|0_b)$ for the overlap and different non-local operators are given in Table~S1, see Supplementary Information (SI).
 Application of the product and differentiation rules for the 
STGO\cite{Hu2000,Dunlap1990,Dunlap2001,Dunlap2003} finally yields 
\begin{equation}
 \begin{split}
  (a|\mathcal{O}|b) = &(-1)^{l_b} A_{l_a,m_a}A_{l_b,m_b} 
\\[0.5em]&\times\sum_{j=0}^{\mathrm{min}(l_a,l_b)}2^{l_a+l_b-j}O^{(l_a+l_b-j)}_{l_a,l_b}(R_{ab}^2)\\&\times(2j-1)!!\sum_{\kappa=0}^jB_{j,\kappa}Q_{l_a,\mu_a,l_b
,\mu_b,j,\kappa}^{c/s,c/s}
(\mathbf{R}_{ab}),
 \end{split}
 \label{eq:final_r12int_ab}
\end{equation}
where the prefactors are 
\begin{align}
  A_{l,m} &= (-1)^{m} \sqrt{(2-\delta_{m,0}) (l+m)!(l-m)!}\label{eq:A}
\\[0.5em]
   B_{j,\kappa} &= \frac{1}{(2-\delta_{\kappa,0})(j+\kappa)!(j-\kappa)!}\,.\label{eq:B}
 \end{align}
and $n!!$ denotes the double factorial. The  superscript on $O_{l_a,l_b}(R_{ab}^2)$ in Equation~\eqref{eq:final_r12int_ab} denotes the scalar derivative with 
respect to $R_{ab}^2$, see Equation~\eqref{eq:deriv} and~\eqref{eq:O_rab2},
\begin{equation}
 \begin{split}
 O^{(k)}_{l_a,l_b}(R_{ab}^2) = &  N_{l_a} N_{l_b}
\sum_{\alpha \in  A}\sum_{\beta \in  B}
\frac{c_{\alpha}c_{\beta} }{(2\alpha)^{l_a}(2\beta)^{l_b}}
\\&
\times\left(\frac{\partial}{\partial R_{ab}^2}\right)^{k}(0_a|\mathcal{O}|0_b).
 \end{split}
 \label{eq:O_rab2_deriv}
\end{equation}
Since $s$ functions contain no angular dependency, $(0_a|\mathcal{O}|0_b)$ is a function of $\mathbf{R}_{ab}$ (or equivalently, $R_{ab}^2$), see Table~S1 (SI). 
Therefore, the derivative in Equation~\eqref{eq:O_rab2_deriv} is well-defined.
The integral $O_{l_a,l_b}(R_{ab}^2)$ can be interpreted as the monopole result of the expansion given in Equation~\eqref{eq:final_r12int_ab}.\par
The expression given in Equation~\eqref{eq:final_r12int_ab} depends further on $Q_{l_a,\mu_a,l_b,\mu_b,j,\kappa}^{c/s,c/s}(\mathbf{R}_{ab})$, where 
$\mu=|m|$. Positive $m$ values refer to a cosine ($c$) component and negative $m$ to a sine ($s$) component, i.e.
\begin{equation}
\begin{split}
 Q_{a,b,j,\kappa}^{cc}(\mathbf{R}_{ab}): & \hspace{1em}m_a,m_b\geq0\\
 Q_{a,b,j,\kappa}^{cs}(\mathbf{R}_{ab}): & \hspace{1em}m_a\ge0, m_b<0 \\
 Q_{a,b,j,\kappa}^{sc}(\mathbf{R}_{ab}): & \hspace{1em}m_a<0, m_b\ge0\\
 Q_{a,b,j,\kappa}^{ss}(\mathbf{R}_{ab}): & \hspace{1em}m_a,m_b<0\,.
\end{split}
\label{eq:Qref}
\end{equation}
Note that we used the abbreviation $a,b$ for the indices $(l_a,\mu_a,l_b,\mu_b)$ in Equation~\eqref{eq:Qref}. 
Details on the calculation of $Q_{a,b,j,\kappa}^{c/s,c/s}(\mathbf{R}_{ab})$ can be found in Section~\ref{sec:real_translation_matrix}.\par
%
\subsection{Integrals \boldmath{$(a|r_a^{2n}|b)$}}
\label{sec:integrals_ra2n}
The integrals $(a|r_a^{2n}|b)$,
\begin{equation}
\label{eq:ara2nb_def}
 (a|r_a^{2n}|b) = \int\varphi_{l_a,m_a}(\mathbf{r}_a)r_a^{2n}\varphi_{l_b,m_b}(\mathbf{r}_b) d\mathbf{r}\,,
\end{equation}
 $n\in \mathbb{N}$, are fundamental for the derivation of the overlap matrix elements $(ab\tilde{a})$ with two Gaussians at center $\mathbf{R}_a$, which are 
discussed in the next section. Since the operator $r_a^{2n}$ depends on the position $\mathbf{R}_a$,  Equations~\eqref{eq:O_rab2} and 
\eqref{eq:final_r12int_ab} cannot be  adapted by replacing $(0_a|\mathcal{O}|0_b)$ with $(0_a|r_a^{2n}|0_b)$.
Consequently, new expressions for computing~$(a|r_a^{2n}|b)$ are derived in this section.\par
Since $r_a^{2n}$ depends on $\mathbf{R}_a$, $C_{l,m}(\nabla_a)$ is acting on the product of $\chi_{l,m}(\alpha,\mathbf{r}_a)$ and $r_a^{2n}$, 
\begin{equation}
 \chi_{l,m}(\alpha,\mathbf{r}_a)r_a^{2n} = C_{l,m}(\mathbf{r}_a) \exp{\left(-\alpha r_a^2\right)} r_a^{2n}.
\end{equation}
The expression of this product in terms of the STGO $C_{l,m}(\nabla_a)$ is obtained using Hobson's theorem, 
\begin{equation}
 \begin{split}
   \chi_{l,m}(\alpha,\mathbf{r}_a)r_a^{2n} = \frac{C_{l,m}(\nabla_a) }{(2\alpha)^{l}}\sum_{j=0}^n&\binom{n}{j} 
\frac{(l+j-1)!}{(l-1)!\alpha^j}\\&\times\exp{\left(-\alpha r_a^2\right)} r_a^{2(n-j)}.
 \end{split}
\label{eq:generic_chira2n}
\end{equation}
The derivation of Equation~\eqref{eq:generic_chira2n}  is given in Appendix~\ref{app:proof_product}.\par
Inserting Equations~\eqref{eq:primitive_shg_STGO} and \eqref{eq:generic_chira2n} in Equation~\eqref{eq:ara2nb_def} yields
\begin{equation}
 (a|r_a^{2n}|b) = C_{l_a,m_a}(\nabla_a) C_{l_b,m_b}(\nabla_b) T_{l_a,l_b}(R_{ab}^2),
 \label{eq:ara2mb_STGO}
\end{equation}
where $T_{l_a,l_b}(R_{ab}^2)=T_{l_a,l_b}^{(0)}(R_{ab}^2)$ is again the monopole results for the integral given in Equation~\eqref{eq:ara2nb}. The derivation 
follows now the same procedure as for the integrals 
$(a|\mathcal{O}|b)$ and yields
\begin{equation}
 \begin{split}
  (a|r_a^{2n}|b) = & (-1)^{l_b} A_{l_a,m_a}A_{l_b,m_b}\\[0.5em]&\times 
\sum_{j=0}^{\mathrm{min}(l_a,l_b)}2^{l_a+l_b-j}T^{(l_a+l_b-j)}_{l_a,l_b}(R_{ab}^2)\\&\times(2j-1)!!\sum_{\kappa=0}^jB_{j,\kappa}Q_{l_a,\mu_a,l_b,\mu_b,j,\kappa}
^{c/s,c/s}(\mathbf{R
}_{ab}),
 \end{split}
 \label{eq:ara2nb}
\end{equation}
where the scalar derivative of $T_{l_a,l_b}(R_{ab}^2)$ with respect to $R_{ab}^2$ is 
\begin{equation}
\begin{split}
 T^{(k)}_{l_a,l_b}(R_{ab}^2) = & N_{l_a} N_{l_b}\sum_{\alpha \in  A}\sum_{\beta \in  B}\frac{c_{\alpha}c_{\beta} 
}{(2\alpha)^{l_a}(2\beta)^{l_b}}\\&\times\sum_{j=0}^n\binom{n}{j} 
\frac{(l_a+j-1)!}{(l_a-1)!\alpha^j}(0_a|r_a^{2(n-j)}|0_b)^{(k)}.
\end{split}
 \label{eq:T_rab2_deriv}
\end{equation}
The integral over primitive $s$ functions is
\begin{align}
 (0_a|r_a^{2m}|0_b) &= \int\chi_{0,0}(\alpha,\mathbf{r}_a)r_a^{2m}\chi_{0,0}(\beta,\mathbf{r}_b)d\mathbf{r}\label{eq:sra2ms_def}\\
                    &= \frac{\pi^{3/2}\exp(-\rho R_{ab}^2)}{2^mc^{m+3/2}}\sum_{j=0}^mI_j^{\alpha,\beta,m}(R_{ab}^2) \label{eq:sra2ms}
\end{align}
with $c=\alpha+\beta$ and $\rho =\alpha\beta/c$ and
\begin{equation}
 I_j^{\alpha,\beta,m}(R_{ab}^2) = 2^j\frac{(2m+1)!! \binom{m}{j}}{(2j+1)!!} \frac{\beta^{2j}}{c^j}R_{ab}^{2j}.
 \label{eq:I_jab}
\end{equation}%
The proof of Equation~\eqref{eq:sra2ms} is similarly elaborate as for Equation~\eqref{eq:generic_chira2n} and is given in Appendix~\ref{app:proof_sra2m}.  The 
derivatives of $(0_a|r_a^{2m}|0_b)$ are obtained by applying the Leibniz rule of differentiation to Equation~\eqref{eq:sra2ms}
\begin{equation}
 \begin{split}
   (0_a|&r_a^{2m}|0_b)^{(k)} \\=& 
\frac{\pi^{3/2}\exp(-\rho R_{ab}^2)}{2^mc^{m+3/2}}\sum_{i=0}^{\mathrm{min}(m,k)}\binom{k}{i}\left(-\rho \right)^{k-i}\\&\times\sum_{j=i}
^m\left(\frac{\partial}{\partial R_{ab}^2}\right)^{i}I_j^{\alpha,\beta,m}(R_{ab}^2).
 \end{split}
 \label{eq:devsra2ms}
\end{equation}
\subsection{Overlap integrals \boldmath{$(ab\tilde{a})$}} 
\label{sec:integrals_aba}
The three-index overlap integral $(ab\tilde{a})$ includes two functions at center $\mathbf{R}_a$ and is defined by 
\begin{equation}
 (ab\tilde{a})
 =\int\varphi_{l_a,m_a}(\mathbf{r}_a)\varphi_{\tilde{l}_a,\tilde{m}_a}(\mathbf{r}_a)\varphi_{l_b,m_b}(\mathbf{r}_b) d\mathbf{r}.
\label{eq:aba_integrals}
\end{equation}
In traditional Cartesian Gaussian-based schemes, the product of the two Cartesian functions at center $\mathbf{R}_a$ is obtained by adding 
exponents and angular momenta of both Gaussians, respectively. The result is a new Cartesian Gaussian at $\mathbf{R}_a$. The integral evaluation  
proceeds then as for the two-index overlap integrals $(ab)$. In the SHG scheme on the other hand, the product of two SHG functions at the same center is 
obtained by a Clebsch-Gordan (CG) expansion of the spherical harmonics. In the following, the expression of this expansion in terms 
of the STGO is derived and used to obtain the integral formula.\par
Employing the definitions given in Equations~\eqref{eq:SHG_primitive} and \eqref{eq:solid_harominc_C}, 
the product of two primitive SHG functions at $\mathbf{R}_a$ can be written as
\begin{align}
  \nonumber
  \chi_{l,m}&(\alpha, \mathbf{r}_a) \chi_{\tilde{l},\tilde{m}}(\tilde{\alpha}, \mathbf{r}_a)\\&= 
  C_{l,m}(\mathbf{r}_a) \exp(-\alpha r_a^2)C_{\tilde{l},\tilde{m}}(\mathbf{r}_a) \exp(-\tilde{\alpha}r_a^2) \\[5pt]
  &=\lambda\exp\left(-\alpha^{\prime} r_a^2\right)  Y_{l,m}(\theta,\phi) Y_{\tilde{l},\tilde{m}}(\theta,\phi) r_a^{l+\tilde{l}} \label{eq:product_SHG}
\end{align}
where $\alpha^{\prime}=\alpha+\tilde{\alpha}$ and 
\begin{equation}
 \lambda=\frac{4\pi}{\sqrt{(2l+1)(2\tilde{l}+1)}}.
\end{equation}
The product of two spherical harmonics can be expanded in terms of spherical harmonics, 
\begin{equation}
  Y_{l,m}(\theta,\phi) Y_{\tilde{l},\tilde{m}}(\theta,\phi) = \sum_{L,M}G^{L,l,\tilde{l}}_{M,m,\tilde{m}} Y_{L,M}(\theta,\phi),
  \label{eq:CG_expansion}
\end{equation}
where $|l-\tilde{l}|\leq L \leq l+\tilde{l}$. $G^{L,l,\tilde{l}}_{M,m,\tilde{m}}$ are the Gaunt coefficients\cite{Gaunt1929} which are proportional to a 
product of CG coefficients.\cite{Xu1996} The expansion given in Equation~\eqref{eq:CG_expansion} is valid since 
the spherical harmonics form a complete set of orthonormal functions. A similar expansion for solid harmonics $C_{l,m}(\mathbf{r})$ is not possible because the 
latter are no basis of ${L}^2(\mathbb{R}^3)$. Inserting the CG expansion into Equation~\eqref{eq:product_SHG} [1], re-introducing solid harmonics [2] as 
defined in Equation~\eqref{eq:solid_harominc_C} and employing the definition given in Equation~\eqref{eq:SHG_primitive} [3] yields
\begin{align}
\chi_{l,m}&(\alpha, \mathbf{r}_a) \chi_{\tilde{l},\tilde{m}}(\tilde{\alpha}, \mathbf{r}_a)\nonumber\\&\stackrel{[1]}{=}   \lambda\exp\left(-\alpha^{\prime} 
r_a^2\right) 
\sum_{L,M}G^{L,l,\tilde{l}}_{M,m,\tilde{m}} Y_{L,M}(\theta,\phi) r_a^{l+\tilde{l}}\\[5pt]
 &  \stackrel{[2]}{=} \lambda \sum_{L,M}G^{L,l,\tilde{l}}_{M,m,\tilde{m}} K_L C_{L,M}(\mathbf{r}_a)\nonumber\\&\qquad\qquad\times 
\exp\left(-\alpha^{\prime} 
r_a^2\right)r_a^{l+\tilde{l}-L}\\[5pt]
   &\stackrel{[3]}{=}  \lambda\sum_{L,M}G^{L,l,\tilde{l}}_{M,m,\tilde{m}} K_L\chi_{L,M}(\alpha^{\prime},\mathbf{r}_a)r_a^{l+\tilde{l}-L},
\end{align}
where $K_L$ is defined in Equation~\eqref{eq:Kl}.
The $L$ quantum numbers of the non-vanishing contributions in the CG expansion proceed in steps of two starting from $L_{\mathrm{min}}=|l-\tilde{l}|$ to
$L_{\mathrm{max}}=l+\tilde{l}$. Thus, $l+\tilde{l}-L$ is even and we can express 
$\chi_{L,M}(\alpha^{\prime},\mathbf{r}_a)r_a^{l+\tilde{l}-L}$ in terms of the STGO using Equation~\eqref{eq:generic_chira2n},
\begin{equation}
 \begin{split}
 \chi_{l,m}&(\alpha, \mathbf{r}_a) \chi_{\tilde{l},\tilde{m}}(\tilde{\alpha}, \mathbf{r}_a)\\={}& \lambda 
\sum_{L,M}G^{L,l,\tilde{l}}_{M,m,\tilde{m}}K_L \frac{C_{L,M}(\nabla_a) }{(2\alpha^{\prime})^{L}}\\&\times\sum_{j=0}^p\binom{p}{j} 
\frac{(L+j-1)!}{(L-1)!(\alpha^{\prime})^j}\exp\left(-\alpha^{\prime} r_a^2\right) r_a^{2(p-j)}
 \end{split}
\label{eq:product_SHG_STGO}
\end{equation}
with $p=(l+\tilde{l}-L)/2$.
\par
The derivation of the integral expression for $(ab\tilde{a})$ is analogous to the $(a|\mathcal{O}|b)$ integrals. 
Inserting the STGO formulations given in Equation~\eqref{eq:chi_STGO} and Equation~\eqref{eq:product_SHG_STGO} into 
Equation~\eqref{eq:aba_integrals} yields 
\begin{equation}
 \begin{split}
 (ab\tilde{a})
=\sum_{L_a,M_a}&G^{L_a,l_a,\tilde{l}_a}_{M_a,m_a,\tilde{m}_a} C_{L_a,M_a}(\nabla_a)\\&\times C_{l_b,m_b}(\nabla_b) P_{L_a,l_a,\tilde{l}_{a},l_b}(R_{ab}^2)
 \end{split}
\end{equation}
with 
\begin{equation}
  \begin{split}
P_{L_a,l_a,\tilde{l}_{a},l_b}(&R_{ab}^2)\\ ={}&  \lambda K_{L_a} N_{l_a} N_{l_b}N_{\tilde{l}_a}\sum_{\alpha \in  A}\sum_{\beta \in 
 B}\sum_{\tilde{\alpha} \in 
 \tilde{A}}\frac{c_{\alpha}c_{\beta}c_{\tilde{\alpha}}
}{(2\alpha^{\prime})^{L_a}(2\beta)^{l_b}}\\[5pt]&\times\sum_{j=0}^p\binom{p}{j} 
\frac{(L_a+j-1)!}{(L_a-1)!(\alpha^{\prime})^j}(0_{a^{\prime}}|r_a^{2(p-j)}|0_b),
 \label{eq:Pa_aba}
 \end{split}
\end{equation}
where the dependence of $P_{L_a,l_a,\tilde{l}_{a},l_b}$ on $R_{ab}^2$ originates from the integrals over primitive $s$ functions,
\begin{equation}
 (0_{a^{\prime}}|r_a^{2m}|0_b) = \int\chi_{0,0}(\alpha^{\prime},\mathbf{r}_a)r_a^{2m}\chi_{0,0}(\beta,\mathbf{r}_b)d\mathbf{r},
\end{equation}
see Equation~\eqref{eq:sra2ms}. The derivation proceeds as for the $(a|\mathcal{O}|b)$ and $(a|r_a^{2n}|b)$ integrals yielding the final formula,
\begin{equation}
 \begin{split}
(ab\tilde{a})={}&(-1)^{l_b}A_{l_b,m_b}\sum_{L_a,M_a}G^{L_a,l_a,\tilde{l}_a}_{M_a,m_a,\tilde{m}_a}A_{L_a,M_a}\\&\times\sum_{j=0}^{\mathrm{min}(L_a,l_b)}2^{L_a+l_b-j}P^{
(L_a+l_b-j)}_{L_a,l_a,\tilde{l}_{a},l_b}(
R_{ab}^2)\\&\times(2j-1)!!\sum_{\kappa=0}^jB_{j,\kappa}Q_{L_a,|M_a|,l_b,\mu_b,j,\kappa}^{c/s,c/s}(\mathbf{R}_{ab}),
  \end{split}
\end{equation}
where the coefficients $A_{l,m}$ and $B_{j,\kappa}$ are given in Equations~\eqref{eq:A} and~\eqref{eq:B}. See Section~\ref{sec:real_translation_matrix} for the 
expressions of $Q_{L_a,|M_a|,l_b,\mu_b,j,\kappa}^{c/s,c/s}$. The superscript $(L_a{-}l_b{-}j)$ on $P_{L_a,l_a,\tilde{l}_{a},l_b}$ indicates the derivative as 
defined in Equation~\eqref{eq:deriv}.\par
The integral $(ab\tilde{a})$ can be considered as a sum of $(a|r_a^{2n}|b)$  integrals, introducing some modifications due to normalization and contraction. 
\subsection{Cartesian Derivatives}
\label{sec:cartesian_derivatives}
Cartesian derivatives are required for evaluating forces and stress in molecular simulations. The Cartesian derivatives of the integrals $(a|\mathcal{O}|b)$, $(a|r_a^{2n}|b)$ and $(ab\tilde{a})$ are obtained by applying the product 
rule to the  $R_{ab}^2$-dependent contracted quantities [Equations~\eqref{eq:O_rab2_deriv},\eqref{eq:T_rab2_deriv} and \eqref{eq:Pa_aba}] and the matrix elements of $Q_{l_a,\mu_a,l_b,\mu_b,j,\kappa}^{c/s,c/s}(\mathbf{R}_{ab})$ . The derivative of 
$(a|\mathcal{O}|b)$ [Equation~\eqref{eq:final_r12int_ab}] with respect to $\mathbf{R}_a$ is
\begin{equation}
\begin{split}
 \frac{\partial}{\partial R_{a,i}}&(a|\mathcal{O}|b) \\
  ={}& 2(R_{a,i}-R_{b,i})\\&\times\sum_{j=0}^{\mathrm{min}(l_a,l_b)}O^{(l_a+l_b-j+1)}_{l_a,l_b}(R_{ab}^2)
\widetilde{Q}_{l_a,\mu_a,l_b,\mu_b,j}^{c/s,c/s}(\mathbf{R}_{ab})\\[5pt]
   &+\sum_{j=0}^{\mathrm{min}(l_a,l_b)}O^{(l_a+l_b-j)}_{l_a,l_b}(R_{ab}^2) \frac{\partial 
\widetilde{Q}_{l_a,\mu_a,l_b,\mu_b,j}^{c/s,c/s}(\mathbf{R}_{ab})}{\partial R_{a,i}}
 \label{eq:dev_ab}
 \end{split}
\end{equation}
with $i=x,y,z$ and where we have introduced the notation
\begin{equation}
 \begin{split}
 \widetilde{Q}_{l_a,\mu_a,l_b,\mu_b,j}^{c/s,c/s}&(\mathbf{R}_{ab})\\ ={}& (-1)^{l_b}A_{l_a,\mu_a}A_{l_b,\mu_b} 2^{l_a+l_b-j} (2j-1)!!\\&\times 
\sum_{\kappa=0}^j 
B_{j,\kappa} 
Q_{l_a,\mu_a,l_b,\mu_b,j,\kappa}^{c/s,c/s}(\mathbf{R}_{ab}).
 \end{split}
 \label{eq:Q_tilde}
\end{equation}
The derivatives of $(a|r_a^{2n}|b)$ are obtained from Equation~\eqref{eq:dev_ab} by substituting $O_{l_a,l_b}(R_{ab}^2)$ by $T_{l_a,l_b}(R_{ab}^2)$. For 
$(ab\tilde{a})$, we 
replace $O_{l_a,l_b}(R_{ab}^2)$ by $P_{L_a,l_a,\tilde{l}_{a},l_b}(R_{ab}^2)$ considering additionally the CG expansion. The derivatives of $\widetilde{Q}_{l_a,\mu_a,l_b,\mu_b,j}^{c/s,c/s}$ are constructed from $(l-1)$ terms, which is explained in detail in Section~\ref{sec:real_translation_matrix}.

%
\subsection{Computation of \boldmath${Q}^{c/s,c/s}_{a,b,j,\kappa}$ and its derivatives}
\label{sec:real_translation_matrix}
${Q}^{c/s,c/s}_{a,b,j,\kappa}$, introduced in Equation~\eqref{eq:Qref}, are elements of the 2$\times$2 matrix $\mathbf{Q}_{a,b,j,\kappa}$, which is computed from the 
real translation matrix 
$\mathbf{W}_{l,m,j,\kappa}$\cite{Watson2004,Helgaker_book}
\begin{equation}
 \begin{split}
  \mathbf{Q}_{a,b,j,\kappa}(\mathbf{R}_{ab}) &= \begin{pmatrix}
      Q_{a,b,j,\kappa}^{cc} &  Q_{a,b,j,\kappa}^{cs} \\
      Q_{a,b,j,\kappa}^{sc} &  Q_{a,b,j,\kappa}^{ss}
  \end{pmatrix}_{(\mathbf{R}_{ab})}\\[5pt]
    &=\mathbf{W}_{l_a,\mu_a,j,\kappa}(-\mathbf{R}_{ab}) \mathbf{W}_{l_b,\mu_b,j,\kappa}^{\mathrm{T}}(-\mathbf{R}_{ab}).
 \end{split}
  \label{eq:Q_matrix}
\end{equation}
Note that we abbreviate the indices $(l_a,\mu_a,l_b,\mu_b)$ with $(a,b)$ in $\mathbf{Q}_{a,b,j,\kappa}$ as in Equation~\eqref{eq:Qref}.
The real translation matrix is a 2$\times$2 matrix with the elements
\begin{equation}
  \mathbf{W}_{l,m,j,\kappa}(\mathbf{R}_{ab}) = 
  \begin{pmatrix}
    W^{cc}_{l,m,j,\kappa}  &  W^{cs}_{l,m,j,\kappa} \\
    W^{sc}_{l,m,j,\kappa}  &  W^{ss}_{l,m,j,\kappa}
  \end{pmatrix}_{(\mathbf{R}_{ab})}.
\label{eq:W_matrix}
\end{equation}
The expressions for $\mathbf{W}_{l,m,j,\kappa}$ are given by\cite{Helgaker_book},
\begin{align}
   W^{cc}_{l,m,j,\kappa}(\mathbf{R}_{ab}) ={}& \left(\frac{1}{2}\right)^{\delta_{\kappa0}}\left[R^c_{l-j,m-\kappa}(-\mathbf{R}_{ab})\right.\nonumber\\&  \left.+
(-1)^{\kappa}R^c_{l-j,m+\kappa}(-\mathbf{R}_{ab})\right]\\[5pt]
   W^{cs}_{l,m,j,\kappa}(\mathbf{R}_{ab}) ={}& - R^s_{l-j,m-\kappa}(-\mathbf{R}_{ab}) \nonumber\\& + (-1)^{\kappa}R^s_{l-j,m+\kappa}(-\mathbf{R}_{ab})\\[5pt]
   W^{sc}_{l,m,j,\kappa}(\mathbf{R}_{ab}) ={}& \left(\frac{1}{2}\right)^{\delta_{\kappa0}}\left[R^s_{l-j,m-\kappa}(-\mathbf{R}_{ab}) \right.\nonumber\\&  \left. 
+ 
(-1)^{\kappa}R^s_{l-j,m+\kappa}(-\mathbf{R}_{ab})\right]\\[5pt]
   W^{ss}_{l,m,j,\kappa}(\mathbf{R}_{ab}) ={}& R^c_{l-j,m-\kappa}(-\mathbf{R}_{ab})\nonumber\\&  - (-1)^{\kappa}R^c_{l-j,m+\kappa}(-\mathbf{R}_{ab}
).
\end{align}
Here, we introduced the regular scaled solid harmonics $R_{l,m}(\mathbf{r})$ which are defined as
\begin{equation}
 R_{l,m}(\mathbf{r}) = \frac{1}{\sqrt{(l-m)!(l+m)!}}\,C_{l,m}(\mathbf{r}),
\end{equation}
where the definition of the complex solid harmonics $C_{l,m}(\mathbf{r})$ from Equation~\eqref{eq:solid_harominc_C} has been employed.
The regular scaled solid harmonics are also complex and can be decomposed into a real (cosine) and an imaginary (sine) part as
\begin{equation}
 R_{l,m}(\mathbf{r}) = R_{l,m}^c(\mathbf{r}) + iR_{l,m}^s(\mathbf{r}).
 \label{eq:scaled_SHG}
\end{equation}
The cosine and sine parts can be constructed by the following recursion relations\cite{Watson2004,Helgaker_book}
\begin{align}
 &R^c_{00} = 1, \quad R^s_{00} =0 \label{eq:regular_scaled_harmonics_recursion_start}\\[3pt] 
 &R^c_{l+1,l+1} = -\frac{xR^c_{ll}-yR^s_{ll}}{2l+2}\\[3pt]
 &R^s_{l+1,l+1} = -\frac{yR^c_{ll}+xR^s_{ll}}{2l+2}\\[3pt]
 &R^{c/s}_{l+1,m}=\frac{(2l+1)zR^{c/s}_{l,m}-r^2R^{c/s}_{l-1,m}}{(l+m+1)(l-m+1)},\quad  0\leq m < l
 \label{eq:regular_scaled_harmonics_recursion}
\end{align}
where $\mathbf{r}=(x,y,z)$.  The usage of $c/s$ in the last recurrence formula indicates that the relation is used for both, $R^c_{l,m}(\mathbf{r})$ and 
$R^s_{l,m}(\mathbf{r})$. The recursions are only valid for positive $m$. However, the regular scaled solid harmonics are also defined for negative indices and 
satisfy the following symmetry relations
\begin{equation}
  R^c_{l,-m} = (-1)^m R^c_{l,m}, \quad
  R^s_{l,-m} = -(-1)^m R^s_{l,m}.
\end{equation}
Note that these symmetry relations have to be employed for the evaluation of $R^{c/s}_{l-j,\mu-\kappa}$ since $\mu-\kappa$ can be also negative. Furthermore, 
only elements with $l-j\geq|\mu\pm\kappa|$ give non-zero contributions.\par
The elements of the transformation matrix $\mathbf{W}_{l,m,j,\kappa}$ are also defined for negative $m$. The matrix elements of 
$\mathbf{W}_{l,m,j,\kappa}$ obey the same symmetry relations with respect to sign changes of $m$,
\begin{equation}
  \begin{split}
   W^{cc/cs}_{l,\overline{m},j,\kappa} &= (-1)^m  W^{cc/cs}_{l,m,j,\kappa}\\[3pt]
   W^{sc/ss}_{l,\overline{m},j,\kappa} &=-(-1)^m W^{sc/ss}_{l,m,j,\kappa} 
   \label{eq:symmetry_relations_W}
 \end{split}
\end{equation}
where we have used the notation $\overline{m} = -m$. These symmetry relations are used for the derivatives of $Q_{l_a,\mu_a,l_b,\mu_bj,\kappa}^{c/s,c/s}$.\par
The derivatives of $Q_{l_a,\mu_a,l_b,\mu_b,j,\kappa}^{c/s,c/s}$ and equivalently of $\widetilde{Q}_{l_a,\mu_a,l_b,\mu_b,j}^{c/s,c/s}$ from 
Equation~\eqref{eq:Q_tilde} are obtained by employing the differentiation rules\cite{PerezJorda1996} of the solid 
harmonics $C_{l,m}(\mathbf{r})$. The derivatives of $C_{l,m}(\mathbf{r})$ are a linear combination of $(l-1)$ solid harmonics. Therefore, the 
gradients of $\widetilde{Q}^{c/s,c/s}_{l_a,\mu_a,l_b,\mu_b,j}$ are also linear combinations of lower order terms,
\begin{equation}
 \begin{split}
  \frac{\partial \widetilde{Q}_{l_a,\mu_a,l_b,\mu_b,j}^{c/s,c/s}}{\partial R_{a,x}}
    = \quad{}&\frac{A_{l_a,\mu_a}}{A_{l_a-1,\mu_a+1}}\widetilde{Q}_{l_a-1,\mu_a+1,l_b,\mu_b,j}^{c/s,c/s}\\[5pt]
    - & \frac{A_{l_a,\mu_a}}{A_{l_a-1,\mu_a-1}}\widetilde{Q}_{l_a-1,\mu_a-1,l_b,\mu_b,j}^{c/s,c/s}\\[5pt]
    - & \frac{A_{l_b,\mu_b}}{A_{l_b-1,\mu_b+1}}\widetilde{Q}_{l_a,\mu_a,l_b-1,\mu_b+1,j}^{c/s,c/s}\\[5pt]
    + & \frac{A_{l_b,\mu_b}}{A_{l_b-1,\mu_b-1}}\widetilde{Q}_{l_a,\mu_a,l_b-1,\mu_b-1,j}^{c/s,c/s},
 \end{split}
  \label{eq:cartesian_dev_x}
\end{equation}
\begin{equation}
 \begin{split}
  \frac{\partial \widetilde{Q}_{l_a,\mu_a,l_b,\mu_b,j}^{c/s,c/s}}{\partial R_{a,y}}
    = \quad{}& (\pm 1)_{m_a}\frac{A_{l_a,\mu_a}}{A_{l_a-1,\mu_a+1}}\widetilde{Q}_{l_a-1,\mu_a+1,l_b,\mu_b,j}^{s/c,c/s}\\[5pt]
    + &(\pm 1)_{m_a}\frac{A_{l_a,\mu_a}}{A_{l_a-1,\mu_a-1}}\widetilde{Q}_{l_a-1,\mu_a-1,l_b,\mu_b,j}^{s/c,c/s}\\[5pt]
    - &(\pm 1)_{m_b}\frac{A_{l_b,\mu_b}}{A_{l_b-1,\mu_b+1}}\widetilde{Q}_{l_a,\mu_a,l_b-1,\mu_b+1,j}^{c/s,s/c}\\[5pt]
    - &(\pm 1)_{m_b}\frac{A_{l_b,\mu_b}}{A_{l_b-1,\mu_b-1}}\widetilde{Q}_{l_a,\mu_a,l_b-1,\mu_b-1,j}^{c/s,s/c},
 \end{split}
  \label{eq:cartesian_dev_y}
\end{equation}
\begin{equation}
 \begin{split}
  \frac{\partial \widetilde{Q}_{l_a,\mu_al_b,\mu_b,j}^{c/s,c/s}}{\partial R_{a,z}}
    = \quad{}&2\frac{A_{l_a,\mu_a}}{A_{l_a-1,\mu_a}}\widetilde{Q}_{l_a-1,\mu_a,l_b,\mu_b,j}^{c/s,c/s}\\[5pt]
    - &2 \frac{A_{l_b,\mu_b}}{A_{l_b-1,\mu_b}}\widetilde{Q}_{l_a,\mu_a,l_b-1,\mu_b}^{c/s,c/s}.
 \end{split}
  \label{eq:cartesian_dev_z}
\end{equation}
where $(\pm1)_m=1$ if $m\geq0$ and $(\pm1)_m=-1$ if $m<0$. Note that the cosine part of the $y$ derivatives are constructed from the sine part and vice versa. Furthermore, the terms in Equations~\eqref{eq:cartesian_dev_x}-\eqref{eq:cartesian_dev_z} with $l_{a/b}-1<0$ are zero. A special case has to be considered for 
the $x,y$ derivatives, when $\mu=0$. The matrix  elements of the type $\widetilde{Q}_{l_a-1,-1,l_b,\mu_b,j}^{c/s,c/s}$ and $\widetilde{Q}_{l_a,\mu_a,l_b-1,-1,j}^{c/s,c/s}$ are required for the construction of the $x$ and $y$ derivatives if $\mu_{a/b}=0$, see Equations~\eqref{eq:cartesian_dev_x} and \eqref{eq:cartesian_dev_y}. These matrix elements are never calculated 
since $\mu$ is positive by definition, but they can be obtained using the symmetry relations given in Equation~\eqref{eq:symmetry_relations_W}. For example if 
$\mu_a=0$, the following relations are used for the $x$-derivative
\begin{equation}
 \widetilde{Q}_{l_a-1,-1,l_b,\mu_b,j}^{cc/cs} = (-1)\widetilde{Q}_{l_a-1,1,l_b,\mu_b,j}^{cc/cs}
\end{equation}
and for the $y$ derivative we employ the symmetry relations:
\begin{equation}
 \widetilde{Q}_{l_a-1,-1,l_b,\mu_b,j}^{sc/ss} = \widetilde{Q}_{l_a-1,1,l_b,\mu_b,j}^{sc/ss}.
\end{equation}

\section{Implementation Details}
\label{sec:imp_details}
Integrals of the type $(a|\mathcal{O}|b)$ have been implemented for the overlap $\delta(\mathbf{r})$, Coulomb $1/r$, long-range Coulomb $\erf(\omega r)/r$, 
short-range Coulomb $\erfc(\omega r)/r$, Gaussian-damped Coulomb $\exp(-\omega r^2)/r$ operator and the Gaussian operator $\exp(-\omega r^2)$, where 
$r=|\mathbf{r}_1-\mathbf{r}_2|$. The procedure for calculating these integrals differs only by the evaluation of the $s$-type integrals 
$(0_a|\mathcal{O}|0_b)$ and their derivatives with respect to $R_{ab}^2$. The expressions for the $k$-th derivatives $(0_a|\mathcal{O}|0_b)^{(k)}$ have been derived from Ref.~\onlinecite{Ahlrichs2006} and are explicitly given in Table~S1, see SI.\par
\begin{figure}
 \includegraphics[width=\linewidth]{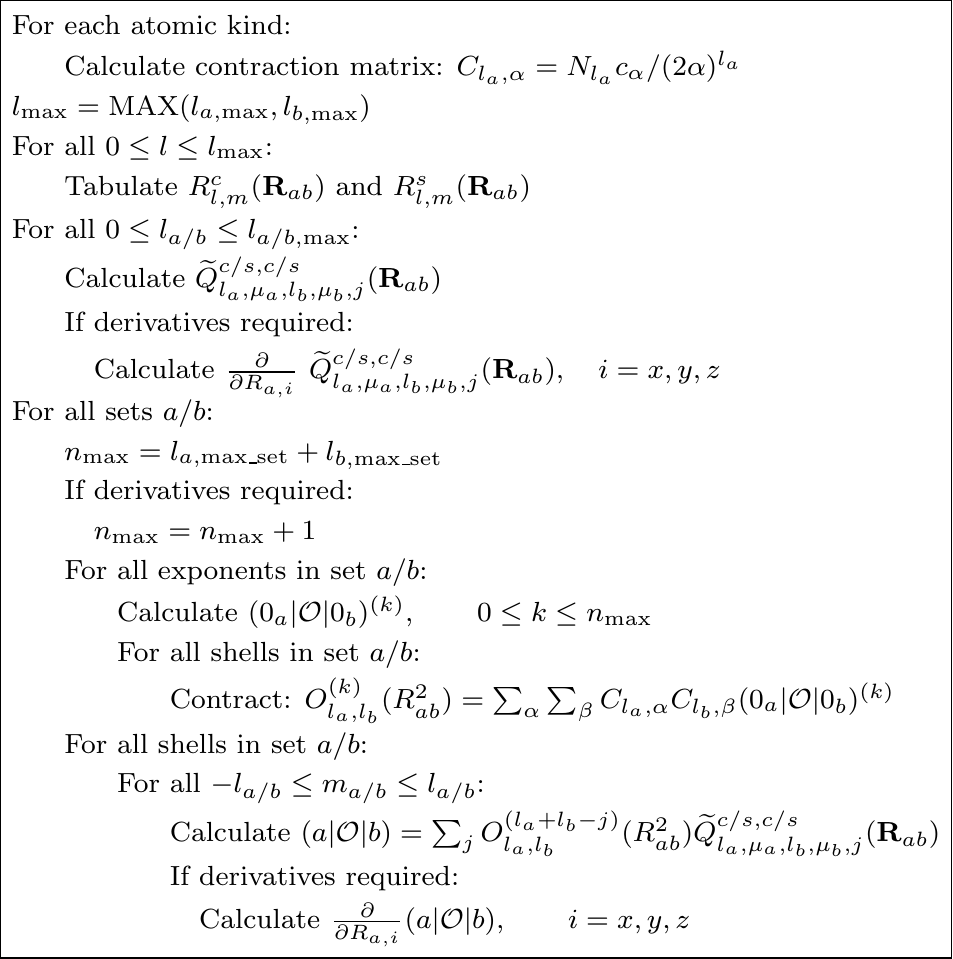}
\caption{\label{fig:imp_details} Pseudocode for the calculation of the $(a|\mathcal{O}|b)$ integrals for an atom pair using a basis set with several sets of 
Gaussian functions as input. All functions that belong to one set share the same Gaussian exponents. Each set consists of shells characterized by the $l$ 
quantum number and a set of contraction coefficients.  }
\end{figure}
The pseudocode for the implementation of the SHG integrals is shown in Figure~\ref{fig:imp_details}. Our implementation is optimized for the typical structure 
of a Gaussian basis set, where Gaussian functions that share the same primitive exponents are organized in so-called sets. Since the matrix elements 
$\widetilde{Q}_{l_a,\mu_a,l_b,\mu_b,j}^{c/s,c/s}$ and their Cartesian derivatives do not depend on the exponents, they are computed only once for all 
$l=0,...,l_{\mathrm{max}}$, where $l_{\mathrm{max}}$ is the maximal $l$ quantum number of the basis set. The matrix elements 
$\widetilde{Q}_{l_a,\mu_a,l_b,\mu_b,j}^{c/s,c/s}$  are used multiple times for all functions with the same $l$ and $m$ quantum number. The integral and scalar 
derivatives $(0_a|\mathcal{O}|0_b)^{(k)}$ are then calculated for each set of exponents and subsequently contracted in one step using matrix-matrix 
multiplications. The same contracted monopole and its derivatives $O_{l_a,l_b}^{(k)}(R_{ab}^2)$ are used for all those functions with the same set of 
exponents and contraction coefficients, but different angular dependency $m$.\par
The only difference for the implementation of the $(a|r_a^{2n}|b)$ and $(ab\tilde{a})$ integrals is the evaluation of the 
contracted monopole and its scalar derivatives. For the three-index overlap integrals $(ab\tilde{a})$ we have additionally to consider the CG expansion. The 
expansion coefficients are independent on the position of the Gaussians and are precalculated only once for all $(ab\tilde{a})$ integrals. The Gaunt 
coefficients $G^{L,l,\tilde{l}}_{M,m,\tilde{m}}$ are obtained by multiplying Equation~\eqref{eq:CG_expansion} by $Y_{L,M}(\theta,\phi)$ and integrating over 
the angular coordinates $\phi$ and $\theta$ of the spherical polar system. The allowed values for $L$ range in steps of 2 from $|l-\tilde{l}|$ to 
$l+\tilde{l}$. Note that not all terms with $-L\leq M \leq L$ in Equation~\eqref{eq:CG_expansion} give non-zero contributions. For $l,\tilde{l}\leq2$, the 
product of two spherical harmonics is expanded in no more than four terms. However, the number of terms increases with $l+\tilde{l}$. A detailed discussion of 
the properties of the Gaunt coefficients can be found in Ref.~\onlinecite{Homeier1996} and tabulated values for low-order expansions of 
real-valued spherical harmonics are given in Ref.~\onlinecite{Giese2016_book}.\par
To assess the performance of the SHG integrals, an optimized OS scheme\cite{Obara1986} has been implemented. In the OS scheme, we first compute the Cartesian 
primitive integrals recursively. Subsequently, the Cartesian integrals are contracted and transformed to SpHGs. An efficient sequence of vertical and 
horizontal recursive steps is used to enhance the performance of the recurrence procedure. For the integrals $(a|b)$, $(a|r_a^{2n}|b)$ and $(ab\tilde{a})$, the 
recursion can be performed separately for each Cartesian direction, which drastically reduces the computational cost for high angular momenta. The contraction and 
transformation is performed in one step using efficient matrix-matrix operations. The three-index overlap integrals $(ab\tilde{a})$
are computed as described in Section~\ref{sec:integrals_aba} by combining the two Cartesian Gaussian functions at center $\mathbf{R}_a$ into a new Cartesian 
function at $\mathbf{R}_a$.  
%
\section{Computational Details}
\begin{table}
\caption{Specifications for the basis sets used for the performance tests. Number of $s, p, d,f, g, h$ and $i$ functions and 
their contraction length $K$.}
\label{tab:basissets_specification}
\begin{ruledtabular}
\begin{tabular}{llc}
basis set name                    & functions        & $K$ \\\hline
\texttt{TESTBAS-L0}               & 5s               & 1,...,7\\
\texttt{TESTBAS-L1}               & 5p               & 1,...,7\\
\texttt{TESTBAS-L2}               & 5d               & 1,...,7\\
\texttt{TESTBAS-L3}               & 5f               & 1,...,7\\
\texttt{TESTBAS-L4}               & 5g               & 1,...,7\\
\texttt{TESTBAS-L5}               & 5h               & 1,...,7\\
\texttt{H-DZVP-MOLOPT-GTH}        & 2s1p             & 7\\
\texttt{O-DZVP-MOLOPT-GTH}        & 2s2p1d           & 7\\
\texttt{O-TZV2PX-MOLOPT-GTH}      & 3s3p2d1f         & 7\\
\texttt{Cu-DZVP-MOLOPT-SR-GTH}    & 2s2p2d1f         & 6\\
\texttt{H-LRI-MOLOPT-GTH}         & 10s9p8d6f        & 1\\
\texttt{O-LRI-MOLOPT-GTH}         &15s13p12d11f9g    & 1\\
\texttt{Cu-LRI-MOLOPT-SR-GTH}     &15s13p12d11f10g9h8i   & 1
\end{tabular}
\end{ruledtabular}
\end{table}
The OS and SHG integral scheme have been implemented in the CP2K\cite{cp2k,Hutter2014} program suite and are available as separate packages. The measurements 
of the timings have been performed on an Intel Xeon (Haswell) platform\footnote{Intel\textsuperscript{\textregistered} Xeon\textsuperscript{\textregistered} 
E5-2697v3/DDR 2133} using the Gfortran Version 4.9.2 compiler with highest possible optimization. Matrix-matrix multiplications are efficiently computed using 
Intel\textsuperscript{\textregistered} MKL LAPACK Version 11.2.1. \par
Empirical timings have been measured for the integrals $(a|\mathcal{O}|b)$,  $(a|r_a^{2n}|b)$ and $(ab\tilde{a})$ using the basis sets specified in 
Table~\ref{tab:basissets_specification}. The basis sets at centers $\mathbf{R}_a$ and $\mathbf{R}_b$ are chosen to be identical. The measurements have been 
performed for a series of test basis sets with angular momenta $L=0,...,5$ and contraction lengths $K=1,...,7$. For example, the specification 
(\texttt{TESTBAS-L1}, $K$=7) indicates that we have five contracted $p$ functions at both centers, where each contracted function is a linear combination of seven primitive Gaussians. Furthermore, timings have been measured for basis sets of the \texttt{MOLOPT} type\cite{VandeVondele2007} that are widely used for DFT calculations with CP2K, see SI for details. The \texttt{MOLOPT} basis sets contain highly contracted functions with shared 
exponents, i.e. they are so-called family basis sets. A full contraction over all primitive functions is used for all $l$ quantum numbers. 
For the $(ab\tilde{a})$ integrals, we use for the second function at center $\mathbf{R}_a$, $\varphi_{\tilde{l}_a,\tilde{m}_a}$, the corresponding 
\texttt{LRI-MOLOPT} basis sets, see Table~\ref{tab:basissets_specification}. The latter is an auxiliary basis set and contains uncontracted functions, as 
typically used for RI approaches.
\section{Results and Discussion}
\label{sec:results_discussion}
This section compares the efficiency of the SHG scheme in terms of mathematical operations and empirical timings to the widely used OS method.
\subsection{Comparison of the algorithms}
Employing the OS scheme for the evaluation of SpHG integrals, the most expensive step is typically the recursive computation of the primitive Cartesian 
Gaussian integrals. The recurrence procedure is increasingly demanding in terms of computational cost for large angular momenta. The recursion depth is even 
increased when the gradients of the integrals are required, since the derivatives of Cartesian Gaussian functions are 
constructed from higher-order angular terms $(l+1)$. In case of the $\texttt{TESTBAS-L5}$ basis set, the computational cost for evaluating both, the 
Coulomb integral $(a|1/r|b)$ and its derivatives, is three times larger than for calculating solely the integral. 
The integral matrix of primitive Cartesian integrals (and their derivatives) has to be transformed to primitive SpHG integrals, which are then 
contracted. The contribution of the contraction step to the total computational cost is small for integrals with non-local 
operators. However, the OS recursion takes a significantly smaller amount of time for local operators, when efficiently implemented, see 
Section~\ref{sec:imp_details}. Thus, the contraction of the primitive SpHG integrals contributes by up to 50\% to the total timings for the 
integrals $(ab)$, $(ab\tilde{a})$ and $(a|r_a^{2n}|b)$. The contraction step can be even dominant when derivatives of these integrals are required since it has 
to be performed for each spatial direction, i.e. we have to contract the $x,y$ and $z$ Cartesian derivatives of the primitive integral matrix separately. 
Details on the contribution of the different steps to the overall computational cost are displayed in Figures~S1-S4 (a,b), SI.   \par
\begin{table}
\caption{ Number of matrix elements that need to be contracted for two-index integrals comparing the OS and SHG method for integral (Int.) and integral+derivative (Int.+Dev.) evaluation. The basis set specifications are given in Table~\ref{tab:basissets_specification}.}
\label{tab:operations_contraction}
\begin{ruledtabular}
\begin{tabular}{lCCCCCCCC}
  \multirow{2}{*}{Integral method}&\multicolumn{2}{c}{H-DZVP}& \multicolumn{2}{c}{O-DZVP} \\\cmidrule(l{0.1em}r{0.1em}){2-3} 
\cmidrule(l{0.1em}r{0.1em}){4-5}
 &\textnormal{Int.}&\textnormal{Int.+Dev.}&\textnormal{Int.}&\textnormal{Int.+Dev.} \\ \hline
 OS  & 784  & 3136 & 3969  & 15876\\
 SHG & 147 & 196   & 245   & 294
\end{tabular}
\end{ruledtabular}
\end{table}
The SHG method requires only recursive operations for the evaluation of $R_{l,m}^{c/s}$ [Equations~\eqref{eq:regular_scaled_harmonics_recursion_start}-\eqref{eq:regular_scaled_harmonics_recursion}], which do not 
depend on the Gaussian exponents and can be tabulated for all functions of the basis set. Furthermore, a deeper recursion is not required for the derivatives 
of the integrals because they are constructed from linear combinations of lower-order angular terms, see 
Equations~\eqref{eq:cartesian_dev_x}-\eqref{eq:cartesian_dev_z}. Instead of contracting each primitive SpHG, we contract an auxiliary integral 
of $s$ functions and its scalar derivatives. The number of scalar derivatives is linearly increasing with $l$. If the gradients are required, the 
increase in computational cost for the contraction is marginal. We have to contract only one additional scalar derivative of the auxiliary integral. 
As shown in Table~\ref{tab:operations_contraction}, the number of matrix elements, which have to be contracted for the $\texttt{MOLOPT}$ basis sets, is 1-2 
orders of magnitude smaller for the SHG scheme. Note that the numbers of SHG matrix elements refer to our implementation, where actually more scalar derivatives 
of $(0_a|\mathcal{O}|0_b)$ and $(0_a|r_a^{2n}|0_b)$ are contracted than necessary, in order to enable library-supported matrix multiplications.\par
For both methods, we have to calculate the same number of fundamental integrals $(0_a|\mathcal{O}|0_b)$ and their scalar derivatives with respect to $R_{ab}^2$ 
(SHG) and $-\rho R_{ab}^2$ (OS),\cite{Ahlrichs2006} where $\rho=\alpha\beta/(\alpha+\beta)$. The time for 
evaluating these auxiliary integrals is approximately the same for both methods. In the SHG scheme, the evaluation of the latter constitutes the major 
contribution to the total timings for
highly contracted basis sets with different sets of exponents. The remaining operations are orders of magnitudes faster than those in the OS scheme.
Details are given in Figures~S1-S4 (c,d). The recursive procedure to obtain regular scaled solid harmonics is negligible in terms of computational cost. The evaluation of 
$\widetilde{Q}^{c/s,c/s}_{l_a,\mu_a,l_b,\mu_b,j}$ [Equation~\eqref{eq:Q_tilde}] from the pretabulated $R_{l,m}^{c/s}$ contributes increasingly
for large angular momenta. The integrals $(a|\mathcal{O}|b)$ are finally constructed from the contracted quantity $O^{(k)}_{l_a,l_b}$ [Equation~\eqref{eq:O_rab2_deriv}] and 
$\widetilde{Q}^{c/s,c/s}_{l_a,\mu_a,l_b,\mu_b,j}$ as displayed Figure~\ref{fig:imp_details}. This step becomes increasingly expensive for large $l$ quantum numbers and is 
in fact dominant for family basis sets, where the fundamental integrals are calculated only for one set of exponents.  
\subsection{Speed-up with respect to the OS method}
\begin{figure}
 \includegraphics[width=\linewidth]{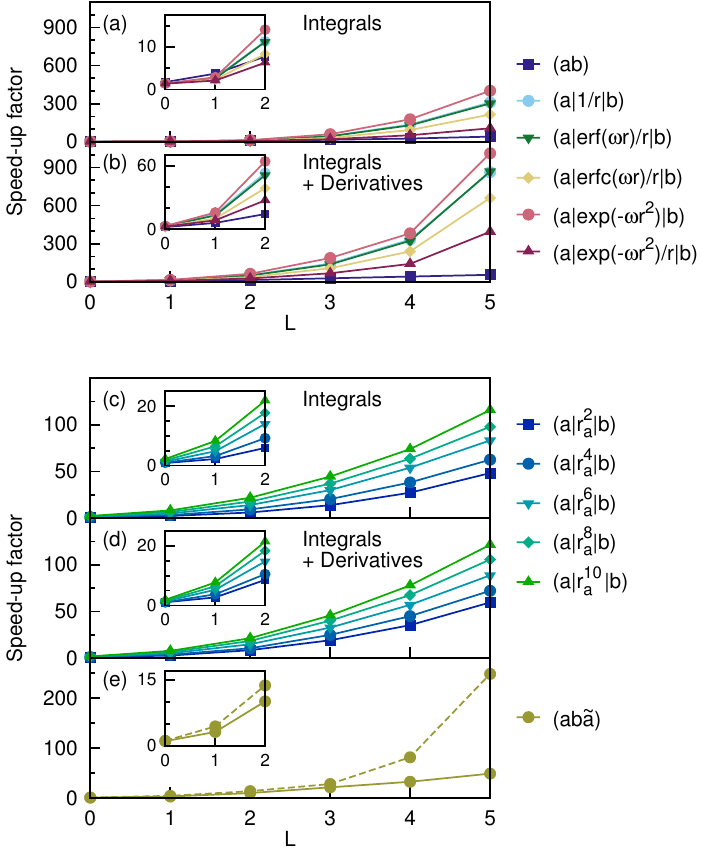}
\caption{\label{fig:speedupK7} Speed-up for different two-center integrals dependent on the $l$ quantum number at the fixed contraction length $K=7$. The 
speed-up factor is defined as the ratio
OS/SHG. Speed-up for (a,b) integrals $(a|\mathcal{O}|b)$, (c,d) $(a|r_a^{2n}|b)$ and (e) $(ab\tilde{a})$. The solid line in (e) is the speed-up for the 
integrals and the dashed line the speed-up for both, integrals + derivatives.}
\end{figure}
\begin{figure}
 \includegraphics[width=\linewidth]{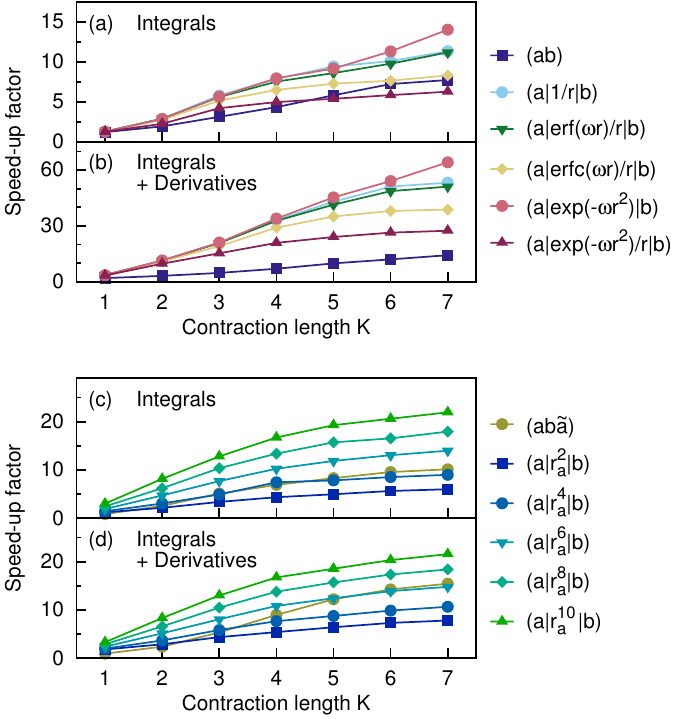}
\caption{\label{fig:speedupL2} Speed-up for different two-center integrals dependent on the contraction length $K$. The $l$ quantum number is fixed and set to 
$l=2$. The speed-up factor is defined as the ratio OS/SHG. Speed-up for (a,b) integrals $(a|\mathcal{O}|b)$, (c,d) $(a|r_a^{2n}|b)$ and $(ab\tilde{a})$.}
\end{figure}
\begin{table*}
\caption{ Speed-up for different two-center integrals. The speed-up is defined as the ratio of the timings OS/SHG. The basis set specifications are given in 
Table~\ref{tab:basissets_specification}. 
}
\label{tab:realistic_basis_sets}
\begin{ruledtabular}
\begin{tabular}{lCCCCCCCC}
  \multirow{2}{*}{Integral type}&\multicolumn{2}{c}{H-DZVP}& \multicolumn{2}{c}{O-DZVP} & \multicolumn{2}{c}{O-TZV2PX} 
&\multicolumn{2}{c}{Cu-DZVP}\\\cmidrule(l{0.1em}r{0.1em}){2-3} 
\cmidrule(l{0.1em}r{0.1em}){4-5}\cmidrule(l{0.1em}r{0.1em}){6-7} \cmidrule(l{0.1em}r{0.1em}){8-9}
 &\textnormal{Int.}&\textnormal{Int.+Dev.}&\textnormal{Int.}&\textnormal{Int.+Dev.} &\textnormal{Int.}&\textnormal{Int.+Dev.} 
&\textnormal{Int.}&\textnormal{Int.+Dev.}\\ \hline
 $(ab)$                                & 2.4 & 2.4 & 6.2  & 5.5  & 11.4 &10.3 & 8.9  & 8.3\\[5pt]
 $(a|1/r|b)$                           & 1.8 & 6.2 & 5.9  & 18.4 & 16.8 &31.6 & 14.6 & 26.0 \\[5pt]
 $(a|\textnormal{erf}(\omega r)/r|b)$  & 1.7 & 6.0 & 5.8  & 18.4 & 16.6 &31.7 & 14.4 & 26.0 \\[5pt]
 $(a|\textnormal{erfc}(\omega r)/r|b)$ & 1.7 & 5.4 & 5.2  & 16.3 & 14.9 &29.5 & 12.9 & 24.8\\[5pt]
 $(a|\exp(-\omega r^2)|b)$             & 1.8 & 6.7 & 6.4  & 19.7 & 18.0 &32.5 & 16.0 &27.4\\[5pt]
 $(a|\exp(-\omega r^2)/r|b)$           & 1.6 & 5.0 & 4.4  & 14.1 & 12.3 &25.4 & 10.8 &22.0\\[5pt]
 $(a|r^2_a|b)$                         & 2.6 & 2.7 & 9.7  & 8.8  & 22.9 &18.6 & 19.7 & 15.8\\[5pt]
 $(a|r^4_a|b)$                         & 4.0 & 4.0 & 16.0 & 14.0 & 39.4 &29.3 &34.7  &25.2\\[5pt]
 $(a|r^6_a|b)$                         & 6.6 & 6.3 & 25.3 & 21.6 & 59.5 &44.3 & 56.1 &38.9\\[5pt]
 $(a|r^8_a|b)$                         & 9.1 & 8.1 & 34.7 & 29.6 &79.3  &61.4 &73.4  &54.6\\[5pt]
 $(a|r^{10}_a|b)$                      & 11.8& 10.5& 44.7 &36.7  &105.2 &79.9 &97.5  &72.2\\[5pt]
 $(ab\tilde{a})$                       & 7.0 & 7.6 & 10.1 & 8.7 & 7.5 & 7.2 & 7.2 & 10.5
 
\end{tabular}
\end{ruledtabular}
\end{table*}
Figure~\ref{fig:speedupK7} displays the performance of the SHG scheme as function of the $l$ quantum number. The speed-up gained by the SHG method is presented 
for the basis sets \texttt{TESTBAS\_LX} for a fixed contraction length. Generally, the ratio of the timings OS/SHG increases with increasing $l$. For the 
$(a|\mathcal{O}|b)$ integrals, we observe speed-ups between 40 and 400 for $l=5$. For $s$ functions, our method can become up to a 
factor of two faster. The smallest speed-up is obtained for the overlap integrals since the OS recursion can be spatially separated. The speed-up for the other operators depends on the computational cost for the evaluation of the primitive Gaussian integrals $(0_a|\mathcal{O}|0_b)$. The SHG 
method outperforms the OS 
scheme by up to a factor of 1000 ($l=5$) when also the derivatives of $(a|\mathcal{O}|b)$ are computed.\par
The computational cost for calculating $(a|r_a^{2n}|b)$ integrals of $h$ functions is up to two orders of magnitude reduced compared to the OS scheme. The 
speed-up increases with $n$. The SHG method is beneficial for all $l>0$ and also for $l=0$ when $n\geq 3$. The speed-up factor is generally slightly 
larger when also the derivatives are required. However, the performance increase is not as pronounced as for the derivatives of $(a|\mathcal{O}|b)$ which is 
again due to the efficient spatial separation of the OS recurrence. \par 
The performance improvement for $(ab\tilde{a})$ is comparable to the 
$(a|r_a^{2n}|b)$ integrals. For the derivatives of $(ab\tilde{a})$ on the other hand, we get a significantly larger speed-up due to the fact that it increases 
more than linearly with $l$ and that the  OS recurrence has to be performed for larger angular momenta. For instance, the derivatives of the 
$h$ functions require the recursion up to $l_a+l_{\tilde{a}}+1=11$.\par
Figure~\ref{fig:speedupL2} shows the performance of the SHG scheme as function of the contraction length $K$. The speed-up increases with $K$ for 
all integral types. A saturation is observed around $K=6,7$ for $(a|r_a^{2n}|b)$ and some of the $(a|\mathcal{O}|b)$ integrals, for example $(a|1/r|b)$. 
The reason is that the computation of the fundamental integrals $(0_a|1/r|0_b)^{(k)}$ increasingly contributes with $K$ to the total computational cost in the SHG scheme,
whereas its relative contribution to the total time is approximately constant in the OS scheme, see Figure~S3 (SI). For $K=7$ and $l=2$, the evaluation of $(0_a|1/r|0_b)^{(k)}$ 
is with 70\% the predominant step in the SHG scheme. Since the absolute time for calculating the fundamental integrals is the same in both schemes, the 
increase in speed-up levels off.  
The saturation effect is less pronounced, for example, for the overlap $(ab)$ because the evaluation of $(0_a0_b)^{(k)}$ is computationally less expensive than 
for $(0_a|1/r|0_b)^{(k)}$. Its relative contribution to the
total time in the SHG scheme is with 50\% significantly smaller, see Figure~S3 (d) for $K=7$.
However, the saturation for large $K$ is hardly of practical relevance because the contraction lengths of Gaussian basis sets is typically not larger than $K=7$.\par
The speed-up for separate operations in the integral evaluation can only be assessed for steps such as the contraction, which have an equivalent in the OS 
scheme. The SHG contraction is increasingly beneficial 
for large $l$ quantum numbers, large contraction lengths and when also derivatives are computed, see Figure~S5 (SI).\par
Table~\ref{tab:realistic_basis_sets} presents the performance of the SHG method for the \texttt{MOLOPT} basis sets. We find that the SHG scheme is superior to 
the OS method for all two-center integrals and basis sets. The smallest performance enhancement is obtained for the \texttt{DZVP} basis set of 
hydrogen, where we get a speed-up by a factor of 1.5-10 because only $s$ and $p$ functions are included in this basis set. A 
performance improvement of 1-2 orders of magnitude is observed for the basis sets that include also $f$ functions. 
The largest speed-up is obtained for the $(a|r_a^{2n}|b)$ integrals followed by the Coulomb and modified Coulomb integrals. The 
SHG scheme is even more beneficial, at least for $(a|\mathcal{O}|b)$ integrals, when also the derivatives are computed. 
For the integrals $(ab)$, $(a|r_a^{2n}|b)$ and $(ab\tilde{a})$  on the contrary, the speed-up for both, integrals and derivatives, is instead a bit smaller than for the calculation of the integrals alone.  
This behavior has to be related to the fact that the \texttt{MOLOPT} basis sets are family basis sets. The OS recursion is carried out for only one set of exponents. Therefore, this part of the calculation is
computationally less expensive than for basis sets constituted of several sets of exponents. Furthermore, the OS recursion is computationally less demanding for integrals with local operators, see Section~\ref{sec:imp_details}, and the
computational cost for the recursion is in this case only slightly increased when additionally computing the derivatives. In the SHG scheme, the construction of the derivatives from the
contracted quantity given in Equation~\eqref{eq:O_rab2_deriv} and $\widetilde{Q}^{c/s,c/s}_{l_a,\mu_a,l_b,\mu_b,j}$ [Equations~\eqref{eq:Q_tilde}] and its derivatives 
[Equations~\eqref{eq:cartesian_dev_x}-\eqref{eq:cartesian_dev_z}] is the dominant step for family basis sets.
This construction step cannot be supported by memory-optimized library routines and the relative increase in computational cost upon calculating the derivatives is in this particular case larger than for the OS scheme. \par
For the computation of molecular integrals in quantum chemical simulations, the relation $(a|\mathcal{O}|b)=(-1)^{l_b-l_a}(b|\mathcal{O}|a)$ can be employed if we have the same set of functions at centers
$\mathbf{R}_a$ and $\mathbf{R}_b$. This relation has not been used for the measurements of the empirical timings, but is in practice useful when the atoms at center $\mathbf{R}_a$ and $\mathbf{R}_b$ are of the same 
elemental type.
\section{Conclusions}
Based on the work of Giese and York\cite{Giese2008}, we used Hobson's theorem to derive expressions for the SHG integrals $(a|r_a^{2n}|b)$ and $(ab\tilde{a})$ 
and their derivatives. We showed that the SHG overlap $(ab\tilde{a})$ is a sum of $(a|r_a^{2n}|b)$ integrals. Additionally, two-center SHG integrals with Coulomb, modified Coulomb and Gaussian operators have been implemented 
adapting the expressions given in Refs.~\onlinecite{Ahlrichs2006} and \onlinecite{Giese2008}.\par
In the SHG integral scheme, the angular-dependent part is separated from the exponents of the Gaussian primitives. As a consequence, the contraction is only 
performed for $s$-type auxiliary integrals and their scalar derivatives. The angular-dependent term is obtained by a relatively simple recurrence procedure 
and can be pre-computed. In contrast to the Cartesian Gaussian-based OS scheme, the derivatives with respect to the spatial 
directions are computed from lower-order 
$(l-1)$ terms. \par
We showed that the SHG integral method is superior to the OS scheme by means of empirical timings. Performance improvements have been 
observed for all integral types, in particular for higher angular momenta and high 
contraction lengths. Specifically for the 
$(a|r_a^{2n}|b)$ integrals, the timings ratio OS/SHG grows with increasing $n$. The speed-up is usually even larger for 
the computation of the Cartesian derivatives. This is especially true for Coulomb-type integrals. 
\begin{acknowledgments}
We thank Andreas Gl\"o{\ss} for helpful discussions and technical support.
J. Wilhelm thanks the NCCR MARVEL, funded by the Swiss National Science Foundation, for financial support. N. Benedikter acknowledges support by ERC Advanced grant 321029 and by VILLUM
FONDEN via the QMATH Centre of Excellence (Grant No. 10059).
\end{acknowledgments}
%
\section*{Supplementary Information}
 Supplementary Material is available for the analytic expressions of $(0_a|\mathcal{O}|0_b)^{(k)}$ employing the standard Coulomb, modified Coulomb and 
Gaussian-type operators, see Table~S1. Further information on integral timings is presented in Figures~S1-S5. A detailed description of the \texttt{MOLOPT} basis set is given in Tables~S2-S8.

\appendix

%
\section{Proof of general formula for \boldmath{$\chi_{l,m}(\alpha,\mathbf{r}_a)r_a^{2n}$}}
\label{app:proof_product}
In this appendix, we prove that Equation~\eqref{eq:generic_chira2n}
is valid for all $ n\in\mathbb{N}$. In the following, the label $tbs$ indicates that the identity of the left-hand side (lhs) and the right-hand side (rhs) of 
the equation remains 
\textit{to be shown}.
\begin{definition}
 The product of a solid harmonic Gaussian function at center $\mathbf{R}_a$ multiplied with the operator $r_a^{2n}$ is defined as
 \begin{equation}
   \chi_{l,m}(\alpha,\mathbf{r}_a)r_a^{2n} :=  C_{l,m}(\mathbf{r}_a) \exp{\left(-\alpha r_a^2\right)} r_a^{2n},
 \end{equation}
 where  $C_{l,m}$ is the solid harmonic defined in Equation~\eqref{eq:solid_harominc_C} and $n\in\mathbb{N}$.
 \label{def:product}
\end{definition}
Recall that $C_{l,m}(\nabla_a)$ is the spherical tensor gradient operator (STGO) acting on center $\mathbf{R}_a$. In the following we generally drop all 
`passive' indices, writing e.\,g.\ $\RN{1}_n$ instead of $\RN{1}_n^{l,m,\alpha,\mathbf{r}_a}$.
\begin{theorem}\label{lemma:In}
 Equation~\eqref{eq:generic_chira2n}, 
 \begin{equation}\label{eq:theoremstatement}
 \begin{split}
   \chi_{l,m}(\alpha,\mathbf{r}_a)r_a^{2n} = \frac{C_{l,m}(\nabla_a) }{(2\alpha)^{l}}\sum_{j=0}^n&\binom{n}{j} 
\frac{(l+j-1)!}{(l-1)!\alpha^j}\\&\times\exp{\left(-\alpha r_a^2\right)} r_a^{2(n-j)}.
 \end{split}
\end{equation}
 is valid for all $n\in\mathbb{N}$ .
\end{theorem}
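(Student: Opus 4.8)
The plan is to sidestep an induction on $n$ entirely by recognizing the operator $r_a^{2n}$ as a disguised $\alpha$-derivative. The starting point is the elementary identity $(\partial/\partial\alpha)\exp(-\alpha r_a^2)=-r_a^2\exp(-\alpha r_a^2)$, which iterated gives $r_a^{2n}\exp(-\alpha r_a^2)=(-1)^n(\partial/\partial\alpha)^n\exp(-\alpha r_a^2)$. Since the solid harmonic $C_{l,m}(\mathbf{r}_a)$ carries no $\alpha$-dependence, Definition~\ref{def:product} together with $\chi_{0,0}(\alpha,\mathbf{r})=\exp(-\alpha r^2)$ [cf.\ Equation~\eqref{eq:SHG_primitive}] becomes
\begin{equation*}
 \chi_{l,m}(\alpha,\mathbf{r}_a)\,r_a^{2n}=(-1)^n\left(\frac{\partial}{\partial\alpha}\right)^{\!n}\chi_{l,m}(\alpha,\mathbf{r}_a),
\end{equation*}
so the claim is reduced to differentiating a single primitive SHG with respect to its exponent.

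Next I would insert the STGO representation from Equation~\eqref{eq:primitive_shg_STGO}, $\chi_{l,m}(\alpha,\mathbf{r}_a)=(2\alpha)^{-l}C_{l,m}(\nabla_a)\exp(-\alpha r_a^2)$. The scalar factor $(2\alpha)^{-l}$ does not depend on the variables on which $C_{l,m}(\nabla_a)$ differentiates, so it may be moved inside the STGO; and $(\partial/\partial\alpha)^n$ commutes with $C_{l,m}(\nabla_a)$ because the two operators act on disjoint sets of variables. Hence
\begin{equation*}
 \chi_{l,m}(\alpha,\mathbf{r}_a)\,r_a^{2n}=\frac{(-1)^n}{2^l}\,C_{l,m}(\nabla_a)\left(\frac{\partial}{\partial\alpha}\right)^{\!n}\!\left[\alpha^{-l}\exp(-\alpha r_a^2)\right],
\end{equation*}
and everything reduces to an $n$-fold Leibniz expansion of $\alpha^{-l}\exp(-\alpha r_a^2)$.

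The Leibniz step uses only $\bigl(\tfrac{\partial}{\partial\alpha}\bigr)^{j}\alpha^{-l}=(-1)^j\tfrac{(l+j-1)!}{(l-1)!}\alpha^{-l-j}$ (a Pochhammer identity; for $l=0$ the product $l(l+1)\cdots(l+j-1)$ is empty/zero, so only the $j=0$ term survives, consistent with the STGO being the identity there) and $\bigl(\tfrac{\partial}{\partial\alpha}\bigr)^{n-j}\exp(-\alpha r_a^2)=(-1)^{n-j}r_a^{2(n-j)}\exp(-\alpha r_a^2)$. The three sign factors collapse, $(-1)^n(-1)^j(-1)^{n-j}=1$, and $\tfrac{1}{2^l}\alpha^{-l-j}=\tfrac{1}{(2\alpha)^l\alpha^j}$, so the binomial sum over $j=0,\dots,n$ reproduces exactly Equation~\eqref{eq:theoremstatement}.

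I do not expect a genuine obstacle here: the work is purely bookkeeping, and the only points deserving an explicit word are the two commutations in the second step (both immediate once one notes that $\alpha$ is independent of the spatial/center variables $\nabla_a$ differentiates) and the $l=0$ reading of the factorial ratio. The base case $n=0$ is just Equation~\eqref{eq:primitive_shg_STGO} itself and serves as a sanity check. A more pedestrian alternative would apply Hobson's theorem [Equation~\eqref{eq:Hobson}] termwise to the right-hand side — each summand is a radial function of $r_a^2$ — and then verify a resulting one-variable identity in $t=r_a^2$ by induction on $n$; the $\alpha$-differentiation route is designed precisely to bypass that combinatorics.
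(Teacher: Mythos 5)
Your proof is correct, and it takes a genuinely different route from the paper's. The paper applies Hobson's theorem directly to the product $\exp(-\alpha r_a^2)\,r_a^{2n}$, expands $(d/dr_a^2)^l$ by Leibniz, solves the resulting equation for $\chi_{l,m}(\alpha,\mathbf{r}_a)r_a^{2n}$ to get a recursion $\RN{1}_n$ in terms of $\RN{1}_{n-1},\ldots,\RN{1}_{n-\min(l,n)}$, and then verifies the closed form by strong induction; the inductive step requires the nontrivial combinatorial identity $\sum_{j=0}^{l}(-1)^{j}\binom{l}{j}\frac{(l+m-j-1)!}{(m-j)!}=0$, which the paper proves by an $(l-1)$-fold differentiation of $x^{l+m-1}(1+1/x)^l$ evaluated at $x=-1$. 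You instead observe that $r_a^{2n}\exp(-\alpha r_a^2)=(-1)^n(\partial/\partial\alpha)^n\exp(-\alpha r_a^2)$, push $(\partial/\partial\alpha)^n$ through $C_{l,m}(\nabla_a)$ (legitimate, since the two operators differentiate disjoint variables and the integrand is jointly smooth for $\alpha>0$), and read off the coefficients from a single Leibniz expansion of $(\partial/\partial\alpha)^n[\alpha^{-l}\exp(-\alpha r_a^2)]$ — the factor $(l+j-1)!/[(l-1)!\,\alpha^{j}]$ emerges directly as the falling derivative of $\alpha^{-l}$, and all signs cancel. This bypasses the recursion, the induction, and the combinatorial lemma entirely; your treatment of the $l=0$ edge case (the product $l(l+1)\cdots(l+j-1)$ vanishes for $j\geq1$, leaving only the $j=0$ term, consistent with $C_{0,0}(\nabla_a)$ being the identity) is also the right reading of the factorial ratio there. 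What the paper's longer route buys is a self-contained verification using only Hobson's theorem and elementary algebra on the spatial side, plus a reusable combinatorial identity; what yours buys is brevity and a transparent origin for every coefficient in the sum.
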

 \begin{proof}
 Using Hobson's theorem\cite{Hobson1892} yields
\begin{equation}
 \begin{split}
 C_{l,m}&(\nabla_a) \exp{\left(-\alpha r_a^2\right)}r_a^{2n}\\
&=(-2)^{l}C_{l,m}(\mathbf{r}_a)\left(\frac{d}{dr_a^2}\right)^{l}\left[\exp{\left(-\alpha r_a^2\right)} r_a^{2n}\right].
 \end{split}
 \label{eq:hobson_theorem_proof}
\end{equation}
By applying Leibniz's rule of differentiation we get
\begin{equation}
 \begin{split}
   &\left(\frac{d}{dr_a^2}\right)^l\exp{\left(-\alpha r_a^2\right)} r_a^{2n} \\&= \sum_{j=0}^{\mathrm{min}(l,n)}\!\binom{l}{j}\! 
\left[\left(\frac{d}{dr_a^2}\right)^{l-j}\exp{(-\alpha r_a^2)}\right]\left[\left(\frac{d}{dr_a^2}\right)^{j} \left(r_a^2\right)^n\right]\\
  &=\sum_{j=0}^{\mathrm{min}(l,n)} \binom{l}{j}(-\alpha)^{l-j} \exp{(-\alpha r_a^2)} \frac{n!}{(n-j)!} \left(r_a^2\right)^{n-j}\\
  &=\sum_{j=0}^{\mathrm{min}(l,n)}\binom{n}{j}\frac{l!}{(l-j)!}(-\alpha)^{l-j} \exp{(-\alpha r_a^2)}  \left(r_a^2\right)^{n-j}.
 \end{split}
 \label{eq:after_leibniz}
\end{equation}
Inserting the last line of Equation~\eqref{eq:after_leibniz} in Equation~\eqref{eq:hobson_theorem_proof} and writing out the term for $j=0$ explicitly leads to
\begin{equation}
 \begin{split}
  C_{l,m}&(\nabla_a) \exp{\left(-\alpha r_a^2\right)}r_a^{2n}\\
  ={}&(-2)^{l}(-\alpha)^l \exp{(-\alpha r_a^2)}  \left(r_a^2\right)^{n}C_{l,m}(\mathbf{r}_a)\\
  & + (-2)^{l}\sum_{j=1}^{\mathrm{min}(l,n)}\binom{n}{j}\frac{l!}{(l-j)!}(-\alpha)^{l-j}\\&\quad\times \exp{(-\alpha r_a^2)}  
\left(r_a^2\right)^{n-j}C_{l,m}(\mathbf{r}_a).
 \end{split}
\end{equation}
Employing Definition~\ref{def:product} and solving for $\chi_{l,m}(\alpha,\mathbf{r}_a)r_a^{2n}$ we obtain
\begin{equation}
 \begin{split}
 \chi_{l,m}&(\alpha,\mathbf{r}_a)r_a^{2n}\\={}&\frac{C_{l,m}(\nabla_a) }{(2\alpha)^{l}} \exp{\left(-\alpha r_a^2\right)} 
r_a^{2n}-\sum_{j=1}^{\mathrm{min}(l,n)}\binom{n}{j}\frac{l!}{(l-j)!}\\&\times(-\alpha)^{-j} \exp{(-\alpha r_a^2)}  
\left(r_a^2\right)^{n-j}C_{l,m}(\mathbf{r}_a).
 \end{split}
\end{equation}
Introducing the notation
\begin{equation}\label{eq:defIn}
 \RN{1}_n:=\chi_{l,m}(\alpha,\mathbf{r}_a)r_a^{2n}
\end{equation}
and recalling Definition \ref{def:product}, we obtain a recursion relation:
\begin{equation}\label{eq:recursion}
 \begin{split}
 \RN{1}_n = & \frac{C_{l,m}(\nabla_a) }{(2\alpha)^{l}} \exp{\left(-\alpha r_a^2\right)} 
r_a^{2n}\\&-\sum_{j=1}^{\mathrm{min}(l,n)}\binom{n}{j}\frac{l!}{(l-j)!}(-\alpha)^{-j}\RN{1}_{n-j}.
 \end{split}
\end{equation}
Furthermore, it is easy to see (applying Hobson's theorem as done above for general $n$) that
\begin{equation}
\RN{1}_0 = \frac{C_{l,m}(\nabla_a) }{(2\alpha)^{l}} \exp{\left(-\alpha r_a^2\right)}. 
\label{eq:recursionstart}
\end{equation}
From here, the theorem can in principle be obtained by using \eqref{eq:recursion} and \eqref{eq:recursionstart} recursively. This is made mathematically 
rigorous by an induction proof in Lemma \ref{lemma:induction} below.
\end{proof}
Let us denote the rhs of \eqref{eq:theoremstatement} by $\RN{2}_n$,
\begin{equation}
 \RN{2}_n:=\frac{C_{l,m}(\nabla_a)}{(2\alpha)^{l}}\sum_{j=0}^n\binom{n}{j} 
\frac{(l+j-1)!}{(l-1)!\alpha^j}\exp{\left(-\alpha r_a^2\right)} r_a^{2(n-j)}.
\label{eq:product_chira2ndef}
\end{equation}
The following Lemma tells us that the recursive representation \eqref{eq:recursion}-\eqref{eq:recursionstart} indeed has its closed form
given by $\RN{2}_n$.
\begin{lemma}\label{lemma:induction}
For all $n \in \mathbb{N}$ we have
\begin{equation}
 \RN{1}_n=\RN{2}_n.
\end{equation}
\end{lemma}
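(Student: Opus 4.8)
The plan is to prove the identity by strong induction on $n$, playing off the recursion \eqref{eq:recursion}--\eqref{eq:recursionstart} satisfied by $\RN{1}_n$ against the closed form \eqref{eq:product_chira2ndef} for $\RN{2}_n$; equivalently, one verifies that $\RN{2}_n$ obeys the same recursion and the same initial value, which pins down the sequence uniquely. The base case $n=0$ is immediate: the sum in \eqref{eq:product_chira2ndef} collapses to its $j=0$ term, which equals $1$, so $\RN{2}_0=\tfrac{C_{l,m}(\nabla_a)}{(2\alpha)^l}\exp(-\alpha r_a^2)=\RN{1}_0$ by \eqref{eq:recursionstart}.

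For the inductive step I would fix $n\geq 1$, assume $\RN{1}_k=\RN{2}_k$ for all $0\leq k<n$, insert this into \eqref{eq:recursion}, and expand every $\RN{2}_{n-j}$ via its definition. Factoring out the common operator $\tfrac{C_{l,m}(\nabla_a)}{(2\alpha)^l}$ and the common $\exp(-\alpha r_a^2)$, both $\RN{1}_n$ (after substitution) and $\RN{2}_n$ become linear combinations of the monomials $r_a^{2(n-m)}$, $m=0,\dots,n$, so it is enough to match coefficients. Re-indexing the double sum in $j$ (from the recursion, $1\leq j\leq\min(m,l)$) and in $i$ (from $\RN{2}_{n-j}$) by $m=i+j$, and using $\binom{n}{j}\binom{n-j}{m-j}=\binom{n}{m}\binom{m}{j}$ together with $(-\alpha)^{-j}\alpha^{-(m-j)}=(-1)^j\alpha^{-m}$, the coefficient of $r_a^{2(n-m)}$ produced by the recursion is
\[
-\binom{n}{m}\frac{\alpha^{-m}}{(l-1)!}\sum_{j=1}^{m}(-1)^j\binom{m}{j}\frac{l!}{(l-j)!}\,(l+m-j-1)! \qquad (m\geq 1),
\]
the upper limit being pushed from $\min(m,l)$ to $m$ since $\tfrac{l!}{(l-j)!}=l(l-1)\cdots(l-j+1)$ vanishes for $j>l$. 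On the other hand, the coefficient of $r_a^{2(n-m)}$ in $\RN{2}_n$ is $\binom{n}{m}\tfrac{(l+m-1)!}{(l-1)!\,\alpha^m}$ (and both reduce to $1$ for $m=0$).

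The whole statement therefore reduces to the single combinatorial identity
\[
\sum_{j=0}^{m}(-1)^j\binom{m}{j}\frac{l!}{(l-j)!}\,(l+m-j-1)! = 0 \qquad\text{for all } m\geq 1,
\]
whose $j=0$ term is $(l+m-1)!$; isolating it turns the bracketed sum above into $-(l+m-1)!$, which cancels the sign and matches $\RN{2}_n$ exactly. This identity is elementary once one writes $\tfrac{l!}{(l-j)!}(l+m-j-1)! = l!\,(l-j+1)(l-j+2)\cdots(l-j+m-1)$: the $j$-dependent factor is a polynomial in $j$ of degree $m-1$, and the alternating binomial sum $\sum_{j=0}^m(-1)^j\binom{m}{j}Q(j)$ — the $m$-th finite difference of $Q$ at $0$ — annihilates every polynomial $Q$ of degree $<m$ (expand $Q$ in the basis $\binom{j}{0},\dots,\binom{j}{m-1}$ and use $\sum_j(-1)^j\binom{m}{j}\binom{j}{k}=0$ for $k<m$).

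I expect the only genuine obstacle to be the bookkeeping of the inductive step: collapsing the nested sums over $j$ and $i$ into a single sum over $m=i+j$, tracking the signs from $(-\alpha)^{-j}$ and Hobson's $(-1)^l$ (already absorbed into \eqref{eq:recursion}), and simplifying the products of factorial ratios. Everything after the coefficient of each $r_a^{2(n-m)}$ has been extracted is forced by the finite-difference identity, so the induction closes and establishes $\RN{1}_n=\RN{2}_n$ for all $n\in\mathbb{N}$, completing the proof of Lemma~\ref{lemma:induction} and hence of Theorem~\ref{lemma:In}.
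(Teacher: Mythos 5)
Your induction has the same skeleton as the paper's: base case $n=0$, substitution of the closed form $\RN{2}_{n-j}$ into the recursion \eqref{eq:recursion}, collection of the coefficient of each monomial $r_a^{2(n-m)}$ via the reindexing $m=j+k$ and the identity $\binom{n}{j}\binom{n-j}{m-j}=\binom{n}{m}\binom{m}{j}$, and reduction to a single alternating combinatorial identity. Where you genuinely diverge is in how that identity is handled. The paper rewrites it as $\sum_{j=0}^{l}(-1)^j\binom{l}{j}\frac{(l+m-j-1)!}{(m-j)!}=0$, which requires $l\le m$ and therefore forces a two-case analysis ($\min(l,m)=l$ versus $\min(l,m)=m$, the latter recovered by a symmetric reformulation swapping the roles of $l$ and $m$), and proves it in Lemma~\ref{lemma:binomial_sum} by differentiating $x^{l+m-1}(1+1/x)^l$ a total of $l-1$ times and evaluating at $x=-1$. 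You instead keep the sum in the form $\sum_{j=0}^m(-1)^j\binom{m}{j}\frac{l!}{(l-j)!}\,(l+m-j-1)!=0$, note that the falling factorial $l!/(l-j)!=l(l-1)\cdots(l-j+1)$ kills all terms with $j>l$ so that one statement covers both cases, and dispose of it by observing that $\frac{l!}{(l-j)!}(l+m-j-1)!=l!\prod_{i=1}^{m-1}(l-j+i)$ is a polynomial of degree $m-1$ in $j$, hence annihilated by the $m$-th finite difference $\sum_{j=0}^m(-1)^j\binom{m}{j}(\cdot)$. Both arguments are correct; yours buys a uniform treatment without the case split and rests on a standard, very elementary fact about finite differences, while the paper's generating-function computation is self-contained but needs the extra symmetric rewriting. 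The one point you should make explicit in a full write-up is the convention that $l!/(l-j)!$ denotes the falling factorial throughout, so that the summand and your polynomial $Q(j)$ agree (both vanishing) on the extended range $l<j\le m$.
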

\begin{proof}
This is proved by mathematical induction.\\
1. Basis:  Recalling \eqref{eq:recursionstart}, it obviously holds $\RN{1}_0=\RN{2}_0$.\\
2. Induction Hypothesis: we assume that $\RN{1}_i = \RN{2}_i$ for all natural numbers $i<n$.\\
3. Inductive Step: We use the recursion relation \eqref{eq:recursion}. Since we sum over $j\geq1$, we can use the induction hypothesis $\RN{1}_{n-j} = 
\RN{2}_{n-j}$ to get 
\begin{equation}
 \begin{split}
 \RN{1}_n ={}&\frac{C_{l,m}(\nabla_a) }{(2\alpha)^{l}} \exp{\left(-\alpha r_a^2\right)} 
r_a^{2n}\\&-\sum_{j=1}^{\mathrm{min}(l,n)}\binom{n}{j}\frac{l!}{(l-j)!}(-\alpha)^{-j}\RN{2}_{n-j}.
 \end{split}
\end{equation}
Inserting the definition of $\RN{2}_{n-j}$ (i.\,e.\ Equation~\eqref{eq:product_chira2ndef} for $n-j$) this becomes
\begin{equation}
 \begin{split}
  \RN{1}_n ={}& \frac{C_{l,m}(\nabla_a) }{(2\alpha)^{l}} \exp{\left(-\alpha r_a^2\right)} 
r_a^{2n}\\&-\sum_{j=1}^{\mathrm{min}(l,n)}\binom{n}{j}\frac{l!}{(l-j)!}(-\alpha)^{-j}\frac{C_{l,m}(\nabla_a) 
}{(2\alpha)^{l}}\\&\times\sum_{k=0}^{n-j}\binom{n-j}{k} \frac{(l+k-1)!}{(l-1)!\alpha^k}\exp{\left(-\alpha r_a^2\right)} r_a^{2(n-j-k)}.
 \end{split}
 \label{eq:general_formula_complex}
\end{equation}
In the following, it is shown that Equation~\eqref{eq:general_formula_complex} is indeed equal to $\RN{2}_n$. 
All terms of $\RN{1}_n$ with $j>0$ in Equation~\eqref{eq:general_formula_complex} are denoted by 
\begin{equation}
 \begin{split}
 \RN{1}_n' := &-\sum_{j=1}^{\mathrm{min}(l,n)}\binom{n}{j}\frac{l!}{(l-j)!}(-\alpha)^{-j}\frac{C_{l,m}(\nabla_a) 
}{(2\alpha)^{l}}\\&\times\sum_{k=0}^{n-j}\binom{n-j}{k} \frac{(l+k-1)!}{(l-1)!\alpha^k}\exp{\left(-\alpha r_a^2\right)} r_a^{2(n-j-k)}
 \end{split}
\end{equation}
and the contributions with $j>0$ to $\RN{2}_n$ in Equation~\eqref{eq:product_chira2ndef} are in the following referred to as
\begin{equation} 
 \RN{2}_n' := \frac{C_{l,m}(\nabla_a)}{(2\alpha)^{l}}\sum_{j=1}^n\binom{n}{j}\frac{(l+j-1)!}{(l-1)!\alpha^j}\exp{\left(-\alpha r_a^2\right)} r_a^{2(n-j)}.
\end{equation}
To prove that $\RN{1}_n=\RN{2}_n$, it is sufficient to show that $\RN{1}_n'=\RN{2}_n'$. Both sides are reduced to
\begin{equation}
 \begin{split}
  -\sum_{j=1}^{\mathrm{min}(l,n)}&\binom{n}{j}\frac{l!}{(l-j)!}(-\alpha)^{-j}\\&\times\sum_{k=0}^{n-j}\binom{n-j}{k}
\frac{(l+k-1)!}{\alpha^k} r_a^{2(n-j-k)}\\\stackrel{\mathrm{tbs}}{=}&\sum_{j=1}^n\binom{n}{j}\frac{(l+j-1)!}{\alpha^j} 
r_a^{2(n-j)}
 \label{eq:lhs_rhs}
 \end{split}
\end{equation}
where we denote the lhs by
\begin{equation}
 \begin{split}
 \RN{1}_n'' :={}& -\sum_{j=1}^{\mathrm{min}(l,n)}\binom{n}{j}\frac{l!}{(l-j)!}(-\alpha)^{-j}\\&\times\sum_{k=0}^{n-j}\binom{n-j}{k}   
  \frac{(l+k-1)!}{\alpha^k} r_a^{2(n-j-k)}.
 \end{split}
\end{equation}
In order to sort by the exponents of $r_a^2$ in expression $\RN{1}_n''$, the Kronecker delta $\delta_{m,j+k}$ is introduced.
\begin{equation}
 \begin{split}
\RN{1}_n''={}&\sum_{m=1}^{n}r^{2(n-m)}_a\Biggl(-\sum_{j=1}^{\mathrm{min}(l,m)}\binom{n}{j}\frac{l!}{(l-j)!}(-\alpha)^{
-j}\\[5pt]&\times\sum_{k=0}^{n-j}\binom{n-j}{k}\frac{(l+k-1)!}{\alpha^k} \delta_{m,j+k}\Biggr).
 \end{split}
\end{equation}
The range of the newly introduced sum is $m=1,...,n$ since for the lower bound of summation we find that $m=j+k\geq1+0=1$ and for the upper bound 
$m=j+k\leq j+(n-j)=n$. For the inner sum over indices $j$, it must be considered that $k=m-j$ is negative if $j>m$ while the lower bound of the $k$-sum is 
in fact $k\geq0$. Thus, the upper range of the summation of the $j$-sum has to be changed to $\mathrm{min}(l,n,m)$, which is equivalent to $\mathrm{min}(l,m)$ 
because $m \leq 
n$. The summation ranges for the innermost sum are not modified since $k=m-j\leq n-j$. In the next step, the $k$-sum is eliminated replacing $k$ by $m-j$,
\begin{equation}
 \begin{split}
 \RN{1}_n''={}&
\sum_{m=1}^{n}\Biggr( -\sum_{j=1}^{\mathrm{min}(l,m)}\binom{n}{j}\frac{l!}{(l-j)!}(-\alpha)^{-j}\\[5pt]&\times
   \binom{n-j}{m-j}\frac{(l+m-j-1)!}{\alpha^{m-j}}\Biggr) r^{2(n-m)}_a.
 \end{split}
\end{equation}
Renaming the summation index on the rhs of Equation~\eqref{eq:lhs_rhs}, we get
\begin{equation}
 \RN{2}_n'':=\sum_{m=1}^n \binom{n}{m} \frac{(l+m-1)!}{\alpha^m}r^{2(n-m)}_a.
\end{equation}
We are done if we can show that $\RN{1}_n''=\RN{2}_n''$. We do this by comparing summand by summand, i.\,e.\ we have to show that for each $m=1,\ldots, n$,
\begin{equation}
 \begin{split}
    -\sum_{j=1}^{\mathrm{min}(l,m)}&\binom{n}{j}\frac{l!}{(l-j)!}(-\alpha)^{-j}
   \binom{n-j}{m-j}\\&\times\frac{(l+m-j-1)!}{\alpha^{m-j}}\\\stackrel{\mathrm{tbs}}{=}{}&\binom{n}{m} \frac{(l+m-1)!}{\alpha^m}.
 \end{split}
\end{equation}
Expansion of the binomial coefficients and further reduction gives
\begin{equation}
 \sum_{j=1}^{\mathrm{min}(l,m)}(-1)^{j+1}\binom{l}{j} \frac{(l+m-j-1)!}{(m-j)!}\stackrel{\mathrm{tbs}}{=} \frac{(l+m-1)!}{m!}.
 \label{eq:binomial_sum_jsplit}
\end{equation}
The term on the rhs is in fact the negative of the `missing' summand $j=0$ on the lhs and thus we have
\begin{equation}
  \sum_{j=0}^{\mathrm{min}(l,m)}(-1)^{j+1}\binom{l}{j} \frac{(l+m-j-1)!}{(m-j)!}\stackrel{\mathrm{tbs}}{=}0.
 \label{eq:binomial_sum_induction}
\end{equation}
The lhs is indeed zero, which is easily rationalized by dividing Equation~\eqref{eq:binomial_sum_induction} by $(-1)$ and assuming that $\mathrm{min}(l,m)=l$,
\begin{equation}
 \sum_{j=0}^{l}(-1)^{j}\binom{l}{j} \frac{(l+m-j-1)!}{(m-j)!} \stackrel{\mathrm{tbs}}{=} 0,
\end{equation}
which is true by Lemma~\ref{lemma:binomial_sum}. In order to show that the lhs of Equation~\eqref{eq:binomial_sum_induction} is also zero 
for $\mathrm{min}(l,m)=m$, Equation~\eqref{eq:binomial_sum_jsplit} is reformulated 
\begin{equation}
 \sum_{j=1}^{\mathrm{min}(l,m)}(-1)^{j+1}\binom{m}{j} \frac{(l+m-j-1)!}{(l-j)!}\stackrel{\mathrm{tbs}}{=}\frac{(l+m-1)!}{l!}
\end{equation}
The term on the rhs is again the `missing' summand for $j=0$ leading to
\begin{equation}
 \sum_{j=0}^{m}(-1)^{j}\binom{m}{j} \frac{(l+m-j-1)!}{(l-j)!} \stackrel{\mathrm{tbs}}{=} 0,
\end{equation}
which is again true by Lemma~\ref{lemma:binomial_sum} for $m\leq l$.
\end{proof}
It remains to prove the following combinatoric identity, which we used in the proof of Lemma \ref{lemma:induction}.
\begin{lemma}
For all $l,m\in\mathbb{N}$, $l\leq m$ it holds that
 \begin{equation}
 0 = \sum_{j=0}^{l}(-1)^{j}\binom{l}{j} \frac{(l+m-j-1)!}{(m-j)!}.
 \label{eq:binomial_sum}
\end{equation}
\label{lemma:binomial_sum}
\end{lemma}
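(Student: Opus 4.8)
The plan is to recognize this identity as nothing more than the statement that the $l$-th finite difference of a polynomial of degree $l-1$ vanishes. First I would note that throughout the summation range $0\le j\le l$ we have $m-j\ge m-l\ge 0$ and $l+m-j-1\ge m-j\ge 0$, so every factorial is well-defined. Assuming $l\ge1$ (the case $l=1$ is trivial: the quotient equals $1$ and the sum is $1-1=0$), write $N=m-j$ so that
\[
\frac{(l+m-j-1)!}{(m-j)!}=\frac{(N+l-1)!}{N!}=\prod_{i=1}^{l-1}(N+i)=\prod_{i=1}^{l-1}(m-j+i),
\]
which, viewed as a function of $j$, is a polynomial $P(j)$ of degree exactly $l-1$.

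Next I would invoke the elementary fact that for the forward-difference operator $(\Delta f)(j)=f(j+1)-f(j)$ one has $\sum_{j=0}^{l}(-1)^{j}\binom{l}{j}f(j)=(-1)^{l}(\Delta^{l}f)(0)$, together with the observation that $\Delta$ lowers the degree of a polynomial by one, so that $\Delta^{l}P\equiv 0$ whenever $\deg P\le l-1$. Applying this with $f=P$ yields $\sum_{j=0}^{l}(-1)^{j}\binom{l}{j}\tfrac{(l+m-j-1)!}{(m-j)!}=0$, which is exactly the claim.

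If a fully self-contained derivation is preferred, the same conclusion follows by induction on $l$: split $\binom{l}{j}=\binom{l-1}{j}+\binom{l-1}{j-1}$, reindex the second sum, and after a short rearrangement of the factorial quotients reduce to two instances of the identity at level $l-1$. Alternatively one can quote a Vandermonde-type identity directly: since $\tfrac{(l+m-j-1)!}{(m-j)!}=(l-1)!\binom{l+m-j-1}{l-1}$, the sum equals $(l-1)!\sum_{j=0}^{l}(-1)^{j}\binom{l}{j}\binom{l+m-j-1}{l-1}=(l-1)!\binom{m-1}{-1}=0$. In all three routes the mathematical content is the same single classical fact, so there is no genuine obstacle here; the only thing to be careful about is the bookkeeping just mentioned (keeping all factorial arguments nonnegative over the summation range, and separating off the degenerate value $l=1$).
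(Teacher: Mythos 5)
Your proof is correct, and it takes a genuinely different route from the paper's. You observe that $\frac{(l+m-j-1)!}{(m-j)!}=\prod_{i=1}^{l-1}(m-j+i)$ is a polynomial of degree $l-1$ in $j$ and then kill it with the $l$-th forward difference, via the standard identity $\sum_{j=0}^{l}(-1)^{j}\binom{l}{j}f(j)=(-1)^{l}(\Delta^{l}f)(0)$; the equivalent reformulation $(l-1)!\sum_{j}(-1)^{j}\binom{l}{j}\binom{l+m-j-1}{l-1}=(l-1)!\binom{m-1}{-1}=0$ is also fine. The paper instead starts from $(1+1/x)^{l}x^{l+m-1}=\sum_{j=0}^{l}\binom{l}{j}x^{l+m-j-1}$, applies $(d/dx)^{l-1}$, and evaluates at $x=-1$, where the left-hand side vanishes because every Leibniz term retains a positive power of $(1+1/x)$; the right-hand side then reproduces the sum up to a sign $(-1)^{m}$. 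Both arguments encode the same underlying fact (a degree-$(l-1)$ polynomial is annihilated by an $l$-fold difference, equivalently $(x+1)^{l}x^{m-1}$ has a root of order $l$ at $x=-1$), but yours is the more direct combinatorial formulation and avoids the calculus detour, while the paper's generating-function version is self-contained without appealing to the finite-difference lemma. Your bookkeeping is also in order: you check that all factorial arguments are nonnegative on the summation range ($m-j\ge m-l\ge 0$ and $l+m-j-1\ge 0$ for $l\ge 1$), which is all that is needed, since the lemma is only ever invoked in the paper with $l,m\ge 1$.
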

\begin{proof}
 For all $l\in\mathbb{N}$ and $x\in\mathbb{R}$ we can employ the binomial formula
 \begin{equation}
  \left(1+\frac{1}{x}\right)^{l} = \sum_{j=0}^l\binom{l}{j}\frac{1}{x^j}.
 \label{eq:binomial_theorem}
 \end{equation}
Multiplication with $x^{l+m-1}$ on both sides yields
 \begin{equation}
  x^{l+m-1}\left(1+\frac{1}{x}\right)^{l} = \sum_{j=0}^l\binom{l}{j}x^{l+m-j-1}.
  \label{eq:binomal_first_step}
 \end{equation}
The procedure is as follows: we take the $(l-1)$-th derivative with respect to $x$ on both sides and then set $x=-1$. The lhs of 
Equation~\eqref{eq:binomal_first_step} is in the following denoted as
\begin{equation}
 \RN{3}(x):=  x^{l+m-1}\left(1+\frac{1}{x}\right)^{l}
 \label{eq:RN3_x}
\end{equation}
and the rhs is 
\begin{equation}
 \RN{4}(x):= \sum_{j=0}^l\binom{l}{j}x^{l+m-j-1}.
 \label{eq:RN4_x}
\end{equation}
Applying the Leibniz rule of differentiation to $\RN{3}$ yields
\begin{equation}
\begin{split}
\left(\frac{d}{dx}\right)^{l-1}\RN{3}(x)=\sum_{j=0}^{l-1}\binom{l-1}{j}&\left[\left(\frac{d}{dx}\right)^{l-1-j}x^{l+m-1}\right]\\[0.2em]\times&\left[\left(\frac
{ d } { dx } \right)^ { j
}\left(1+\frac{1}{x}\right)^{l}\right]
\end{split}
\end{equation}
Each of the terms in this sum contains a factor $(1+1/x)^p$ where $p\geq1, p\in\mathbb{N}$ since we take no more than $l-1$ derivatives. Setting $x=-1$, the 
factor $(1+1/x)^p$ becomes zero, i.e.  
\begin{equation}
 \left(\frac{d}{dx}\right)^{l-1}\RN{3}(-1)=0.
 \label{eq:lhs_after_derv_x-1}
\end{equation}
Taking the $(l-1)$-th derivative of $\RN{4}$ yields
 \begin{equation}
   \left(\frac{d}{dx}\right)^{l-1}\RN{4}(x) = \sum_{j=0}^l\binom{l}{j} \frac{(l+m-j-1)!}{(m-j)!} x^{m-j}.
 \end{equation}
Notice that $m-j\geq 0$ since $m\geq l$ and $j\leq l$. By inserting $x=-1$, we get
\begin{equation}
  \left(\frac{d}{dx}\right)^{l-1}\RN{4}(-1) =  \sum_{j=0}^l\binom{l}{j} \frac{(l+m-j-1)!}{(m-j)!} (-1)^{m-j}.
 \label{eq:rhs_after_derv_x-1}
\end{equation}
Putting the lhs, Equation~\eqref{eq:lhs_after_derv_x-1}, and the rhs, Equation~\eqref{eq:rhs_after_derv_x-1}, together and dividing both sides by $(-1)^m$ 
yields Equation~\eqref{eq:binomial_sum}.
\end{proof}

\section{Proof of general formula for \boldmath{$(0_a|r_a^{2m}|0_b)$}}
\label{app:proof_sra2m}
In this appendix, we prove that Equation~\eqref{eq:sra2ms} is valid for all $ m\in\mathbb{N}$.
\begin{theorem}
Equation~\eqref{eq:sra2ms},
\begin{align*}
 (0_a|r_a^{2m}|0_b) = \frac{\pi^{3/2}\exp(-\rho R_{ab}^2)}{2^mc^{m+3/2}}\sum_{j=0}^mI_j^{\alpha,\beta,m}(R_{ab}^2)  
\end{align*}
 is valid for all $ m\in\mathbb{N}$.
\label{theorem:sra2msint}
\end{theorem}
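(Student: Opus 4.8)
The plan is to work directly from the definition~\eqref{eq:sra2ms_def} and to reduce the three-dimensional integral to a one-dimensional radial one by means of the Gaussian product theorem. With $c=\alpha+\beta$, $\rho=\alpha\beta/c$ and $\mathbf{R}_P=(\alpha\mathbf{R}_a+\beta\mathbf{R}_b)/c$ one has $\exp(-\alpha r_a^2)\exp(-\beta r_b^2)=\exp(-\rho R_{ab}^2)\exp(-c\,r_P^2)$ with $\mathbf{r}_P=\mathbf{r}-\mathbf{R}_P$, and moreover $\mathbf{r}_a=\mathbf{r}_P-\mathbf{d}$, where $\mathbf{d}=(\beta/c)\mathbf{R}_{ab}$ is a fixed vector with $q:=|\mathbf{d}|=(\beta/c)R_{ab}$. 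The prefactor $\exp(-\rho R_{ab}^2)$ thus appears immediately, and by rotational symmetry it remains to evaluate
\[
J_m:=\int_{\mathbb{R}^3}\big|\mathbf{r}_P-\mathbf{d}\big|^{2m}\exp(-c\,r_P^2)\,d\mathbf{r}_P
\]
(a function of $q$ only) and to check that $(0_a|r_a^{2m}|0_b)=\exp(-\rho R_{ab}^2)\,J_m$ equals the closed form~\eqref{eq:sra2ms}.

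For $J_m$ I would pass to spherical coordinates centred at $\mathbf{R}_P$ and perform the angular integration first. Writing $r=|\mathbf{r}_P|$ and $t$ for the cosine of the angle between $\mathbf{r}_P$ and $\mathbf{d}$, an elementary antiderivative gives
\[
\tfrac12\!\int_{-1}^{1}\!\big(r^2+q^2-2rq\,t\big)^m\,dt=\frac{(r+q)^{2(m+1)}-(r-q)^{2(m+1)}}{4\,r\,q\,(m+1)} .
\]
Expanding $(r+q)^{2m+2}-(r-q)^{2m+2}=2\sum_{j=0}^{m}\binom{2m+2}{2j+1}r^{2m+1-2j}q^{2j+1}$ (only odd powers of $q$ survive) and inserting the Gaussian moments $\int_0^\infty r^{2p}e^{-cr^2}dr=(2p-1)!!\sqrt{\pi/c}\,/(2^{p+1}c^{p})$ with $p=m+1-j$ yields
\[
J_m=\frac{\pi^{3/2}}{(m+1)\,c^{1/2}}\sum_{j=0}^{m}\binom{2m+2}{2j+1}\frac{(2m+1-2j)!!}{2^{m+1-j}\,c^{m+1-j}}\;q^{2j}.
\]

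The remaining step is purely combinatorial and is where the actual work lies. Substituting $q^{2j}=\beta^{2j}R_{ab}^{2j}/c^{2j}$, the powers of $c$ in $J_m$ are $c^{-(m+3/2+j)}$, which already agree with those of $I_j^{\alpha,\beta,m}(R_{ab}^2)/(2^m c^{m+3/2})$; comparing the remaining numerical factors, the asserted identity $(0_a|r_a^{2m}|0_b)=\pi^{3/2}e^{-\rho R_{ab}^2}(2^m c^{m+3/2})^{-1}\sum_j I_j^{\alpha,\beta,m}(R_{ab}^2)$ reduces to
\[
\frac{1}{2(m+1)}\binom{2m+2}{2j+1}(2m+1-2j)!!=\frac{(2m+1)!!}{(2j+1)!!}\binom{m}{j},\qquad 0\le j\le m .
\]
I would prove this by expressing the ordinary factorials through double factorials, $(2m+2)!=2^{m+1}(m+1)!\,(2m+1)!!$, $(2j+1)!=2^{j}j!\,(2j+1)!!$ and $(2m+1-2j)!=2^{m-j}(m-j)!\,(2m+1-2j)!!$; then all powers of $2$, the factor $m+1$ and the double factorials $(2m+1-2j)!!$ cancel, leaving exactly $\binom{m}{j}=m!/\big(j!\,(m-j)!\big)$. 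This is an obstacle only in the bookkeeping sense, and with it in hand, recollecting the summand in the form~\eqref{eq:I_jab} completes the proof.

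A shorter route that avoids the angular integral uses $r_a^{2m}\exp(-\alpha r_a^2)=(-\partial_\alpha)^m\exp(-\alpha r_a^2)$: one differentiates the $m=0$ overlap $(0_a0_b)=\pi^{3/2}c^{-3/2}\exp(-\rho R_{ab}^2)$ under the integral sign $m$ times in $\alpha$ (justified by standard domination since $\alpha,\beta>0$) and, writing $-\rho R_{ab}^2=-\beta R_{ab}^2+\beta^{2}R_{ab}^{2}/c$ so that the only $\alpha$-dependence sits in $c$, reduces to the one-variable identity
\[
(-\partial_c)^m\!\big[c^{-3/2}e^{A/c}\big]=\frac{(2m+1)!!}{2^m c^{m+3/2}}\,e^{A/c}\sum_{j=0}^{m}\frac{2^{j}}{(2j+1)!!}\binom{m}{j}\frac{A^{j}}{c^{j}},\qquad A:=\beta^{2}R_{ab}^{2},
\]
which follows by induction on $m$; the inductive step rests on $(-\partial_c)[c^{-p}e^{A/c}]=(p\,c^{-p-1}+A\,c^{-p-2})e^{A/c}$ together with Pascal's rule and the elementary relation $(m+1-j)\binom{m}{j-1}=j\binom{m}{j}$.
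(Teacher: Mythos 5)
Your proposal is correct, but it takes a genuinely different route from the paper's. The paper expands $r_a^{2m}=\bigl[r_p^2+2\mathbf{r}_p\cdot\mathbf{R}_{pa}+R_{pa}^2\bigr]^m$ by the multinomial theorem, integrates each term separately in spherical coordinates, reorganizes the resulting double sum by powers of $R_{ab}^2$ via a Kronecker delta, and closes the argument with a combinatorial identity (Lemma~\ref{lemma:trinomial_sum}) whose proof requires Gauss' hypergeometric theorem for ${}_2F_1(a,b;c;1)$ together with the duplication formula for the Gamma function. You instead carry out the angular integration first in closed form, $\tfrac12\int_{-1}^1(r^2+q^2-2rqt)^m\,dt=\bigl[(r+q)^{2m+2}-(r-q)^{2m+2}\bigr]/\bigl(4rq(m+1)\bigr)$, which collapses the trinomial into a binomial difference; the summation that the paper delegates to the hypergeometric machinery is thereby performed by an elementary antiderivative, and the identity you are left with, $\frac{1}{2(m+1)}\binom{2m+2}{2j+1}(2m+1-2j)!!=\frac{(2m+1)!!}{(2j+1)!!}\binom{m}{j}$, I have checked and it follows by the factorial-to-double-factorial conversions you indicate. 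Your second route, writing $r_a^{2m}e^{-\alpha r_a^2}=(-\partial_\alpha)^m e^{-\alpha r_a^2}$ and differentiating the known $s$--$s$ overlap under the integral sign, is shorter still and reduces the whole theorem to a one-variable induction; I verified the stated derivative identity for $m=0,1$ and the inductive step is routine. Both variants are sound; the only cosmetic caveat is the removable singularity of your angular antiderivative at $q=0$ (coincident centers) and at $r=0$, which is harmless because the expanded polynomial form is what actually enters the radial integral.
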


\begin{proof}
The matrix element~$(0_a|r_a^{2m}|0_b)$ as given in Equation~\eqref{eq:sra2ms_def} can be rewritten as
\begin{equation}
    (0_a|r_a^{2m}|0_b) = \exp(-\rho R_{ab}^2)\int \exp(-c r_p^2) r_a^{2m}d\mathbf{r},\label{eq:eqinlemmaV}
\end{equation}
where $\rho =\alpha\beta/c$, $c = \alpha + \beta$, $\mathbf{r}_p=\mathbf{r}-\mathbf{R}_p$ and 
\begin{equation}
  \mathbf{R}_p = \frac{\alpha \mathbf{R}_a + \beta \mathbf{R}_b}{c}\,.
\end{equation}
 This is clear by inserting Equation~\eqref{eq:SHG_primitive} and~$Y_{0,0}(\theta,\phi)=\frac{1}{\sqrt{4\pi}}$ into Equation~\eqref{eq:sra2ms_def} and applying 
the Gaussian product rule
\begin{equation}
 \exp(-\alpha r_a^2)\exp(-\beta r_b^2)=\exp(-\rho R_{ab}^2)\exp(-c r_p^2).
\end{equation}
Now we define the integral over a primitive $s$ function at center $\mathbf{R}_p$ multiplied with the operator $r_a^{2m}$ as
 \begin{equation}
  \RN{5}_m := \int \exp(-c r_p^2) r_a^{2m}d\mathbf{r},
 \end{equation}
 where $m\in\mathbb{N}$. Note that we have dropped the indices writing $\RN{5}_m$ instead of $\RN{5}_m^{\alpha,\beta,\mathbf{r}_a,\mathbf{r}_b}$. In the 
remainder of this proof, we explicitly calculate this Gaussian integral.

 \medskip
 
We start by rewriting the operator $r_a^{2m}$ in expression $\RN{5}_m$ in terms of $\mathbf{R}_{pa} = \mathbf{R}_p-\mathbf{R}_a$,
\begin{align}
   \RN{5}_m  & = \int \exp(-c r_p^2) \left|\mathbf{r}-\mathbf{R}_p+\mathbf{R}_p-\mathbf{R}_a\right|^{2m}d\mathbf{r} \\
             & = \int \exp(-c r_p^2) \left[r_p^2+2\mathbf{r}_p\cdot\mathbf{R}_{pa}+R_{pa}^2\right]^{m}d\mathbf{r}_p.
\end{align}
where $R_{pa}=|\mathbf{R}_{pa}|$. Employing a trinomial expansion yields
\begin{equation}
  \begin{split}
       \RN{5}_m  =  \int & \exp(-c r_p^2) \sum_{\substack{i+j+k=m\\i,j,k\in\mathbb{N}}} \binom{m}{i,j,k} 
r_p^{2i}\\&\times2^j(\mathbf{r}_p\cdot\mathbf{R}_{pa})^jR_{pa}^{2k}d\mathbf{r}_p,
  \end{split} 
  \label{eq:after_triexp}
 \end{equation}
where the multinomial coefficient is defined as
 \begin{align}
\binom{m}{i,j,k} := \frac{m!}{i!j!k!}\,.
 \label{eq:trinomial_expansion_coeff}
 \end{align}
Introducing the unit vector $\mathbf{\hat{R}}_{pa}$ in direction of $\mathbf{R}_{pa}$ yields
\begin{equation}
  \begin{split}
       \RN{5}_m  = &\sum_{i+j+k=m} 2^j\binom{m}{i,j,k}R_{pa}^{2k} |\mathbf{R}_{pa}|^j \\&\times\int \exp(-c r_p^2) 
r_p^{2i}(\mathbf{r}_p\cdot\mathbf{\hat{R}}_{pa})^jd\mathbf{r}_p.
   \end{split} 
 \end{equation}
Because of rotational symmetry, the integral can not depend on the 
direction of $\mathbf{R}_{pa}$. So without loss of generality we can take $\mathbf{\hat{R}}_{pa} = \mathbf{e}_z$, where $\mathbf{e}_z$ is the unit vector in 
$z$ direction.
In order to remove parameter $c$ from the integral, we substitute $\mathbf{r}_c := \sqrt{c}\mathbf{r}_p$,  
\begin{equation}
 \begin{split}
 \RN{5}_m = & \sum_{i+j+k=m} 2^j\binom{m}{i,j,k} R_{pa}^{2k+j} c^{-\frac{3}{2}-i-\frac{j}{2}}\\&\quad\times 
\int \exp(-r_c^2) 
r_c^{2i}(\mathbf{r}_c \cdot \mathbf{e}_z)^jd\mathbf{r}_c.
 \end{split}
 \label{eq:rn3_subs}
\end{equation}
$\RN{5}_m$ is non-zero only for even $j$ (since for odd $j$ the integrand is odd with respect to the reflection of $\mathbf{r}_c$ onto 
$-\mathbf{r}_c$) and so we can rewrite Equation~\eqref{eq:rn3_subs} as follows,
\begin{equation}
 \begin{split}
 \RN{5}_m = &\sum_{i+2j+k=m} 2^{2j}\binom{m}{i,2j,k} R_{pa}^{2k+2j} c^{-\frac{3}{2}-i-j}\\&\quad\times  \int \exp(-r_c^2) 
r_c^{2i}(\mathbf{r}_c \cdot \mathbf{e}_z)^{2j}d\mathbf{r}_c.
 \end{split}
\end{equation}
We introduce spherical coordinates with $\theta$ being the angle between $\mathbf{r}_c$ and the $z$-axis, i.e. $\mathbf{r}_c\cdot \mathbf{e}_z = r_c 
\cos\theta$,
\begin{equation}
 \begin{split}
 \RN{5}_m &=  \sum_{i+2j+k=m} 2^{2j}\binom{m}{i,2j,k} R_{pa}^{2k+2j} c^{-\frac{3}{2}-i-j}  \int_0^{2\pi}d\phi\\& \quad\times \int_0^{\pi}{\sin\theta 
(\cos\theta)^{2j}d\theta} \int_0^{\infty} r_c^2 \exp(-r_c^2)r_c^{2i}r_c^{2j}dr_c\,.
 \end{split}
\end{equation}
The integrals over $\theta$, $\phi$ and $r_c$ are evaluated explicitly. The integral over $\theta$ is obtained by substitution and the 
integral over $r_c$ is tabulated, for example, in Ref~\onlinecite{Bronstein_book}.
\begin{equation}
 \begin{split}
 \RN{5}_m  =  & \sum_{i+2j+k=m} 2^j\binom{m}{i,2j,k} R_{pa}^{2k+2j} c^{-\frac{3}{2}-i-j}\\&\times \frac{\pi^{3/2}}{1+2j}\frac{(1+2i+2j)!!}{2^{i}}\,.
\end{split}
\end{equation}
Employing that $\mathbf{R}_{pa}= \beta(\mathbf{R}_b-\mathbf{R}_a)/c$ yields
\begin{equation}
 \begin{split}
 \RN{5}_m = & \sum_{i+2j+k=m} 2^{j}\binom{m}{i,2j,k}c^{-\frac{3}{2}-i-3j-2k} \frac{\pi^{3/2}}{1+2j}\\&\quad\times\frac{(1+2i+2j)!!}{2^{i}} \beta^{2k+2j} 
R_{ab}^{2k+2j}\,.
 \end{split}
\end{equation}
In order to sort the sum by powers of $R_{ab}^{2}$, we introduce the Kronecker delta,
\begin{equation}
 \begin{split}
 \RN{5}_m = {}& \sum_{l=0}^m\sum_{i+2j+k=m} \delta_{l,k+j}2^{j}\binom{m}{i,2j,k}c^{-\frac{3}{2}-m-l} 
\\[5pt]&\times\frac{\pi^{3/2}}{1+2j}\frac{(1+2i+2j)!!}{2^{i}} \beta^{2l}R_{ab}^{2l}\,, 
 \end{split}
\end{equation}
where we have also used that $m=i+2j+k$ and $k =l-j$ to manipulate the exponent of $c$. In the next step, the $k$-sum is eliminated by replacing $k$ by $l-j$
\begin{equation}
 \begin{split}
 \RN{5}_m = & {} \sum_{l=0}^m\sum_{\substack{i+j=m-l\\i,j\ge0;j\le l}} 2^{j}\binom{m}{i,2j,l-j}c^{-\frac{3}{2}-m-l}\\&\times 
\frac{\pi^{3/2}}{1+2j}\frac{(1+2i+2j)!!}{2^{i}}\beta^{2l} R_{ab}^{2l}\,.
 \end{split}
\end{equation}
Then the sum over $i=0,...,m$ is eliminated due to the
constraint $i=m-j-l$,
\begin{equation}\label{eq:equation16}
 \begin{split}
 \RN{5}_m = & {} \sum_{l=0}^m\sum_{j\ge0}^{\mathrm{min}(l,m-l)} 
2^{j}\binom{m}{m-j-l,2j,l-j}c^{-\frac{3}{2}-m-l}\\[5pt]&\times\frac{\pi^{3/2}}{1+2j}\frac{(1+2m-2l)!!}{2^{m-j-l}} \beta^{2l}R_{ab}^{2l}\,.
 \end{split}
\end{equation}
To complete the proof, we have to show that \eqref{eq:equation16} can be simplified as 
\begin{equation}
 \begin{split}
   \sum_{l=0}^m&\sum_{j\ge0}^{\mathrm{min}(l,m-l)} 2^{j}\binom{m}{m-j-l,2j,l-j}c^{-\frac{3}{2}-m-l}\\&\qquad\times 
\frac{\pi^{3/2}}{1+2j}\frac{(1+2m-2l)!!}{2^{m-j-l}} 
\beta^{2l} 
R_{ab}^{2l}\\[5pt]
       & \stackrel{tbs}{=}\frac{\pi^{3/2}}{2^mc^{m+3/2}} \sum_{l=0}^m 2^l\frac{(2m+1)!! }{(2l+1)!!} \binom{m}{l}\frac{\beta^{2l}}{c^l}R_{ab}^{2l}.
 \end{split}
\end{equation}
It is sufficient to show that each summand $l$ on the lhs is identical to the summand $l$ on the rhs, i.e. after some reduction of both sides we have
\begin{equation}
 \begin{split}
   &  (1+2m-2l)!! \sum_{j=0}^{\text{min}(l,m-l)}2^{2j}\binom{m}{m-j-l,2j,l-j}\frac{1}{1+2j} \\& \stackrel{tbs}{=}\frac{(2m+1)!! }{(2l+1)!!} \binom{m}{l}.
 \end{split}
 \label{eq:tbspiecewicesummands}
\end{equation}
This is easily shown employing Lemma~\ref{lemma:trinomial_sum} and
the identity $(2n+1)!! = \frac{(2n+1)!}{2^nn!}$.
\end{proof}
The following identity was used for the proof of Theorem~\ref{theorem:sra2msint}.
\begin{lemma}
 It holds for all $m,l\in\mathbb{N}$ and $l\leq m$ that
 \begin{equation}\label{eq:eqVI}\begin{split}
  & \sum_{j=0}^{\text{min}(l,m-l)}2^{2j}\binom{m}{m-j-l,2j,l-j}\frac{1}{1+2j}\\&\quad = \frac{(2m+1)!}{(2l+1)!(1+2m-2l)!}.
 \end{split}\end{equation}
\label{lemma:trinomial_sum}
\end{lemma}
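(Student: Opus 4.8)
The plan is to recognize the left-hand side of \eqref{eq:eqVI} as the coefficient of a power of an auxiliary variable in an integral that can be computed in closed form. The device that removes the inconvenient factor $1/(1+2j)$ is the elementary identity $\frac{1}{1+2j}=\int_0^1 t^{2j}\,dt$, so that
\[
\sum_{j=0}^{\min(l,m-l)}2^{2j}\binom{m}{m-j-l,2j,l-j}\frac{1}{1+2j}
=\int_0^1\sum_{j=0}^{\min(l,m-l)}\binom{m}{m-j-l,2j,l-j}(2t)^{2j}\,dt .
\]
The key observation is that the integrand is exactly $[s^{2l}](1+2ts+s^2)^m$. Indeed, expanding $(1+2ts+s^2)^m=\sum_{p+q+r=m}\binom{m}{p,q,r}(2t)^q s^{q+2r}$ and collecting the monomials with $q+2r=2l$ forces $q$ to be even, $q=2j$, $r=l-j$, $p=m-l-j$, with $0\le j\le\min(l,m-l)$; the corresponding coefficient is $\binom{m}{m-j-l,2j,l-j}(2t)^{2j}$, precisely the summand above. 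Hence the left-hand side equals $\int_0^1[s^{2l}](1+2ts+s^2)^m\,dt=[s^{2l}]\int_0^1(1+2ts+s^2)^m\,dt$, where the interchange of the coefficient-extraction functional with the $t$-integral is legitimate because $(1+2ts+s^2)^m$ is, for each $t$, a polynomial in $s$ of degree $2m$ with polynomial (hence continuous) coefficients in $t$.

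Next I would evaluate the remaining one-dimensional integral by the substitution $u=1+2ts+s^2$ (for fixed $s>0$), which gives
\[
\int_0^1(1+2ts+s^2)^m\,dt=\frac{1}{2s}\int_{1+s^2}^{(1+s)^2}u^m\,du
=\frac{(1+s)^{2m+2}-(1+s^2)^{m+1}}{2(m+1)\,s}.
\]
Extracting $[s^{2l}]$ of this expression is the same as taking $[s^{2l+1}]$ of the numerator. Since $(1+s^2)^{m+1}$ contains only even powers of $s$, only $(1+s)^{2m+2}$ contributes, so
\[
\mathrm{LHS}=\frac{1}{2(m+1)}\binom{2m+2}{2l+1}
=\frac{(2m+2)!}{2(m+1)(2l+1)!(2m+1-2l)!}
=\frac{(2m+1)!}{(2l+1)!(1+2m-2l)!},
\]
using $(2m+2)!=2(m+1)(2m+1)!$. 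This is the claimed right-hand side; note that $l\le m$ guarantees $2l+1\le 2m+2$, so the binomial coefficient is the expected nonzero one, and the edge cases $l=0$ and $l=m$ check out directly.

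Once the integral representation is set up, the argument is routine: a substitution, a binomial expansion, and a coefficient comparison. I therefore expect the only real "obstacle" to be one of recognition rather than of computation — namely spotting that the three-term sum weighted by $1/(1+2j)$ is nothing but the $t$-integral of the coefficient of $s^{2l}$ in $(1+2ts+s^2)^m$. The minor technical point, the exchange of the finite sum (and of $[s^{2l}]\,\cdot$) with $\int_0^1 dt$, is immediate because everything in sight is a polynomial of bounded degree in $s$ with continuous $t$-dependence, so no convergence issues arise.
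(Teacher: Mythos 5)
Your proof is correct, and it takes a genuinely different route from the paper's. The paper rewrites the sum in terms of Pochhammer symbols, identifies it as $\binom{m}{l}\,{}_2F_1\!\left(-l,-(m-l);\tfrac{3}{2};1\right)$, and then evaluates it via Gauss' hypergeometric theorem together with the Legendre duplication formula for the Gamma function. You instead eliminate the factor $1/(1+2j)$ with the Beta-type representation $\frac{1}{1+2j}=\int_0^1 t^{2j}\,dt$, recognize the resulting trinomial sum as $[s^{2l}](1+2ts+s^2)^m$, evaluate the $t$-integral in closed form by the substitution $u=1+2ts+s^2$, and read off the coefficient from $(1+s)^{2m+2}$ (the $(1+s^2)^{m+1}$ term dropping out by parity). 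All steps check out: the index bookkeeping $q=2j$, $r=l-j$, $p=m-l-j$ with $0\le j\le\min(l,m-l)$ matches the multinomial coefficient exactly; the numerator $(1+s)^{2m+2}-(1+s^2)^{m+1}$ vanishes at $s=0$, so division by $s$ and extraction of $[s^{2l}]$ are legitimate polynomial operations; and $(2m+2)!=2(m+1)(2m+1)!$ gives the stated right-hand side. The trade-off is that your argument is entirely elementary and self-contained (no special-function identities needed), at the cost of the one non-obvious recognition step, whereas the paper's hypergeometric route is shorter once that machinery is taken for granted and adapts more mechanically to neighboring sums of the same shape.
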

\begin{proof}
The hypergeometric function~$_2F_1$ is defined as 
\begin{align}
_2F_1(a,b;c;z) = \sum_{j=0}^{\infty} \frac{(a)_j(b)_j}{(c)_j}\frac{z^j}{j!}\,,\label{eq:def_hypergeo}
\end{align}
for $a,b,c,z\in\mathbb{R}, |z|<1$. Note that this series is also convergent for $z=1$, if  $c >0$ and $c > \mathrm{max}(a,b,(a+b))$. The notation $(q)_j$ 
in Equation~\eqref{eq:def_hypergeo} is the Pochhammer symbol which is defined for $j\in\mathbb{N}$ as
\begin{align}
(q)_j = \left\{ 
\begin{array}{lll}
1 && j=0 \\[0.5em]
q(q+1)\ldots (q+j-1)&& j\ge 1
\end{array}
\right.\,.
\end{align}
For negative integers~$q=-n,n\in\mathbb{N}$, the Pochhammer symbol simplifies to
\begin{align}
(-n)_j = 
\left\{ 
\begin{array}{lll}
1 && j=0 \\[0.5em]
(-1)^{j}n!/(n-j)!&& 1\le j\le n\\[0.5em]
0 && j \ge n+1
\end{array}
\right.\,.\label{eq:pochhammer_neg_int}
\end{align}
For positive real values $x\in \mathbb{R}_{>0}$, the Pochhammer symbol is given by
\begin{align}
(x)_j = \frac{\Gamma(x+j)}{\Gamma(x)}\,,\label{eq:pochhammer_pos}
\end{align}
where the Gamma function for~$t\in\mathbb{R}_{>0}$ is defined as
\begin{align}
\Gamma(t)= \int_0^\infty x^{t-1}e^{-x}\,dx\,.
\end{align}
For positive integers~$n\in\mathbb{N}_{>0}$, the Gamma function evaluates to
\begin{align}
\Gamma(n) = (n-1)!\,.\label{eq:gamma_integer}
\end{align}
Moreover, a duplication identity\cite{handbook_gamma_dup} holds for $t\in\mathbb{R}_{>0}$,
\begin{align}
\Gamma\left(t+{\tfrac{1}{2}}\right) = \frac{2^{1-2t}\sqrt{\pi}\,\Gamma(2t)}{\Gamma(t)}\,.\label{eq:gamma_dupl}
\end{align}
We denote the lhs of \eqref{eq:eqVI} by $\RN{6}_{m,l}$,
\begin{equation}
  \RN{6}_{m,l} := \sum_{j=0}^{\text{min}(l,m-l)}2^{2j}\binom{m}{m-j-l,2j,l-j}\frac{1}{1+2j},\label{eq:def_VI}
 \end{equation}
and rewrite Equation~\eqref{eq:def_VI} recalling $l\le m$:
\begin{align}
\RN{6}_{m,l}=
\binom{m}{l} \sum_{j=0}^{\mathrm{min}(l,m-l)}  \frac{\frac{l!}{(l-j)!}\frac{(m-l)!}{(m-l-j)!}}{2^{-2j}\frac{(2j+1)!}{j!}}\frac{1}{j!}.
\label{eq:VI_reformulated}
\end{align}
Rewriting Equation~\eqref{eq:VI_reformulated} yields
\begin{eqnarray}
\hspace{-4em}\RN{6}_{m,l}&\overset{\eqref{eq:pochhammer_neg_int}-\eqref{eq:gamma_dupl}}{=}&
\binom{m}{l} \sum_{j=0}^{\infty} \frac{(-l)_j(-(m-l))_j}{\left(\frac{3}{2}\right)_j}\frac{1}{j!}\label{eq:insert_pochhammer}
\\[0.5em]
&\overset{\eqref{eq:def_hypergeo}}{=}&
\binom{m}{l} \,_2F_1 \left(-l,-(m-l);\tfrac{3}{2};1\right)\,.\label{eq:intermed_proof_VI}
\end{eqnarray}
Since $(-l)_j=0$ for $j>l$ and $(-(m-l))_j=0$ for $j>m-l$, see Equation~\eqref{eq:pochhammer_neg_int}, we can replace the upper bound 
$\mathrm{min}(l,m-l)$ by $\infty$ in Equation~\eqref{eq:insert_pochhammer}.
We use  Gauss' hypergeometric theorem\cite{hypergeometric_series_book} with $a,b\in\mathbb{R}$, $c >0$ and $c > \mathrm{max}(a,b,(a+b))$,
\begin{align}
_2F_1(a,b;c;1) = \frac{\Gamma(c)\Gamma(c-a-b)}{\Gamma(c-a)\Gamma(c-b)}\,,\label{eq:Gauss_hypergeometric}
\end{align}
to evaluate $_2F_1 \left(-l,-(m-l);\frac{3}{2};1\right)$ from Equation~\eqref{eq:intermed_proof_VI}:
\begin{align}
_2F_1 &\left(-l,-(m-l);\tfrac{3}{2};1\right)\nonumber
\\[0.5em]
&\overset{\eqref{eq:Gauss_hypergeometric}}{=}
\frac{\Gamma\left(\frac{3}{2}\right)\Gamma\left(\frac{3}{2}+m\right)}{\Gamma\left(\frac{3}{2}+l\right)\Gamma\left(\frac{3}{2}+m-l\right)}
\\[0.5em]
&\overset{\eqref{eq:gamma_dupl}}{=}\frac{\displaystyle\frac{\Gamma(2)}{\Gamma(1)}\frac{\Gamma(2m+2)}{\Gamma(m+1)}}{\displaystyle\frac{\Gamma(2l+2)}{\Gamma(l+1)}
\frac{\Gamma(2m-2l+2)}{\Gamma(m-l+1)}}
\\[0.5em]
&\overset{\eqref{eq:gamma_integer}}{=}  \binom{m}{l}^{-1}\frac{(2m+1)!}{(2l+1)!(2m-2l+1)!}\,.\label{eq:eval_hypergeo}
\end{align}
By inserting Equation~\eqref{eq:eval_hypergeo} into Equation~\eqref{eq:intermed_proof_VI}, we obtain
\begin{align}
 \RN{6}_{m,l} = \frac{(2m+1)!}{(2l+1)!(1+2m-2l)!}\,.
\end{align}
\end{proof}

\bibliography{shg_integrals}

\begin{thebibliography}{77}%
\makeatletter
\providecommand \@ifxundefined [1]{%
 \@ifx{#1\undefined}
}%
\providecommand \@ifnum [1]{%
 \ifnum #1\expandafter \@firstoftwo
 \else \expandafter \@secondoftwo
 \fi
}%
\providecommand \@ifx [1]{%
 \ifx #1\expandafter \@firstoftwo
 \else \expandafter \@secondoftwo
 \fi
}%
\providecommand \natexlab [1]{#1}%
\providecommand \enquote  [1]{``#1''}%
\providecommand \bibnamefont  [1]{#1}%
\providecommand \bibfnamefont [1]{#1}%
\providecommand \citenamefont [1]{#1}%
\providecommand \href@noop [0]{\@secondoftwo}%
\providecommand \href [0]{\begingroup \@sanitize@url \@href}%
\providecommand \@href[1]{\@@startlink{#1}\@@href}%
\providecommand \@@href[1]{\endgroup#1\@@endlink}%
\providecommand \@sanitize@url [0]{\catcode `\\12\catcode `\$12\catcode
  `\&12\catcode `\#12\catcode `\^12\catcode `\_12\catcode `\%12\relax}%
\providecommand \@@startlink[1]{}%
\providecommand \@@endlink[0]{}%
\providecommand \url  [0]{\begingroup\@sanitize@url \@url }%
\providecommand \@url [1]{\endgroup\@href {#1}{\urlprefix }}%
\providecommand \urlprefix  [0]{URL }%
\providecommand \Eprint [0]{\href }%
\providecommand \doibase [0]{http://dx.doi.org/}%
\providecommand \selectlanguage [0]{\@gobble}%
\providecommand \bibinfo  [0]{\@secondoftwo}%
\providecommand \bibfield  [0]{\@secondoftwo}%
\providecommand \translation [1]{[#1]}%
\providecommand \BibitemOpen [0]{}%
\providecommand \bibitemStop [0]{}%
\providecommand \bibitemNoStop [0]{.\EOS\space}%
\providecommand \EOS [0]{\spacefactor3000\relax}%
\providecommand \BibitemShut  [1]{\csname bibitem#1\endcsname}%
\let\auto@bib@innerbib\@empty
\bibitem [{\citenamefont {Ahlrichs}\ \emph {et~al.}(1989)\citenamefont
  {Ahlrichs}, \citenamefont {B\"ar}, \citenamefont {H\"aser}, \citenamefont
  {Horn},\ and\ \citenamefont {K\"olmel}}]{turbomole}%
  \BibitemOpen
  \bibfield  {author} {\bibinfo {author} {\bibfnamefont {R.}~\bibnamefont
  {Ahlrichs}}, \bibinfo {author} {\bibfnamefont {M.}~\bibnamefont {B\"ar}},
  \bibinfo {author} {\bibfnamefont {M.}~\bibnamefont {H\"aser}}, \bibinfo
  {author} {\bibfnamefont {H.}~\bibnamefont {Horn}}, \ and\ \bibinfo {author}
  {\bibfnamefont {C.}~\bibnamefont {K\"olmel}},\ }\bibfield  {title} {\enquote
  {\bibinfo {title} {{E}lectronic structure calculations on workstation
  computers: {T}he program system turbomole},}\ }\href {\doibase
  http://dx.doi.org/10.1016/0009-2614(89)85118-8} {\bibfield  {journal}
  {\bibinfo  {journal} {Chem. Phys. Lett.}\ }\textbf {\bibinfo {volume}
  {162}},\ \bibinfo {pages} {165--169} (\bibinfo {year} {1989})}\BibitemShut
  {NoStop}%
\bibitem [{\citenamefont {Hutter}\ \emph {et~al.}(2014)\citenamefont {Hutter},
  \citenamefont {Iannuzzi}, \citenamefont {Schiffmann},\ and\ \citenamefont
  {VandeVondele}}]{Hutter2014}%
  \BibitemOpen
  \bibfield  {author} {\bibinfo {author} {\bibfnamefont {J.}~\bibnamefont
  {Hutter}}, \bibinfo {author} {\bibfnamefont {M.}~\bibnamefont {Iannuzzi}},
  \bibinfo {author} {\bibfnamefont {F.}~\bibnamefont {Schiffmann}}, \ and\
  \bibinfo {author} {\bibfnamefont {J.}~\bibnamefont {VandeVondele}},\
  }\bibfield  {title} {\enquote {\bibinfo {title} {{C}{P}2{K}: {A}tomistic
  simulations of condensed matter systems},}\ }\href {\doibase
  10.1002/wcms.1159} {\bibfield  {journal} {\bibinfo  {journal} {WIREs Comput
  Mol Sci}\ }\textbf {\bibinfo {volume} {4}},\ \bibinfo {pages} {15--25}
  (\bibinfo {year} {2014})}\BibitemShut {NoStop}%
\bibitem [{dal()}]{dalton}%
  \BibitemOpen
  \href@noop {} {\enquote {\bibinfo {title} {{D}alton, a molecular electronic
  structure program, {R}elease {D}alton2016.{X} (2015), see
  http://daltonprogram.org},}\ }\BibitemShut {NoStop}%
\bibitem [{\citenamefont {Werner}\ \emph {et~al.}(2012)\citenamefont {Werner},
  \citenamefont {Knowles}, \citenamefont {Knizia}, \citenamefont {Manby},\ and\
  \citenamefont {Sch{\"u}tz}}]{molpro}%
  \BibitemOpen
  \bibfield  {author} {\bibinfo {author} {\bibfnamefont {H.-J.}\ \bibnamefont
  {Werner}}, \bibinfo {author} {\bibfnamefont {P.~J.}\ \bibnamefont {Knowles}},
  \bibinfo {author} {\bibfnamefont {G.}~\bibnamefont {Knizia}}, \bibinfo
  {author} {\bibfnamefont {F.~R.}\ \bibnamefont {Manby}}, \ and\ \bibinfo
  {author} {\bibfnamefont {M.}~\bibnamefont {Sch{\"u}tz}},\ }\bibfield  {title}
  {\enquote {\bibinfo {title} {{Molpro: a general-purpose quantum chemistry
  program package}},}\ }\href@noop {} {\bibfield  {journal} {\bibinfo
  {journal} {WIREs Comput Mol Sci}\ }\textbf {\bibinfo {volume} {2}},\ \bibinfo
  {pages} {242--253} (\bibinfo {year} {2012})}\BibitemShut {NoStop}%
\bibitem [{\citenamefont {Schmidt}\ \emph {et~al.}(1993)\citenamefont
  {Schmidt}, \citenamefont {Baldridge}, \citenamefont {Boatz}, \citenamefont
  {Elbert}, \citenamefont {Gordon}, \citenamefont {Jensen}, \citenamefont
  {Koseki}, \citenamefont {Matsunaga}, \citenamefont {Nguyen}, \citenamefont
  {Su}, \citenamefont {Windus}, \citenamefont {Dupuis},\ and\ \citenamefont
  {Montgomery}}]{gamess}%
  \BibitemOpen
  \bibfield  {author} {\bibinfo {author} {\bibfnamefont {M.~W.}\ \bibnamefont
  {Schmidt}}, \bibinfo {author} {\bibfnamefont {K.~K.}\ \bibnamefont
  {Baldridge}}, \bibinfo {author} {\bibfnamefont {J.~A.}\ \bibnamefont
  {Boatz}}, \bibinfo {author} {\bibfnamefont {S.~T.}\ \bibnamefont {Elbert}},
  \bibinfo {author} {\bibfnamefont {M.~S.}\ \bibnamefont {Gordon}}, \bibinfo
  {author} {\bibfnamefont {J.~H.}\ \bibnamefont {Jensen}}, \bibinfo {author}
  {\bibfnamefont {S.}~\bibnamefont {Koseki}}, \bibinfo {author} {\bibfnamefont
  {N.}~\bibnamefont {Matsunaga}}, \bibinfo {author} {\bibfnamefont {K.~A.}\
  \bibnamefont {Nguyen}}, \bibinfo {author} {\bibfnamefont {S.}~\bibnamefont
  {Su}}, \bibinfo {author} {\bibfnamefont {T.~L.}\ \bibnamefont {Windus}},
  \bibinfo {author} {\bibfnamefont {M.}~\bibnamefont {Dupuis}}, \ and\ \bibinfo
  {author} {\bibfnamefont {J.~A.}\ \bibnamefont {Montgomery}},\ }\bibfield
  {title} {\enquote {\bibinfo {title} {{G}eneral {A}tomic and {M}olecular
  {E}lectronic {S}tructure {S}ystem},}\ }\href@noop {} {\bibfield  {journal}
  {\bibinfo  {journal} {J. Comput. Chem.}\ }\textbf {\bibinfo {volume} {14}},\
  \bibinfo {pages} {1347--1363} (\bibinfo {year} {1993})}\BibitemShut {NoStop}%
\bibitem [{\citenamefont {Frisch}\ \emph {et~al.}()\citenamefont {Frisch},
  \citenamefont {Trucks}, \citenamefont {Schlegel}, \citenamefont {Scuseria},
  \citenamefont {Robb}, \citenamefont {Cheeseman}, \citenamefont {Scalmani},
  \citenamefont {Barone}, \citenamefont {Mennucci}, \citenamefont {Petersson},
  \citenamefont {Nakatsuji}, \citenamefont {Caricato}, \citenamefont {Li},
  \citenamefont {Hratchian}, \citenamefont {Izmaylov}, \citenamefont {Bloino},
  \citenamefont {Zheng}, \citenamefont {Sonnenberg}, \citenamefont {Hada},
  \citenamefont {Ehara}, \citenamefont {Toyota}, \citenamefont {Fukuda},
  \citenamefont {Hasegawa}, \citenamefont {Ishida}, \citenamefont {Nakajima},
  \citenamefont {Honda}, \citenamefont {Kitao}, \citenamefont {Nakai},
  \citenamefont {Vreven}, \citenamefont {Montgomery}, \citenamefont {Peralta},
  \citenamefont {Ogliaro}, \citenamefont {Bearpark}, \citenamefont {Heyd},
  \citenamefont {Brothers}, \citenamefont {Kudin}, \citenamefont {Staroverov},
  \citenamefont {Kobayashi}, \citenamefont {Normand}, \citenamefont
  {Raghavachari}, \citenamefont {Rendell}, \citenamefont {Burant},
  \citenamefont {Iyengar}, \citenamefont {Tomasi}, \citenamefont {Cossi},
  \citenamefont {Rega}, \citenamefont {Millam}, \citenamefont {Klene},
  \citenamefont {Knox}, \citenamefont {Cross}, \citenamefont {Bakken},
  \citenamefont {Adamo}, \citenamefont {Jaramillo}, \citenamefont {Gomperts},
  \citenamefont {Stratmann}, \citenamefont {Yazyev}, \citenamefont {Austin},
  \citenamefont {Cammi}, \citenamefont {Pomelli}, \citenamefont {Ochterski},
  \citenamefont {Martin}, \citenamefont {Morokuma}, \citenamefont {Zakrzewski},
  \citenamefont {Voth}, \citenamefont {Salvador}, \citenamefont {Dannenberg},
  \citenamefont {Dapprich}, \citenamefont {Daniels}, \citenamefont {Farkas},
  \citenamefont {Foresman}, \citenamefont {Ortiz}, \citenamefont {Cioslowski},\
  and\ \citenamefont {Fox}}]{g09}%
  \BibitemOpen
  \bibfield  {author} {\bibinfo {author} {\bibfnamefont {M.~J.}\ \bibnamefont
  {Frisch}}, \bibinfo {author} {\bibfnamefont {G.~W.}\ \bibnamefont {Trucks}},
  \bibinfo {author} {\bibfnamefont {H.~B.}\ \bibnamefont {Schlegel}}, \bibinfo
  {author} {\bibfnamefont {G.~E.}\ \bibnamefont {Scuseria}}, \bibinfo {author}
  {\bibfnamefont {M.~A.}\ \bibnamefont {Robb}}, \bibinfo {author}
  {\bibfnamefont {J.~R.}\ \bibnamefont {Cheeseman}}, \bibinfo {author}
  {\bibfnamefont {G.}~\bibnamefont {Scalmani}}, \bibinfo {author}
  {\bibfnamefont {V.}~\bibnamefont {Barone}}, \bibinfo {author} {\bibfnamefont
  {B.}~\bibnamefont {Mennucci}}, \bibinfo {author} {\bibfnamefont {G.~A.}\
  \bibnamefont {Petersson}}, \bibinfo {author} {\bibfnamefont {H.}~\bibnamefont
  {Nakatsuji}}, \bibinfo {author} {\bibfnamefont {M.}~\bibnamefont {Caricato}},
  \bibinfo {author} {\bibfnamefont {X.}~\bibnamefont {Li}}, \bibinfo {author}
  {\bibfnamefont {H.~P.}\ \bibnamefont {Hratchian}}, \bibinfo {author}
  {\bibfnamefont {A.~F.}\ \bibnamefont {Izmaylov}}, \bibinfo {author}
  {\bibfnamefont {J.}~\bibnamefont {Bloino}}, \bibinfo {author} {\bibfnamefont
  {G.}~\bibnamefont {Zheng}}, \bibinfo {author} {\bibfnamefont {J.~L.}\
  \bibnamefont {Sonnenberg}}, \bibinfo {author} {\bibfnamefont
  {M.}~\bibnamefont {Hada}}, \bibinfo {author} {\bibfnamefont {M.}~\bibnamefont
  {Ehara}}, \bibinfo {author} {\bibfnamefont {K.}~\bibnamefont {Toyota}},
  \bibinfo {author} {\bibfnamefont {R.}~\bibnamefont {Fukuda}}, \bibinfo
  {author} {\bibfnamefont {J.}~\bibnamefont {Hasegawa}}, \bibinfo {author}
  {\bibfnamefont {M.}~\bibnamefont {Ishida}}, \bibinfo {author} {\bibfnamefont
  {T.}~\bibnamefont {Nakajima}}, \bibinfo {author} {\bibfnamefont
  {Y.}~\bibnamefont {Honda}}, \bibinfo {author} {\bibfnamefont
  {O.}~\bibnamefont {Kitao}}, \bibinfo {author} {\bibfnamefont
  {H.}~\bibnamefont {Nakai}}, \bibinfo {author} {\bibfnamefont
  {T.}~\bibnamefont {Vreven}}, \bibinfo {author} {\bibfnamefont {J.~A.}\
  \bibnamefont {Montgomery}, \bibfnamefont {{Jr.}}}, \bibinfo {author}
  {\bibfnamefont {J.~E.}\ \bibnamefont {Peralta}}, \bibinfo {author}
  {\bibfnamefont {F.}~\bibnamefont {Ogliaro}}, \bibinfo {author} {\bibfnamefont
  {M.}~\bibnamefont {Bearpark}}, \bibinfo {author} {\bibfnamefont {J.~J.}\
  \bibnamefont {Heyd}}, \bibinfo {author} {\bibfnamefont {E.}~\bibnamefont
  {Brothers}}, \bibinfo {author} {\bibfnamefont {K.~N.}\ \bibnamefont {Kudin}},
  \bibinfo {author} {\bibfnamefont {V.~N.}\ \bibnamefont {Staroverov}},
  \bibinfo {author} {\bibfnamefont {R.}~\bibnamefont {Kobayashi}}, \bibinfo
  {author} {\bibfnamefont {J.}~\bibnamefont {Normand}}, \bibinfo {author}
  {\bibfnamefont {K.}~\bibnamefont {Raghavachari}}, \bibinfo {author}
  {\bibfnamefont {A.}~\bibnamefont {Rendell}}, \bibinfo {author} {\bibfnamefont
  {J.~C.}\ \bibnamefont {Burant}}, \bibinfo {author} {\bibfnamefont {S.~S.}\
  \bibnamefont {Iyengar}}, \bibinfo {author} {\bibfnamefont {J.}~\bibnamefont
  {Tomasi}}, \bibinfo {author} {\bibfnamefont {M.}~\bibnamefont {Cossi}},
  \bibinfo {author} {\bibfnamefont {N.}~\bibnamefont {Rega}}, \bibinfo {author}
  {\bibfnamefont {J.~M.}\ \bibnamefont {Millam}}, \bibinfo {author}
  {\bibfnamefont {M.}~\bibnamefont {Klene}}, \bibinfo {author} {\bibfnamefont
  {J.~E.}\ \bibnamefont {Knox}}, \bibinfo {author} {\bibfnamefont {J.~B.}\
  \bibnamefont {Cross}}, \bibinfo {author} {\bibfnamefont {V.}~\bibnamefont
  {Bakken}}, \bibinfo {author} {\bibfnamefont {C.}~\bibnamefont {Adamo}},
  \bibinfo {author} {\bibfnamefont {J.}~\bibnamefont {Jaramillo}}, \bibinfo
  {author} {\bibfnamefont {R.}~\bibnamefont {Gomperts}}, \bibinfo {author}
  {\bibfnamefont {R.~E.}\ \bibnamefont {Stratmann}}, \bibinfo {author}
  {\bibfnamefont {O.}~\bibnamefont {Yazyev}}, \bibinfo {author} {\bibfnamefont
  {A.~J.}\ \bibnamefont {Austin}}, \bibinfo {author} {\bibfnamefont
  {R.}~\bibnamefont {Cammi}}, \bibinfo {author} {\bibfnamefont
  {C.}~\bibnamefont {Pomelli}}, \bibinfo {author} {\bibfnamefont {J.~W.}\
  \bibnamefont {Ochterski}}, \bibinfo {author} {\bibfnamefont {R.~L.}\
  \bibnamefont {Martin}}, \bibinfo {author} {\bibfnamefont {K.}~\bibnamefont
  {Morokuma}}, \bibinfo {author} {\bibfnamefont {V.~G.}\ \bibnamefont
  {Zakrzewski}}, \bibinfo {author} {\bibfnamefont {G.~A.}\ \bibnamefont
  {Voth}}, \bibinfo {author} {\bibfnamefont {P.}~\bibnamefont {Salvador}},
  \bibinfo {author} {\bibfnamefont {J.~J.}\ \bibnamefont {Dannenberg}},
  \bibinfo {author} {\bibfnamefont {S.}~\bibnamefont {Dapprich}}, \bibinfo
  {author} {\bibfnamefont {A.~D.}\ \bibnamefont {Daniels}}, \bibinfo {author}
  {\bibfnamefont {{\"O}.}~\bibnamefont {Farkas}}, \bibinfo {author}
  {\bibfnamefont {J.~B.}\ \bibnamefont {Foresman}}, \bibinfo {author}
  {\bibfnamefont {J.~V.}\ \bibnamefont {Ortiz}}, \bibinfo {author}
  {\bibfnamefont {J.}~\bibnamefont {Cioslowski}}, \ and\ \bibinfo {author}
  {\bibfnamefont {D.~J.}\ \bibnamefont {Fox}},\ }\href@noop {} {\enquote
  {\bibinfo {title} {Gaussian∼09 {R}evision {X}},}\ }\bibinfo {note}
  {Gaussian Inc. Wallingford CT 2009}\BibitemShut {NoStop}%
\bibitem [{\citenamefont {Piquemal}\ \emph {et~al.}(2006)\citenamefont
  {Piquemal}, \citenamefont {Cisneros}, \citenamefont {Reinhardt},
  \citenamefont {Gresh},\ and\ \citenamefont {Darden}}]{Piquemal2006}%
  \BibitemOpen
  \bibfield  {author} {\bibinfo {author} {\bibfnamefont {J.-P.}\ \bibnamefont
  {Piquemal}}, \bibinfo {author} {\bibfnamefont {G.~A.}\ \bibnamefont
  {Cisneros}}, \bibinfo {author} {\bibfnamefont {P.}~\bibnamefont {Reinhardt}},
  \bibinfo {author} {\bibfnamefont {N.}~\bibnamefont {Gresh}}, \ and\ \bibinfo
  {author} {\bibfnamefont {T.~A.}\ \bibnamefont {Darden}},\ }\bibfield  {title}
  {\enquote {\bibinfo {title} {{T}owards a force field based on density
  fitting},}\ }\href {\doibase http://dx.doi.org/10.1063/1.2173256} {\bibfield
  {journal} {\bibinfo  {journal} {J. Chem. Phys.}\ }\textbf {\bibinfo {volume}
  {124}},\ \bibinfo {eid} {104101} (\bibinfo {year} {2006})}\BibitemShut
  {NoStop}%
\bibitem [{\citenamefont {Cisneros}, \citenamefont {Piquemal},\ and\
  \citenamefont {Darden}(2006)}]{Cisneros2006}%
  \BibitemOpen
  \bibfield  {author} {\bibinfo {author} {\bibfnamefont {G.~A.}\ \bibnamefont
  {Cisneros}}, \bibinfo {author} {\bibfnamefont {J.-P.}\ \bibnamefont
  {Piquemal}}, \ and\ \bibinfo {author} {\bibfnamefont {T.~A.}\ \bibnamefont
  {Darden}},\ }\bibfield  {title} {\enquote {\bibinfo {title} {{G}eneralization
  of the {G}aussian electrostatic model: {E}xtension to arbitrary angular
  momentum, distributed multipoles, and speedup with reciprocal space
  methods},}\ }\href {\doibase http://dx.doi.org/10.1063/1.2363374} {\bibfield
  {journal} {\bibinfo  {journal} {J. Chem. Phys.}\ }\textbf {\bibinfo {volume}
  {125}},\ \bibinfo {eid} {184101} (\bibinfo {year} {2006})}\BibitemShut
  {NoStop}%
\bibitem [{\citenamefont {Elking}, \citenamefont {Darden},\ and\ \citenamefont
  {Woods}(2007)}]{Elking2007}%
  \BibitemOpen
  \bibfield  {author} {\bibinfo {author} {\bibfnamefont {D.}~\bibnamefont
  {Elking}}, \bibinfo {author} {\bibfnamefont {T.}~\bibnamefont {Darden}}, \
  and\ \bibinfo {author} {\bibfnamefont {R.~J.}\ \bibnamefont {Woods}},\
  }\bibfield  {title} {\enquote {\bibinfo {title} {{G}aussian {I}nduced
  {D}ipole {P}olarization {M}odel},}\ }\href {\doibase 10.1002/jcc.20574}
  {\bibfield  {journal} {\bibinfo  {journal} {J. Comput. Chem.}\ }\textbf
  {\bibinfo {volume} {28}},\ \bibinfo {pages} {1261--1274} (\bibinfo {year}
  {2007})}\BibitemShut {NoStop}%
\bibitem [{\citenamefont {Gresh}\ \emph {et~al.}(2007)\citenamefont {Gresh},
  \citenamefont {Cisneros}, \citenamefont {Darden},\ and\ \citenamefont
  {Piquemal}}]{Gresh2007}%
  \BibitemOpen
  \bibfield  {author} {\bibinfo {author} {\bibfnamefont {N.}~\bibnamefont
  {Gresh}}, \bibinfo {author} {\bibfnamefont {G.~A.}\ \bibnamefont {Cisneros}},
  \bibinfo {author} {\bibfnamefont {T.~A.}\ \bibnamefont {Darden}}, \ and\
  \bibinfo {author} {\bibfnamefont {J.-P.}\ \bibnamefont {Piquemal}},\
  }\bibfield  {title} {\enquote {\bibinfo {title} {{A}nisotropic, {P}olarizable
  {M}olecular {M}echanics {S}tudies of {I}nter- and {I}ntramolecular
  {I}nteractions and {L}igand-{M}acromolecule {C}omplexes. {A} {B}ottom-{U}p
  {S}trategy},}\ }\href {\doibase 10.1021/ct700134r} {\bibfield  {journal}
  {\bibinfo  {journal} {J. Chem. Theory Comput.}\ }\textbf {\bibinfo {volume}
  {3}},\ \bibinfo {pages} {1960--1986} (\bibinfo {year} {2007})}\BibitemShut
  {NoStop}%
\bibitem [{\citenamefont {Elking}\ \emph {et~al.}(2010)\citenamefont {Elking},
  \citenamefont {Cisneros}, \citenamefont {Piquemal}, \citenamefont {Darden},\
  and\ \citenamefont {Pedersen}}]{Elking2010}%
  \BibitemOpen
  \bibfield  {author} {\bibinfo {author} {\bibfnamefont {D.~M.}\ \bibnamefont
  {Elking}}, \bibinfo {author} {\bibfnamefont {G.~A.}\ \bibnamefont
  {Cisneros}}, \bibinfo {author} {\bibfnamefont {J.-P.}\ \bibnamefont
  {Piquemal}}, \bibinfo {author} {\bibfnamefont {T.~A.}\ \bibnamefont
  {Darden}}, \ and\ \bibinfo {author} {\bibfnamefont {L.~G.}\ \bibnamefont
  {Pedersen}},\ }\bibfield  {title} {\enquote {\bibinfo {title} {{G}aussian
  {M}ultipole {M}odel ({G}{M}{M})},}\ }\href {\doibase 10.1021/ct900348b}
  {\bibfield  {journal} {\bibinfo  {journal} {J. Chem. Theory Comput.}\
  }\textbf {\bibinfo {volume} {6}},\ \bibinfo {pages} {190--202} (\bibinfo
  {year} {2010})}\BibitemShut {NoStop}%
\bibitem [{\citenamefont {Cisneros}(2012)}]{Cisneros2012}%
  \BibitemOpen
  \bibfield  {author} {\bibinfo {author} {\bibfnamefont {G.~A.}\ \bibnamefont
  {Cisneros}},\ }\bibfield  {title} {\enquote {\bibinfo {title} {{A}pplication
  of {G}aussian {E}lectrostatic {M}odel ({G}{E}{M}) {D}istributed {M}ultipoles
  in the {A}{M}{O}{E}{B}{A} {F}orce {F}ield},}\ }\href {\doibase
  10.1021/ct300630u} {\bibfield  {journal} {\bibinfo  {journal} {J. Chem.
  Theory Comput.}\ }\textbf {\bibinfo {volume} {8}},\ \bibinfo {pages}
  {5072--5080} (\bibinfo {year} {2012})}\BibitemShut {NoStop}%
\bibitem [{\citenamefont {Simmonett}\ \emph {et~al.}(2014)\citenamefont
  {Simmonett}, \citenamefont {Pickard}, \citenamefont {Schaefer},\ and\
  \citenamefont {Brooks}}]{Simmonett2014}%
  \BibitemOpen
  \bibfield  {author} {\bibinfo {author} {\bibfnamefont {A.~C.}\ \bibnamefont
  {Simmonett}}, \bibinfo {author} {\bibfnamefont {F.~C.}\ \bibnamefont
  {Pickard}}, \bibinfo {author} {\bibfnamefont {H.~F.}\ \bibnamefont
  {Schaefer}}, \ and\ \bibinfo {author} {\bibfnamefont {B.~R.}\ \bibnamefont
  {Brooks}},\ }\bibfield  {title} {\enquote {\bibinfo {title} {{A}n efficient
  algorithm for multipole energies and derivatives based on spherical harmonics
  and extensions to particle mesh {E}wald},}\ }\href {\doibase
  http://dx.doi.org/10.1063/1.4873920} {\bibfield  {journal} {\bibinfo
  {journal} {J. Chem. Phys.}\ }\textbf {\bibinfo {volume} {140}},\ \bibinfo
  {eid} {184101} (\bibinfo {year} {2014})}\BibitemShut {NoStop}%
\bibitem [{\citenamefont {Chaudret}\ \emph {et~al.}(2014)\citenamefont
  {Chaudret}, \citenamefont {Gresh}, \citenamefont {Narth}, \citenamefont
  {Lagard\`{e}re}, \citenamefont {Darden}, \citenamefont {Cisneros},\ and\
  \citenamefont {Piquemal}}]{Chaudret2014}%
  \BibitemOpen
  \bibfield  {author} {\bibinfo {author} {\bibfnamefont {R.}~\bibnamefont
  {Chaudret}}, \bibinfo {author} {\bibfnamefont {N.}~\bibnamefont {Gresh}},
  \bibinfo {author} {\bibfnamefont {C.}~\bibnamefont {Narth}}, \bibinfo
  {author} {\bibfnamefont {L.}~\bibnamefont {Lagard\`{e}re}}, \bibinfo {author}
  {\bibfnamefont {T.~A.}\ \bibnamefont {Darden}}, \bibinfo {author}
  {\bibfnamefont {G.~A.}\ \bibnamefont {Cisneros}}, \ and\ \bibinfo {author}
  {\bibfnamefont {J.-P.}\ \bibnamefont {Piquemal}},\ }\bibfield  {title}
  {\enquote {\bibinfo {title} {{S}/{G}-1: {A}n ab {I}nitio {F}orce-{F}ield
  {B}lending {F}rozen {H}ermite {G}aussian {D}ensities and {D}istributed
  {M}ultipoles. {P}roof of {C}oncept and {F}irst {A}pplications to {M}etal
  {C}ations},}\ }\href {\doibase 10.1021/jp5051657} {\bibfield  {journal}
  {\bibinfo  {journal} {J. Phys. Chem. A}\ }\textbf {\bibinfo {volume} {118}},\
  \bibinfo {pages} {7598--7612} (\bibinfo {year} {2014})}\BibitemShut {NoStop}%
\bibitem [{\citenamefont {Giese}\ \emph {et~al.}(2015)\citenamefont {Giese},
  \citenamefont {Panteva}, \citenamefont {Chen},\ and\ \citenamefont
  {York}}]{Giese2015}%
  \BibitemOpen
  \bibfield  {author} {\bibinfo {author} {\bibfnamefont {T.~J.}\ \bibnamefont
  {Giese}}, \bibinfo {author} {\bibfnamefont {M.~T.}\ \bibnamefont {Panteva}},
  \bibinfo {author} {\bibfnamefont {H.}~\bibnamefont {Chen}}, \ and\ \bibinfo
  {author} {\bibfnamefont {D.~M.}\ \bibnamefont {York}},\ }\bibfield  {title}
  {\enquote {\bibinfo {title} {{M}ultipolar {E}wald {M}ethods, 1: {T}heory,
  {A}ccuracy, and {P}erformance},}\ }\href {\doibase 10.1021/ct5007983}
  {\bibfield  {journal} {\bibinfo  {journal} {J. Chem. Theory Comput.}\
  }\textbf {\bibinfo {volume} {11}},\ \bibinfo {pages} {436--450} (\bibinfo
  {year} {2015})}\BibitemShut {NoStop}%
\bibitem [{\citenamefont {Koskinen}\ and\ \citenamefont
  {M\"akinen}(2009)}]{Koskinen2009}%
  \BibitemOpen
  \bibfield  {author} {\bibinfo {author} {\bibfnamefont {P.}~\bibnamefont
  {Koskinen}}\ and\ \bibinfo {author} {\bibfnamefont {V.}~\bibnamefont
  {M\"akinen}},\ }\bibfield  {title} {\enquote {\bibinfo {title}
  {{D}ensity-functional tight-binding for beginners},}\ }\href {\doibase
  http://dx.doi.org/10.1016/j.commatsci.2009.07.013} {\bibfield  {journal}
  {\bibinfo  {journal} {Comput. Mater. Sci.}\ }\textbf {\bibinfo {volume}
  {47}},\ \bibinfo {pages} {237--253} (\bibinfo {year} {2009})}\BibitemShut
  {NoStop}%
\bibitem [{\citenamefont {Bernstein}, \citenamefont {Mehl},\ and\ \citenamefont
  {Papaconstantopoulos}(2002)}]{Bernstein2002}%
  \BibitemOpen
  \bibfield  {author} {\bibinfo {author} {\bibfnamefont {N.}~\bibnamefont
  {Bernstein}}, \bibinfo {author} {\bibfnamefont {M.~J.}\ \bibnamefont {Mehl}},
  \ and\ \bibinfo {author} {\bibfnamefont {D.~A.}\ \bibnamefont
  {Papaconstantopoulos}},\ }\bibfield  {title} {\enquote {\bibinfo {title}
  {{N}onorthogonal tight-binding model for germanium},}\ }\href {\doibase
  10.1103/PhysRevB.66.075212} {\bibfield  {journal} {\bibinfo  {journal} {Phys.
  Rev. B}\ }\textbf {\bibinfo {volume} {66}},\ \bibinfo {pages} {075212}
  (\bibinfo {year} {2002})}\BibitemShut {NoStop}%
\bibitem [{\citenamefont {Giese}\ and\ \citenamefont {York}(2005)}]{Giese2005}%
  \BibitemOpen
  \bibfield  {author} {\bibinfo {author} {\bibfnamefont {T.~J.}\ \bibnamefont
  {Giese}}\ and\ \bibinfo {author} {\bibfnamefont {D.~M.}\ \bibnamefont
  {York}},\ }\bibfield  {title} {\enquote {\bibinfo {title} {{I}mprovement of
  semiempirical response properties with charge-dependent response density},}\
  }\href {\doibase http://dx.doi.org/10.1063/1.2080007} {\bibfield  {journal}
  {\bibinfo  {journal} {J. Chem. Phys.}\ }\textbf {\bibinfo {volume} {123}},\
  \bibinfo {eid} {164108} (\bibinfo {year} {2005})}\BibitemShut {NoStop}%
\bibitem [{\citenamefont {Giese}\ and\ \citenamefont {York}(2007)}]{Giese2007}%
  \BibitemOpen
  \bibfield  {author} {\bibinfo {author} {\bibfnamefont {T.~J.}\ \bibnamefont
  {Giese}}\ and\ \bibinfo {author} {\bibfnamefont {D.~M.}\ \bibnamefont
  {York}},\ }\bibfield  {title} {\enquote {\bibinfo {title} {{C}harge-dependent
  model for many-body polarization, exchange, and dispersion interactions in
  hybrid quantum mechanical/{m}olecular mechanical calculations},}\ }\href
  {\doibase http://dx.doi.org/10.1063/1.2778428} {\bibfield  {journal}
  {\bibinfo  {journal} {J. Chem. Phys.}\ }\textbf {\bibinfo {volume} {127}},\
  \bibinfo {eid} {194101} (\bibinfo {year} {2007})}\BibitemShut {NoStop}%
\bibitem [{\citenamefont {Golze}\ \emph {et~al.}(2013)\citenamefont {Golze},
  \citenamefont {Iannuzzi}, \citenamefont {Nguyen}, \citenamefont {Passerone},\
  and\ \citenamefont {Hutter}}]{Golze2013}%
  \BibitemOpen
  \bibfield  {author} {\bibinfo {author} {\bibfnamefont {D.}~\bibnamefont
  {Golze}}, \bibinfo {author} {\bibfnamefont {M.}~\bibnamefont {Iannuzzi}},
  \bibinfo {author} {\bibfnamefont {M.-T.}\ \bibnamefont {Nguyen}}, \bibinfo
  {author} {\bibfnamefont {D.}~\bibnamefont {Passerone}}, \ and\ \bibinfo
  {author} {\bibfnamefont {J.}~\bibnamefont {Hutter}},\ }\bibfield  {title}
  {\enquote {\bibinfo {title} {{S}imulation of {A}dsorption {P}rocesses at
  {M}etallic {I}nterfaces: {A}n {I}mage {C}harge {A}ugmented {Q}{M}/{M}{M}
  {A}pproach},}\ }\href {\doibase 10.1021/ct400698y} {\bibfield  {journal}
  {\bibinfo  {journal} {J. Chem. Theory Comput.}\ }\textbf {\bibinfo {volume}
  {9}},\ \bibinfo {pages} {5086--5097} (\bibinfo {year} {2013})}\BibitemShut
  {NoStop}%
\bibitem [{\citenamefont {Giese}\ and\ \citenamefont
  {York}(2016{\natexlab{a}})}]{Giese2016}%
  \BibitemOpen
  \bibfield  {author} {\bibinfo {author} {\bibfnamefont {T.~J.}\ \bibnamefont
  {Giese}}\ and\ \bibinfo {author} {\bibfnamefont {D.~M.}\ \bibnamefont
  {York}},\ }\bibfield  {title} {\enquote {\bibinfo {title}
  {{A}mbient-{P}otential {C}omposite {E}wald {M}ethod for \textit{ab {I}nitio}
  {Q}uantum {M}echanical/{M}olecular {M}echanical {M}olecular {D}ynamics
  {S}imulation},}\ }\href {\doibase 10.1021/acs.jctc.6b00198} {\bibfield
  {journal} {\bibinfo  {journal} {J. Chem. Theory Comput.}\ }\textbf {\bibinfo
  {volume} {12}},\ \bibinfo {pages} {2611--2632} (\bibinfo {year}
  {2016}{\natexlab{a}})}\BibitemShut {NoStop}%
\bibitem [{\citenamefont {Sodt}\ and\ \citenamefont
  {Head-Gordon}(2008)}]{Sodt2008}%
  \BibitemOpen
  \bibfield  {author} {\bibinfo {author} {\bibfnamefont {A.}~\bibnamefont
  {Sodt}}\ and\ \bibinfo {author} {\bibfnamefont {M.}~\bibnamefont
  {Head-Gordon}},\ }\bibfield  {title} {\enquote {\bibinfo {title}
  {{H}artree-{F}ock exchange computed using the atomic resolution of the
  identity approximation},}\ }\href {\doibase
  http://dx.doi.org/10.1063/1.2828533} {\bibfield  {journal} {\bibinfo
  {journal} {J. Chem. Phys.}\ }\textbf {\bibinfo {volume} {128}},\ \bibinfo
  {pages} {104106} (\bibinfo {year} {2008})}\BibitemShut {NoStop}%
\bibitem [{\citenamefont {Manzer}, \citenamefont {Epifanovsky},\ and\
  \citenamefont {Head-Gordon}(2015)}]{Manzer2015}%
  \BibitemOpen
  \bibfield  {author} {\bibinfo {author} {\bibfnamefont {S.~F.}\ \bibnamefont
  {Manzer}}, \bibinfo {author} {\bibfnamefont {E.}~\bibnamefont {Epifanovsky}},
  \ and\ \bibinfo {author} {\bibfnamefont {M.}~\bibnamefont {Head-Gordon}},\
  }\bibfield  {title} {\enquote {\bibinfo {title} {{E}fficient {I}mplementation
  of the {P}air {A}tomic {R}esolution of the {I}dentity {A}pproximation for
  {E}xact {E}xchange for {H}ybrid and {R}ange-{S}eparated {D}ensity
  {F}unctionals},}\ }\href {\doibase 10.1021/ct5008586} {\bibfield  {journal}
  {\bibinfo  {journal} {J. Chem. Theory Comput.}\ }\textbf {\bibinfo {volume}
  {11}},\ \bibinfo {pages} {518--527} (\bibinfo {year} {2015})}\BibitemShut
  {NoStop}%
\bibitem [{\citenamefont {Ihrig}\ \emph {et~al.}(2015)\citenamefont {Ihrig},
  \citenamefont {Wieferink}, \citenamefont {Zhang}, \citenamefont {Ropo},
  \citenamefont {Ren}, \citenamefont {Rinke}, \citenamefont {Scheffler},\ and\
  \citenamefont {Blum}}]{Ihrig2015}%
  \BibitemOpen
  \bibfield  {author} {\bibinfo {author} {\bibfnamefont {A.~C.}\ \bibnamefont
  {Ihrig}}, \bibinfo {author} {\bibfnamefont {J.}~\bibnamefont {Wieferink}},
  \bibinfo {author} {\bibfnamefont {I.~Y.}\ \bibnamefont {Zhang}}, \bibinfo
  {author} {\bibfnamefont {M.}~\bibnamefont {Ropo}}, \bibinfo {author}
  {\bibfnamefont {X.}~\bibnamefont {Ren}}, \bibinfo {author} {\bibfnamefont
  {P.}~\bibnamefont {Rinke}}, \bibinfo {author} {\bibfnamefont
  {M.}~\bibnamefont {Scheffler}}, \ and\ \bibinfo {author} {\bibfnamefont
  {V.}~\bibnamefont {Blum}},\ }\bibfield  {title} {\enquote {\bibinfo {title}
  {{A}ccurate localized resolution of identity approach for linear-scaling
  hybrid density functionals and for many-body perturbation theory},}\ }\href
  {http://stacks.iop.org/1367-2630/17/i=9/a=093020} {\bibfield  {journal}
  {\bibinfo  {journal} {New J. Phys.}\ }\textbf {\bibinfo {volume} {17}},\
  \bibinfo {pages} {093020} (\bibinfo {year} {2015})}\BibitemShut {NoStop}%
\bibitem [{\citenamefont {Levchenko}\ \emph {et~al.}(2015)\citenamefont
  {Levchenko}, \citenamefont {Ren}, \citenamefont {Wieferink}, \citenamefont
  {Johanni}, \citenamefont {Rinke}, \citenamefont {Blum},\ and\ \citenamefont
  {Scheffler}}]{Levchenko2015}%
  \BibitemOpen
  \bibfield  {author} {\bibinfo {author} {\bibfnamefont {S.~V.}\ \bibnamefont
  {Levchenko}}, \bibinfo {author} {\bibfnamefont {X.}~\bibnamefont {Ren}},
  \bibinfo {author} {\bibfnamefont {J.}~\bibnamefont {Wieferink}}, \bibinfo
  {author} {\bibfnamefont {R.}~\bibnamefont {Johanni}}, \bibinfo {author}
  {\bibfnamefont {P.}~\bibnamefont {Rinke}}, \bibinfo {author} {\bibfnamefont
  {V.}~\bibnamefont {Blum}}, \ and\ \bibinfo {author} {\bibfnamefont
  {M.}~\bibnamefont {Scheffler}},\ }\bibfield  {title} {\enquote {\bibinfo
  {title} {{H}ybrid functionals for large periodic systems in an all-electron,
  numeric atom-centered basis framework},}\ }\href {\doibase
  http://dx.doi.org/10.1016/j.cpc.2015.02.021} {\bibfield  {journal} {\bibinfo
  {journal} {Comput. Phys. Commun.}\ }\textbf {\bibinfo {volume} {192}},\
  \bibinfo {pages} {60--69} (\bibinfo {year} {2015})}\BibitemShut {NoStop}%
\bibitem [{\citenamefont {Guidon}\ \emph {et~al.}(2008)\citenamefont {Guidon},
  \citenamefont {Schiffmann}, \citenamefont {Hutter},\ and\ \citenamefont
  {VandeVondele}}]{Guidon2008}%
  \BibitemOpen
  \bibfield  {author} {\bibinfo {author} {\bibfnamefont {M.}~\bibnamefont
  {Guidon}}, \bibinfo {author} {\bibfnamefont {F.}~\bibnamefont {Schiffmann}},
  \bibinfo {author} {\bibfnamefont {J.}~\bibnamefont {Hutter}}, \ and\ \bibinfo
  {author} {\bibfnamefont {J.}~\bibnamefont {VandeVondele}},\ }\bibfield
  {title} {\enquote {\bibinfo {title} {\textit{{A}b initio} molecular dynamics
  using hybrid density functionals},}\ }\href {\doibase
  http://dx.doi.org/10.1063/1.2931945} {\bibfield  {journal} {\bibinfo
  {journal} {J. Chem. Phys.}\ }\textbf {\bibinfo {volume} {128}},\ \bibinfo
  {eid} {214104} (\bibinfo {year} {2008})}\BibitemShut {NoStop}%
\bibitem [{\citenamefont {Perdew}, \citenamefont {Burke},\ and\ \citenamefont
  {Ernzerhof}(1996)}]{Perdew1996}%
  \BibitemOpen
  \bibfield  {author} {\bibinfo {author} {\bibfnamefont {J.~P.}\ \bibnamefont
  {Perdew}}, \bibinfo {author} {\bibfnamefont {K.}~\bibnamefont {Burke}}, \
  and\ \bibinfo {author} {\bibfnamefont {M.}~\bibnamefont {Ernzerhof}},\
  }\bibfield  {title} {\enquote {\bibinfo {title} {{G}eneralized {G}radient
  {A}pproximation {M}ade {S}imple},}\ }\href {\doibase
  10.1103/PhysRevLett.77.3865} {\bibfield  {journal} {\bibinfo  {journal}
  {Phys. Rev. Lett.}\ }\textbf {\bibinfo {volume} {77}},\ \bibinfo {pages}
  {3865--3868} (\bibinfo {year} {1996})}\BibitemShut {NoStop}%
\bibitem [{\citenamefont {Ernzerhof}, \citenamefont {Perdew},\ and\
  \citenamefont {Burke}(1997)}]{Ernzerhof1997}%
  \BibitemOpen
  \bibfield  {author} {\bibinfo {author} {\bibfnamefont {M.}~\bibnamefont
  {Ernzerhof}}, \bibinfo {author} {\bibfnamefont {J.~P.}\ \bibnamefont
  {Perdew}}, \ and\ \bibinfo {author} {\bibfnamefont {K.}~\bibnamefont
  {Burke}},\ }\bibfield  {title} {\enquote {\bibinfo {title}
  {{C}oupling-{C}onstant {D}ependence of {A}tomization {E}nergies},}\ }\href
  {\doibase 10.1002/(SICI)1097-461X(1997)64:3<285::AID-QUA2>3.0.CO;2-S}
  {\bibfield  {journal} {\bibinfo  {journal} {Int. J. Quantum Chem.}\ }\textbf
  {\bibinfo {volume} {64}},\ \bibinfo {pages} {285--295} (\bibinfo {year}
  {1997})}\BibitemShut {NoStop}%
\bibitem [{\citenamefont {Ernzerhof}\ and\ \citenamefont
  {Scuseria}(1999)}]{Ernzerhof1999}%
  \BibitemOpen
  \bibfield  {author} {\bibinfo {author} {\bibfnamefont {M.}~\bibnamefont
  {Ernzerhof}}\ and\ \bibinfo {author} {\bibfnamefont {G.~E.}\ \bibnamefont
  {Scuseria}},\ }\bibfield  {title} {\enquote {\bibinfo {title} {{A}ssessment
  of the {P}erdew-{B}urke-{E}rnzerhof exchange-correlation functional},}\
  }\href {\doibase http://dx.doi.org/10.1063/1.478401} {\bibfield  {journal}
  {\bibinfo  {journal} {J. Chem. Phys.}\ }\textbf {\bibinfo {volume} {110}},\
  \bibinfo {pages} {5029--5036} (\bibinfo {year} {1999})}\BibitemShut {NoStop}%
\bibitem [{\citenamefont {Becke}(1993)}]{Becke1993}%
  \BibitemOpen
  \bibfield  {author} {\bibinfo {author} {\bibfnamefont {A.~D.}\ \bibnamefont
  {Becke}},\ }\bibfield  {title} {\enquote {\bibinfo {title}
  {{D}ensity-{f}unctional thermochemistry. {I}{I}{I}. {T}he role of exact
  exchange},}\ }\href {\doibase http://dx.doi.org/10.1063/1.464913} {\bibfield
  {journal} {\bibinfo  {journal} {J. Chem. Phys.}\ }\textbf {\bibinfo {volume}
  {98}},\ \bibinfo {pages} {5648--5652} (\bibinfo {year} {1993})}\BibitemShut
  {NoStop}%
\bibitem [{\citenamefont {Lee}, \citenamefont {Yang},\ and\ \citenamefont
  {Parr}(1988)}]{Lee1988}%
  \BibitemOpen
  \bibfield  {author} {\bibinfo {author} {\bibfnamefont {C.}~\bibnamefont
  {Lee}}, \bibinfo {author} {\bibfnamefont {W.}~\bibnamefont {Yang}}, \ and\
  \bibinfo {author} {\bibfnamefont {R.~G.}\ \bibnamefont {Parr}},\ }\bibfield
  {title} {\enquote {\bibinfo {title} {{D}evelopment of the {C}olle-{S}alvetti
  correlation-energy formula into a functional of the electron density},}\
  }\href {\doibase 10.1103/PhysRevB.37.785} {\bibfield  {journal} {\bibinfo
  {journal} {Phys. Rev. B}\ }\textbf {\bibinfo {volume} {37}},\ \bibinfo
  {pages} {785--789} (\bibinfo {year} {1988})}\BibitemShut {NoStop}%
\bibitem [{\citenamefont {Vosko}, \citenamefont {Wilk},\ and\ \citenamefont
  {Nusair}(1980)}]{Vosko1980}%
  \BibitemOpen
  \bibfield  {author} {\bibinfo {author} {\bibfnamefont {S.~H.}\ \bibnamefont
  {Vosko}}, \bibinfo {author} {\bibfnamefont {L.}~\bibnamefont {Wilk}}, \ and\
  \bibinfo {author} {\bibfnamefont {M.}~\bibnamefont {Nusair}},\ }\bibfield
  {title} {\enquote {\bibinfo {title} {{A}ccurate spin-dependent electron
  liquid correlation energies for local spin density calculations: a critical
  analysis},}\ }\href {\doibase 10.1139/p80-159} {\bibfield  {journal}
  {\bibinfo  {journal} {Can. J. Phys.}\ }\textbf {\bibinfo {volume} {58}},\
  \bibinfo {pages} {1200--1211} (\bibinfo {year} {1980})}\BibitemShut {NoStop}%
\bibitem [{\citenamefont {Heyd}, \citenamefont {Scuseria},\ and\ \citenamefont
  {Ernzerhof}(2003)}]{Heyd2003}%
  \BibitemOpen
  \bibfield  {author} {\bibinfo {author} {\bibfnamefont {J.}~\bibnamefont
  {Heyd}}, \bibinfo {author} {\bibfnamefont {G.~E.}\ \bibnamefont {Scuseria}},
  \ and\ \bibinfo {author} {\bibfnamefont {M.}~\bibnamefont {Ernzerhof}},\
  }\bibfield  {title} {\enquote {\bibinfo {title} {{H}ybrid functionals based
  on a screened {C}oulomb potential},}\ }\href {\doibase
  http://dx.doi.org/10.1063/1.1564060} {\bibfield  {journal} {\bibinfo
  {journal} {J. Chem. Phys.}\ }\textbf {\bibinfo {volume} {118}},\ \bibinfo
  {pages} {8207--8215} (\bibinfo {year} {2003})}\BibitemShut {NoStop}%
\bibitem [{\citenamefont {Heyd}, \citenamefont {Scuseria},\ and\ \citenamefont
  {Ernzerhof}(2006)}]{Heyd2006}%
  \BibitemOpen
  \bibfield  {author} {\bibinfo {author} {\bibfnamefont {J.}~\bibnamefont
  {Heyd}}, \bibinfo {author} {\bibfnamefont {G.~E.}\ \bibnamefont {Scuseria}},
  \ and\ \bibinfo {author} {\bibfnamefont {M.}~\bibnamefont {Ernzerhof}},\
  }\bibfield  {title} {\enquote {\bibinfo {title} {{E}rratum: “{H}ybrid
  functionals based on a screened {C}oulomb potential” [{J}. {C}hem. {P}hys.
  118, 8207 (2003)]},}\ }\href {\doibase http://dx.doi.org/10.1063/1.2204597}
  {\bibfield  {journal} {\bibinfo  {journal} {J. Chem. Phys.}\ }\textbf
  {\bibinfo {volume} {124}},\ \bibinfo {eid} {219906} (\bibinfo {year}
  {2006})}\BibitemShut {NoStop}%
\bibitem [{\citenamefont {Krukau}\ \emph {et~al.}(2006)\citenamefont {Krukau},
  \citenamefont {Vydrov}, \citenamefont {Izmaylov},\ and\ \citenamefont
  {Scuseria}}]{Krukau2006}%
  \BibitemOpen
  \bibfield  {author} {\bibinfo {author} {\bibfnamefont {A.~V.}\ \bibnamefont
  {Krukau}}, \bibinfo {author} {\bibfnamefont {O.~A.}\ \bibnamefont {Vydrov}},
  \bibinfo {author} {\bibfnamefont {A.~F.}\ \bibnamefont {Izmaylov}}, \ and\
  \bibinfo {author} {\bibfnamefont {G.~E.}\ \bibnamefont {Scuseria}},\
  }\bibfield  {title} {\enquote {\bibinfo {title} {{I}nfluence of the exchange
  screening parameter on the performance of screened hybrid functionals},}\
  }\href {\doibase http://dx.doi.org/10.1063/1.2404663} {\bibfield  {journal}
  {\bibinfo  {journal} {J. Chem. Phys.}\ }\textbf {\bibinfo {volume} {125}},\
  \bibinfo {eid} {224106} (\bibinfo {year} {2006})}\BibitemShut {NoStop}%
\bibitem [{\citenamefont {Cohen}, \citenamefont {Mori-S\'{a}nchez},\ and\
  \citenamefont {Yang}(2007)}]{Cohen2007}%
  \BibitemOpen
  \bibfield  {author} {\bibinfo {author} {\bibfnamefont {A.~J.}\ \bibnamefont
  {Cohen}}, \bibinfo {author} {\bibfnamefont {P.}~\bibnamefont
  {Mori-S\'{a}nchez}}, \ and\ \bibinfo {author} {\bibfnamefont
  {W.}~\bibnamefont {Yang}},\ }\bibfield  {title} {\enquote {\bibinfo {title}
  {{D}evelopment of exchange-correlation functionals with minimal many-electron
  self-interaction error},}\ }\href {\doibase
  http://dx.doi.org/10.1063/1.2741248} {\bibfield  {journal} {\bibinfo
  {journal} {J. Chem. Phys.}\ }\textbf {\bibinfo {volume} {126}},\ \bibinfo
  {eid} {191109} (\bibinfo {year} {2007})}\BibitemShut {NoStop}%
\bibitem [{\citenamefont {Baerends}, \citenamefont {Ellis},\ and\ \citenamefont
  {Ros}(1973)}]{Baerends1973}%
  \BibitemOpen
  \bibfield  {author} {\bibinfo {author} {\bibfnamefont {E.~J.}\ \bibnamefont
  {Baerends}}, \bibinfo {author} {\bibfnamefont {D.~E.}\ \bibnamefont {Ellis}},
  \ and\ \bibinfo {author} {\bibfnamefont {P.}~\bibnamefont {Ros}},\ }\bibfield
   {title} {\enquote {\bibinfo {title} {{S}elf-consistent molecular
  {H}artree-{F}ock-{S}later calculations {I}. {T}he computational procedure},}\
  }\href {\doibase http://dx.doi.org/10.1016/0301-0104(73)80059-X} {\bibfield
  {journal} {\bibinfo  {journal} {Chem. Phys.}\ }\textbf {\bibinfo {volume}
  {2}},\ \bibinfo {pages} {41--51} (\bibinfo {year} {1973})}\BibitemShut
  {NoStop}%
\bibitem [{\citenamefont {Guerra}\ \emph {et~al.}(1998)\citenamefont {Guerra},
  \citenamefont {Snijders}, \citenamefont {te~Velde},\ and\ \citenamefont
  {Baerends}}]{FonsecaGuerra1998}%
  \BibitemOpen
  \bibfield  {author} {\bibinfo {author} {\bibfnamefont {C.~F.}\ \bibnamefont
  {Guerra}}, \bibinfo {author} {\bibfnamefont {J.~G.}\ \bibnamefont
  {Snijders}}, \bibinfo {author} {\bibfnamefont {G.}~\bibnamefont {te~Velde}},
  \ and\ \bibinfo {author} {\bibfnamefont {E.~J.}\ \bibnamefont {Baerends}},\
  }\bibfield  {title} {\enquote {\bibinfo {title} {{T}owards an
  order-\textit{{N}} {D}{F}{T} method},}\ }\href {\doibase
  10.1007/s002140050353} {\bibfield  {journal} {\bibinfo  {journal} {Theor.
  Chem. Acc.}\ }\textbf {\bibinfo {volume} {99}},\ \bibinfo {pages} {391--403}
  (\bibinfo {year} {1998})}\BibitemShut {NoStop}%
\bibitem [{\citenamefont {te~Velde}\ \emph {et~al.}(2001)\citenamefont
  {te~Velde}, \citenamefont {Bickelhaupt}, \citenamefont {Baerends},
  \citenamefont {Guerra}, \citenamefont {van Gisbergen}, \citenamefont
  {Snijders},\ and\ \citenamefont {Ziegler}}]{Velde2001}%
  \BibitemOpen
  \bibfield  {author} {\bibinfo {author} {\bibfnamefont {G.}~\bibnamefont
  {te~Velde}}, \bibinfo {author} {\bibfnamefont {F.~M.}\ \bibnamefont
  {Bickelhaupt}}, \bibinfo {author} {\bibfnamefont {E.~J.}\ \bibnamefont
  {Baerends}}, \bibinfo {author} {\bibfnamefont {C.~F.}\ \bibnamefont
  {Guerra}}, \bibinfo {author} {\bibfnamefont {S.~J.~A.}\ \bibnamefont {van
  Gisbergen}}, \bibinfo {author} {\bibfnamefont {J.~G.}\ \bibnamefont
  {Snijders}}, \ and\ \bibinfo {author} {\bibfnamefont {T.}~\bibnamefont
  {Ziegler}},\ }\bibfield  {title} {\enquote {\bibinfo {title} {{C}hemistry
  with {A}{D}{F}},}\ }\href {\doibase 10.1002/jcc.1056} {\bibfield  {journal}
  {\bibinfo  {journal} {J. Comput. Chem.}\ }\textbf {\bibinfo {volume} {22}},\
  \bibinfo {pages} {931--967} (\bibinfo {year} {2001})}\BibitemShut {NoStop}%
\bibitem [{\citenamefont {Sch\"utt}\ and\ \citenamefont
  {VandeVondele}(2016)}]{Schuett2016}%
  \BibitemOpen
  \bibfield  {author} {\bibinfo {author} {\bibfnamefont {O.}~\bibnamefont
  {Sch\"utt}}\ and\ \bibinfo {author} {\bibfnamefont {J.}~\bibnamefont
  {VandeVondele}},\ }\bibfield  {title} {\enquote {\bibinfo {title} {{M}achine
  {L}earning {A}daptive {B}asis {S}ets for efficient large scale {D}{F}{T}
  simulation},}\ }\href@noop {} {\bibfield  {journal} {\bibinfo  {journal} {J.
  Chem. Theory Comput., submitted}\ } (\bibinfo {year} {2016})}\BibitemShut
  {NoStop}%
\bibitem [{\citenamefont {Dupuis}, \citenamefont {Rys},\ and\ \citenamefont
  {King}(1976)}]{Dupuis1976}%
  \BibitemOpen
  \bibfield  {author} {\bibinfo {author} {\bibfnamefont {M.}~\bibnamefont
  {Dupuis}}, \bibinfo {author} {\bibfnamefont {J.}~\bibnamefont {Rys}}, \ and\
  \bibinfo {author} {\bibfnamefont {H.~F.}\ \bibnamefont {King}},\ }\bibfield
  {title} {\enquote {\bibinfo {title} {{E}valuation of molecular integrals over
  {G}aussian basis functions},}\ }\href {\doibase
  http://dx.doi.org/10.1063/1.432807} {\bibfield  {journal} {\bibinfo
  {journal} {J. Chem. Phys.}\ }\textbf {\bibinfo {volume} {65}},\ \bibinfo
  {pages} {111--116} (\bibinfo {year} {1976})}\BibitemShut {NoStop}%
\bibitem [{\citenamefont {Obara}\ and\ \citenamefont
  {Saika}(1986)}]{Obara1986}%
  \BibitemOpen
  \bibfield  {author} {\bibinfo {author} {\bibfnamefont {S.}~\bibnamefont
  {Obara}}\ and\ \bibinfo {author} {\bibfnamefont {A.}~\bibnamefont {Saika}},\
  }\bibfield  {title} {\enquote {\bibinfo {title} {{E}fficient recursive
  computation of molecular integrals over {C}artesian {G}aussian functions},}\
  }\href {\doibase http://dx.doi.org/10.1063/1.450106} {\bibfield  {journal}
  {\bibinfo  {journal} {J. Chem. Phys.}\ }\textbf {\bibinfo {volume} {84}},\
  \bibinfo {pages} {3963--3974} (\bibinfo {year} {1986})}\BibitemShut {NoStop}%
\bibitem [{\citenamefont {Head-{G}ordon}\ and\ \citenamefont
  {Pople}(1988)}]{HeadGordon1988}%
  \BibitemOpen
  \bibfield  {author} {\bibinfo {author} {\bibfnamefont {M.}~\bibnamefont
  {Head-{G}ordon}}\ and\ \bibinfo {author} {\bibfnamefont {J.~A.}\ \bibnamefont
  {Pople}},\ }\bibfield  {title} {\enquote {\bibinfo {title} {{A} method for
  two-{e}lectron {G}aussian integral and integral derivative evaluation using
  recurrence relations},}\ }\href {\doibase http://dx.doi.org/10.1063/1.455553}
  {\bibfield  {journal} {\bibinfo  {journal} {J. Chem. Phys.}\ }\textbf
  {\bibinfo {volume} {89}},\ \bibinfo {pages} {5777--5786} (\bibinfo {year}
  {1988})}\BibitemShut {NoStop}%
\bibitem [{\citenamefont {Lindh}, \citenamefont {Ryu},\ and\ \citenamefont
  {Liu}(1991)}]{Lindh1991}%
  \BibitemOpen
  \bibfield  {author} {\bibinfo {author} {\bibfnamefont {R.}~\bibnamefont
  {Lindh}}, \bibinfo {author} {\bibfnamefont {U.}~\bibnamefont {Ryu}}, \ and\
  \bibinfo {author} {\bibfnamefont {B.}~\bibnamefont {Liu}},\ }\bibfield
  {title} {\enquote {\bibinfo {title} {{T}he reduced multiplication scheme of
  the {R}ys quadrature and new recurrence relations for auxiliary function
  based two\mbox{-}electron integral evaluation},}\ }\href {\doibase
  http://dx.doi.org/10.1063/1.461610} {\bibfield  {journal} {\bibinfo
  {journal} {J. Chem. Phys.}\ }\textbf {\bibinfo {volume} {95}},\ \bibinfo
  {pages} {5889--5897} (\bibinfo {year} {1991})}\BibitemShut {NoStop}%
\bibitem [{\citenamefont {Bracken}\ and\ \citenamefont
  {Bartlett}(1997)}]{Bracken1997}%
  \BibitemOpen
  \bibfield  {author} {\bibinfo {author} {\bibfnamefont {P.}~\bibnamefont
  {Bracken}}\ and\ \bibinfo {author} {\bibfnamefont {R.~J.}\ \bibnamefont
  {Bartlett}},\ }\bibfield  {title} {\enquote {\bibinfo {title} {{C}alculation
  of {G}aussian integrals using symbolic manipulation},}\ }\href {\doibase
  10.1002/(SICI)1097-461X(1997)62:6<557::AID-QUA1>3.0.CO;2-V} {\bibfield
  {journal} {\bibinfo  {journal} {Int. J. Quantum Chem.}\ }\textbf {\bibinfo
  {volume} {62}},\ \bibinfo {pages} {557--570} (\bibinfo {year}
  {1997})}\BibitemShut {NoStop}%
\bibitem [{\citenamefont {Gill}, \citenamefont {Gilbert},\ and\ \citenamefont
  {Adams}(2000)}]{Gill2000}%
  \BibitemOpen
  \bibfield  {author} {\bibinfo {author} {\bibfnamefont {P.~M.~W.}\
  \bibnamefont {Gill}}, \bibinfo {author} {\bibfnamefont {A.~T.~B.}\
  \bibnamefont {Gilbert}}, \ and\ \bibinfo {author} {\bibfnamefont {T.~R.}\
  \bibnamefont {Adams}},\ }\bibfield  {title} {\enquote {\bibinfo {title}
  {{R}apid evaluation of two-center two-electron integrals},}\ }\href {\doibase
  10.1002/1096-987X(200012)21:16<1505::AID-JCC7>3.0.CO;2-4} {\bibfield
  {journal} {\bibinfo  {journal} {J. Comput. Chem.}\ }\textbf {\bibinfo
  {volume} {21}},\ \bibinfo {pages} {1505--1510} (\bibinfo {year}
  {2000})}\BibitemShut {NoStop}%
\bibitem [{\citenamefont {Ahlrichs}(2006)}]{Ahlrichs2006}%
  \BibitemOpen
  \bibfield  {author} {\bibinfo {author} {\bibfnamefont {R.}~\bibnamefont
  {Ahlrichs}},\ }\bibfield  {title} {\enquote {\bibinfo {title} {{A} simple
  algebraic derivation of the {O}bara-{S}aika scheme for general two-electron
  interaction potentials},}\ }\href {\doibase 10.1039/B605188J} {\bibfield
  {journal} {\bibinfo  {journal} {Phys. Chem. Chem. Phys.}\ }\textbf {\bibinfo
  {volume} {8}},\ \bibinfo {pages} {3072--3077} (\bibinfo {year}
  {2006})}\BibitemShut {NoStop}%
\bibitem [{\citenamefont {McMurchie}\ and\ \citenamefont
  {Davidson}(1978)}]{McMurchie1978}%
  \BibitemOpen
  \bibfield  {author} {\bibinfo {author} {\bibfnamefont {L.~E.}\ \bibnamefont
  {McMurchie}}\ and\ \bibinfo {author} {\bibfnamefont {E.~R.}\ \bibnamefont
  {Davidson}},\ }\bibfield  {title} {\enquote {\bibinfo {title} {{O}ne- and
  two-electron integrals over cartesian gaussian functions},}\ }\href {\doibase
  http://dx.doi.org/10.1016/0021-9991(78)90092-X} {\bibfield  {journal}
  {\bibinfo  {journal} {J. Comput. Phys.}\ }\textbf {\bibinfo {volume} {26}},\
  \bibinfo {pages} {218--231} (\bibinfo {year} {1978})}\BibitemShut {NoStop}%
\bibitem [{\citenamefont {Helgaker}\ and\ \citenamefont
  {Taylor}(1992)}]{Helgaker1992}%
  \BibitemOpen
  \bibfield  {author} {\bibinfo {author} {\bibfnamefont {T.}~\bibnamefont
  {Helgaker}}\ and\ \bibinfo {author} {\bibfnamefont {P.~R.}\ \bibnamefont
  {Taylor}},\ }\bibfield  {title} {\enquote {\bibinfo {title} {{O}n the
  evaluation of derivatives of {G}aussian integrals},}\ }\href {\doibase
  10.1007/BF01132826} {\bibfield  {journal} {\bibinfo  {journal} {Theor. Chim.
  Acta}\ }\textbf {\bibinfo {volume} {83}},\ \bibinfo {pages} {177--183}
  (\bibinfo {year} {1992})}\BibitemShut {NoStop}%
\bibitem [{\citenamefont {Klopper}\ and\ \citenamefont
  {R{\"o}hse}(1992)}]{Klopper1992}%
  \BibitemOpen
  \bibfield  {author} {\bibinfo {author} {\bibfnamefont {W.}~\bibnamefont
  {Klopper}}\ and\ \bibinfo {author} {\bibfnamefont {R.}~\bibnamefont
  {R{\"o}hse}},\ }\bibfield  {title} {\enquote {\bibinfo {title} {{C}omputation
  of some new two-electron {G}aussian integrals},}\ }\href {\doibase
  10.1007/BF01113067} {\bibfield  {journal} {\bibinfo  {journal} {Theor. Chim.
  Acta}\ }\textbf {\bibinfo {volume} {83}},\ \bibinfo {pages} {441--453}
  (\bibinfo {year} {1992})}\BibitemShut {NoStop}%
\bibitem [{\citenamefont {Reine}, \citenamefont {Tellgren},\ and\ \citenamefont
  {Helgaker}(2007)}]{Reine2007}%
  \BibitemOpen
  \bibfield  {author} {\bibinfo {author} {\bibfnamefont {S.}~\bibnamefont
  {Reine}}, \bibinfo {author} {\bibfnamefont {E.}~\bibnamefont {Tellgren}}, \
  and\ \bibinfo {author} {\bibfnamefont {T.}~\bibnamefont {Helgaker}},\
  }\bibfield  {title} {\enquote {\bibinfo {title} {{A} unified scheme for the
  calculation of differentiated and undifferentiated molecular integrals over
  solid-harmonic {G}aussians},}\ }\href {\doibase 10.1039/B705594C} {\bibfield
  {journal} {\bibinfo  {journal} {Phys. Chem. Chem. Phys.}\ }\textbf {\bibinfo
  {volume} {9}},\ \bibinfo {pages} {4771--4779} (\bibinfo {year}
  {2007})}\BibitemShut {NoStop}%
\bibitem [{\citenamefont {Dunlap}(1990)}]{Dunlap1990}%
  \BibitemOpen
  \bibfield  {author} {\bibinfo {author} {\bibfnamefont {B.~I.}\ \bibnamefont
  {Dunlap}},\ }\bibfield  {title} {\enquote {\bibinfo {title} {{T}hree-center
  {G}aussian-type-orbital integral evaluation using solid spherical
  harmonics},}\ }\href {\doibase 10.1103/PhysRevA.42.1127} {\bibfield
  {journal} {\bibinfo  {journal} {Phys. Rev. A}\ }\textbf {\bibinfo {volume}
  {42}},\ \bibinfo {pages} {1127--1137} (\bibinfo {year} {1990})}\BibitemShut
  {NoStop}%
\bibitem [{\citenamefont {Dunlap}(2001)}]{Dunlap2001}%
  \BibitemOpen
  \bibfield  {author} {\bibinfo {author} {\bibfnamefont {B.~I.}\ \bibnamefont
  {Dunlap}},\ }\bibfield  {title} {\enquote {\bibinfo {title} {{D}irect quantum
  chemical integral evaluation},}\ }\href {\doibase
  10.1002/1097-461X(2001)81:6<373::AID-QUA1007>3.0.CO;2-3} {\bibfield
  {journal} {\bibinfo  {journal} {Int. J. Quantum Chem.}\ }\textbf {\bibinfo
  {volume} {81}},\ \bibinfo {pages} {373--383} (\bibinfo {year}
  {2001})}\BibitemShut {NoStop}%
\bibitem [{\citenamefont {Dunlap}(2003)}]{Dunlap2003}%
  \BibitemOpen
  \bibfield  {author} {\bibinfo {author} {\bibfnamefont {B.~I.}\ \bibnamefont
  {Dunlap}},\ }\bibfield  {title} {\enquote {\bibinfo {title} {{A}ngular
  momentum in solid-harmonic-{G}aussian integral evaluation},}\ }\href
  {\doibase http://dx.doi.org/10.1063/1.1528935} {\bibfield  {journal}
  {\bibinfo  {journal} {J. Chem. Phys.}\ }\textbf {\bibinfo {volume} {118}},\
  \bibinfo {pages} {1036--1043} (\bibinfo {year} {2003})}\BibitemShut {NoStop}%
\bibitem [{\citenamefont {Hu}\ and\ \citenamefont
  {Dunlap.}(2013)}]{Dunlap2013}%
  \BibitemOpen
  \bibfield  {author} {\bibinfo {author} {\bibfnamefont {A.}~\bibnamefont
  {Hu}}\ and\ \bibinfo {author} {\bibfnamefont {B.~I.}\ \bibnamefont
  {Dunlap.}},\ }\bibfield  {title} {\enquote {\bibinfo {title} {{T}hree-center
  molecular integrals and derivatives using solid harmonic {G}aussian orbital
  and {K}ohn\mbox{-}{S}ham potential basis sets},}\ }\href {\doibase
  10.1139/cjc-2012-0485} {\bibfield  {journal} {\bibinfo  {journal} {Can. J.
  Chem.}\ }\textbf {\bibinfo {volume} {91}},\ \bibinfo {pages} {907--915}
  (\bibinfo {year} {2013})}\BibitemShut {NoStop}%
\bibitem [{\citenamefont {Giese}\ and\ \citenamefont {York}(2008)}]{Giese2008}%
  \BibitemOpen
  \bibfield  {author} {\bibinfo {author} {\bibfnamefont {T.~J.}\ \bibnamefont
  {Giese}}\ and\ \bibinfo {author} {\bibfnamefont {D.~M.}\ \bibnamefont
  {York}},\ }\bibfield  {title} {\enquote {\bibinfo {title} {{C}ontracted
  auxiliary {G}aussian basis integral and derivative evaluation},}\ }\href
  {\doibase http://dx.doi.org/10.1063/1.2821745} {\bibfield  {journal}
  {\bibinfo  {journal} {J. Chem. Phys.}\ }\textbf {\bibinfo {volume} {128}},\
  \bibinfo {eid} {064104} (\bibinfo {year} {2008})}\BibitemShut {NoStop}%
\bibitem [{\citenamefont {Kuang}\ and\ \citenamefont
  {Lin}(1997{\natexlab{a}})}]{Kuang1997}%
  \BibitemOpen
  \bibfield  {author} {\bibinfo {author} {\bibfnamefont {J.}~\bibnamefont
  {Kuang}}\ and\ \bibinfo {author} {\bibfnamefont {C.~D.}\ \bibnamefont
  {Lin}},\ }\bibfield  {title} {\enquote {\bibinfo {title} {{M}olecular
  integrals over spherical {G}aussian-type orbitals: {I}},}\ }\href
  {http://stacks.iop.org/0953-4075/30/i=11/a=007} {\bibfield  {journal}
  {\bibinfo  {journal} {J. Phys. B: At. Mol. Opt. Phys.}\ }\textbf {\bibinfo
  {volume} {30}},\ \bibinfo {pages} {2529--2548} (\bibinfo {year}
  {1997}{\natexlab{a}})}\BibitemShut {NoStop}%
\bibitem [{\citenamefont {Kuang}\ and\ \citenamefont
  {Lin}(1997{\natexlab{b}})}]{Kuang1997a}%
  \BibitemOpen
  \bibfield  {author} {\bibinfo {author} {\bibfnamefont {J.}~\bibnamefont
  {Kuang}}\ and\ \bibinfo {author} {\bibfnamefont {C.~D.}\ \bibnamefont
  {Lin}},\ }\bibfield  {title} {\enquote {\bibinfo {title} {{M}olecular
  integrals over spherical {G}aussian-type orbitals: {I}{I}. {M}odified with
  plane-wave phase factors},}\ }\href
  {http://stacks.iop.org/0953-4075/30/i=11/a=008} {\bibfield  {journal}
  {\bibinfo  {journal} {J. Phys. B: At. Mol. Opt. Phys.}\ }\textbf {\bibinfo
  {volume} {30}},\ \bibinfo {pages} {2549--2567} (\bibinfo {year}
  {1997}{\natexlab{b}})}\BibitemShut {NoStop}%
\bibitem [{\citenamefont {Reine}, \citenamefont {Helgaker},\ and\ \citenamefont
  {Lindh}(2012)}]{Reine2012}%
  \BibitemOpen
  \bibfield  {author} {\bibinfo {author} {\bibfnamefont {S.}~\bibnamefont
  {Reine}}, \bibinfo {author} {\bibfnamefont {T.}~\bibnamefont {Helgaker}}, \
  and\ \bibinfo {author} {\bibfnamefont {R.}~\bibnamefont {Lindh}},\ }\bibfield
   {title} {\enquote {\bibinfo {title} {Multi-electron integrals},}\ }\href
  {\doibase 10.1002/wcms.78} {\bibfield  {journal} {\bibinfo  {journal} {WIREs
  Comput Mol Sci}\ }\textbf {\bibinfo {volume} {2}},\ \bibinfo {pages}
  {290--303} (\bibinfo {year} {2012})}\BibitemShut {NoStop}%
\bibitem [{\citenamefont {Weniger}\ and\ \citenamefont
  {Steinborn}(1985)}]{Weniger1985}%
  \BibitemOpen
  \bibfield  {author} {\bibinfo {author} {\bibfnamefont {E.~J.}\ \bibnamefont
  {Weniger}}\ and\ \bibinfo {author} {\bibfnamefont {E.~O.}\ \bibnamefont
  {Steinborn}},\ }\bibfield  {title} {\enquote {\bibinfo {title} {{A} simple
  derivation of the addition theorems of the irregular solid harmonics, the
  {H}elmholtz harmonics, and the modified {H}elmholtz harmonics},}\ }\href
  {\doibase http://dx.doi.org/10.1063/1.526604} {\bibfield  {journal} {\bibinfo
   {journal} {J. Math. Phys.}\ }\textbf {\bibinfo {volume} {26}},\ \bibinfo
  {pages} {664--670} (\bibinfo {year} {1985})}\BibitemShut {NoStop}%
\bibitem [{\citenamefont {Weniger}(2005)}]{Weniger2005}%
  \BibitemOpen
  \bibfield  {author} {\bibinfo {author} {\bibfnamefont {J.}~\bibnamefont
  {Weniger}},\ }\bibfield  {title} {\enquote {\bibinfo {title} {{T}he
  {S}pherical {T}ensor {G}radient {O}perator},}\ }\href@noop {} {\bibfield
  {journal} {\bibinfo  {journal} {arXiv:math-ph/0505018v1}\ } (\bibinfo {year}
  {2005})}\BibitemShut {NoStop}%
\bibitem [{\citenamefont {Hobson}(1892)}]{Hobson1892}%
  \BibitemOpen
  \bibfield  {author} {\bibinfo {author} {\bibfnamefont {E.~W.}\ \bibnamefont
  {Hobson}},\ }\bibfield  {title} {\enquote {\bibinfo {title} {{O}n a {T}heorem
  in {D}ifferentiation, and its application to {S}pherical {H}armonics},}\
  }\href {\doibase 10.1112/plms/s1-24.1.55} {\bibfield  {journal} {\bibinfo
  {journal} {Proc. London Math. Soc.}\ }\textbf {\bibinfo {volume} {s1-24}},\
  \bibinfo {pages} {55--67} (\bibinfo {year} {1892})}\BibitemShut {NoStop}%
\bibitem [{\citenamefont {Giese}\ and\ \citenamefont
  {York}(2016{\natexlab{b}})}]{Giese2016_book}%
  \BibitemOpen
  \bibfield  {author} {\bibinfo {author} {\bibfnamefont {T.~J.}\ \bibnamefont
  {Giese}}\ and\ \bibinfo {author} {\bibfnamefont {D.~M.}\ \bibnamefont
  {York}},\ }\enquote {\bibinfo {title} {{A} modified divide-and-conquer
  linear-scaling quantum force field with multipolar charge densities},}\ in\
  \href {\doibase 10.4032/9789814613934} {\emph {\bibinfo {booktitle}
  {{M}any-body effects and electrostatics in biomolecules}}}\ (\bibinfo
  {publisher} {Pan Stanford Publishing, Singapore},\ \bibinfo {year} {2016})\
  pp.\ \bibinfo {pages} {1--32}\BibitemShut {NoStop}%
\bibitem [{\citenamefont {Watson}\ \emph {et~al.}(2004)\citenamefont {Watson},
  \citenamefont {Sa\l{}ek}, \citenamefont {Macak},\ and\ \citenamefont
  {Helgaker}}]{Watson2004}%
  \BibitemOpen
  \bibfield  {author} {\bibinfo {author} {\bibfnamefont {M.~A.}\ \bibnamefont
  {Watson}}, \bibinfo {author} {\bibfnamefont {P.}~\bibnamefont {Sa\l{}ek}},
  \bibinfo {author} {\bibfnamefont {P.}~\bibnamefont {Macak}}, \ and\ \bibinfo
  {author} {\bibfnamefont {T.}~\bibnamefont {Helgaker}},\ }\bibfield  {title}
  {\enquote {\bibinfo {title} {{L}inear-scaling formation of {K}ohn-{S}ham
  {H}amiltonian: {A}pplication to the calculation of excitation energies and
  polarizabilities of large molecular systems},}\ }\href {\doibase
  http://dx.doi.org/10.1063/1.1771639} {\bibfield  {journal} {\bibinfo
  {journal} {J. Chem. Phys.}\ }\textbf {\bibinfo {volume} {121}},\ \bibinfo
  {pages} {2915--2931} (\bibinfo {year} {2004})}\BibitemShut {NoStop}%
\bibitem [{\citenamefont {Helgaker}, \citenamefont {J{\o}rgensen},\ and\
  \citenamefont {Olsen}(2012)}]{Helgaker_book}%
  \BibitemOpen
  \bibfield  {author} {\bibinfo {author} {\bibfnamefont {T.}~\bibnamefont
  {Helgaker}}, \bibinfo {author} {\bibfnamefont {P.}~\bibnamefont
  {J{\o}rgensen}}, \ and\ \bibinfo {author} {\bibfnamefont {J.}~\bibnamefont
  {Olsen}},\ }\href@noop {} {\emph {\bibinfo {title} {{M}olecular
  {E}lectron-{S}tructure {T}heory}}}\ (\bibinfo  {publisher} {Wiley},\ \bibinfo
  {year} {2012})\ pp.\ \bibinfo {pages} {412--414}\BibitemShut {NoStop}%
\bibitem [{\citenamefont {Schlegel}\ and\ \citenamefont
  {Frisch}(1995)}]{Schlegel1995}%
  \BibitemOpen
  \bibfield  {author} {\bibinfo {author} {\bibfnamefont {H.~B.}\ \bibnamefont
  {Schlegel}}\ and\ \bibinfo {author} {\bibfnamefont {M.~J.}\ \bibnamefont
  {Frisch}},\ }\bibfield  {title} {\enquote {\bibinfo {title} {{T}ransformation
  between {C}artesian and pure spherical harmonic {G}aussians},}\ }\href
  {\doibase 10.1002/qua.560540202} {\bibfield  {journal} {\bibinfo  {journal}
  {Int. J. Quantum Chem.}\ }\textbf {\bibinfo {volume} {54}},\ \bibinfo {pages}
  {83--87} (\bibinfo {year} {1995})}\BibitemShut {NoStop}%
\bibitem [{\citenamefont {Hu}\ \emph {et~al.}(2000)\citenamefont {Hu},
  \citenamefont {Staufer}, \citenamefont {Birkenheuer}, \citenamefont
  {Igoshine},\ and\ \citenamefont {R\"osch}}]{Hu2000}%
  \BibitemOpen
  \bibfield  {author} {\bibinfo {author} {\bibfnamefont {A.}~\bibnamefont
  {Hu}}, \bibinfo {author} {\bibfnamefont {M.}~\bibnamefont {Staufer}},
  \bibinfo {author} {\bibfnamefont {U.}~\bibnamefont {Birkenheuer}}, \bibinfo
  {author} {\bibfnamefont {V.}~\bibnamefont {Igoshine}}, \ and\ \bibinfo
  {author} {\bibfnamefont {N.}~\bibnamefont {R\"osch}},\ }\bibfield  {title}
  {\enquote {\bibinfo {title} {{A}nalytical evaluation of pseudopotential
  matrix elements with {G}aussian-type solid harmonics of arbitrary angular
  momentum},}\ }\href {\doibase
  10.1002/1097-461X(2000)79:4<209::AID-QUA2>3.0.CO;2-J} {\bibfield  {journal}
  {\bibinfo  {journal} {Int. J. Quantum Chem.}\ }\textbf {\bibinfo {volume}
  {79}},\ \bibinfo {pages} {209--221} (\bibinfo {year} {2000})}\BibitemShut
  {NoStop}%
\bibitem [{\citenamefont {Gaunt}(1929)}]{Gaunt1929}%
  \BibitemOpen
  \bibfield  {author} {\bibinfo {author} {\bibfnamefont {J.~A.}\ \bibnamefont
  {Gaunt}},\ }\bibfield  {title} {\enquote {\bibinfo {title} {{T}he {T}riplets
  of {H}elium},}\ }\href {\doibase 10.1098/rsta.1929.0004} {\bibfield
  {journal} {\bibinfo  {journal} {Phil. Trans. R. Soc. A}\ }\textbf {\bibinfo
  {volume} {A228}},\ \bibinfo {pages} {151--196} (\bibinfo {year}
  {1929})}\BibitemShut {NoStop}%
\bibitem [{\citenamefont {Xu}(1996)}]{Xu1996}%
  \BibitemOpen
  \bibfield  {author} {\bibinfo {author} {\bibfnamefont {Y.-L.}\ \bibnamefont
  {Xu}},\ }\bibfield  {title} {\enquote {\bibinfo {title} {Fast evaluation of
  the {G}aunt coefficients},}\ }\href@noop {} {\bibfield  {journal} {\bibinfo
  {journal} {Math. Comp.}\ }\textbf {\bibinfo {volume} {65}},\ \bibinfo {pages}
  {1601--1612} (\bibinfo {year} {1996})}\BibitemShut {NoStop}%
\bibitem [{\citenamefont {P\'{e}rez-{J}ord\'{a}}\ and\ \citenamefont
  {Yang}(1996)}]{PerezJorda1996}%
  \BibitemOpen
  \bibfield  {author} {\bibinfo {author} {\bibfnamefont {J.~M.}\ \bibnamefont
  {P\'{e}rez-{J}ord\'{a}}}\ and\ \bibinfo {author} {\bibfnamefont
  {W.}~\bibnamefont {Yang}},\ }\bibfield  {title} {\enquote {\bibinfo {title}
  {{A} concise redefinition of the solid spherical harmonics and its use in
  fast multipole methods},}\ }\href {\doibase
  http://dx.doi.org/10.1063/1.471517} {\bibfield  {journal} {\bibinfo
  {journal} {J. Chem. Phys.}\ }\textbf {\bibinfo {volume} {104}},\ \bibinfo
  {pages} {8003--8006} (\bibinfo {year} {1996})}\BibitemShut {NoStop}%
\bibitem [{\citenamefont {Homeier}\ and\ \citenamefont
  {Steinborn}(1996)}]{Homeier1996}%
  \BibitemOpen
  \bibfield  {author} {\bibinfo {author} {\bibfnamefont {H.~H.~H.}\
  \bibnamefont {Homeier}}\ and\ \bibinfo {author} {\bibfnamefont {E.~O.}\
  \bibnamefont {Steinborn}},\ }\bibfield  {title} {\enquote {\bibinfo {title}
  {{S}ome properties of the coupling coefficients of real spherical harmonics
  and their relation to {G}aunt coefficients},}\ }\href {\doibase
  http://dx.doi.org/10.1016/S0166-1280(96)90531-X} {\bibfield  {journal}
  {\bibinfo  {journal} {J. Mol. Struct. Theochem}\ }\textbf {\bibinfo {volume}
  {368}},\ \bibinfo {pages} {31--37} (\bibinfo {year} {1996})}\BibitemShut
  {NoStop}%
\bibitem [{cp2()}]{cp2k}%
  \BibitemOpen
  \href@noop {} {\enquote {\bibinfo {title} {{T}he {C}{P}2{K} developers group,
  {C}{P}2{K} is freely available from: http://www.cp2k.org/ (accessed {A}ugust,
  2016)},}\ }\BibitemShut {NoStop}%
\bibitem [{Note1()}]{Note1}%
  \BibitemOpen
  \bibinfo {note} {Intel\protect \textsuperscript {\textregistered }
  Xeon\protect \textsuperscript {\textregistered } E5-2697v3/DDR
  2133}\BibitemShut {NoStop}%
\bibitem [{\citenamefont {VandeVondele}\ and\ \citenamefont
  {Hutter}(2007)}]{VandeVondele2007}%
  \BibitemOpen
  \bibfield  {author} {\bibinfo {author} {\bibfnamefont {J.}~\bibnamefont
  {VandeVondele}}\ and\ \bibinfo {author} {\bibfnamefont {J.}~\bibnamefont
  {Hutter}},\ }\bibfield  {title} {\enquote {\bibinfo {title} {{G}aussian basis
  sets for accurate calculations on molecular systems in gas and condensed
  phases},}\ }\href {\doibase 10.1063/1.2770708} {\bibfield  {journal}
  {\bibinfo  {journal} {J. Chem. Phys.}\ }\textbf {\bibinfo {volume} {127}},\
  \bibinfo {eid} {114105} (\bibinfo {year} {2007})}\BibitemShut {NoStop}%
\bibitem [{\citenamefont {Bronshtein}\ \emph {et~al.}(2015)\citenamefont
  {Bronshtein}, \citenamefont {Semendyayev}, \citenamefont {Musiol},\ and\
  \citenamefont {M\"uhlig}}]{Bronstein_book}%
  \BibitemOpen
  \bibfield  {author} {\bibinfo {author} {\bibfnamefont {I.~N.}\ \bibnamefont
  {Bronshtein}}, \bibinfo {author} {\bibfnamefont {K.~A.}\ \bibnamefont
  {Semendyayev}}, \bibinfo {author} {\bibfnamefont {G.}~\bibnamefont {Musiol}},
  \ and\ \bibinfo {author} {\bibfnamefont {H.}~\bibnamefont {M\"uhlig}},\
  }\href@noop {} {\emph {\bibinfo {title} {Handbook of Mathematics}}}\
  (\bibinfo  {publisher} {Springer, 6th edition},\ \bibinfo {year} {2015})\ p.\
  \bibinfo {pages} {1100}\BibitemShut {NoStop}%
\bibitem [{\citenamefont {Abramowitz}\ and\ \citenamefont
  {Stegun}(1972)}]{handbook_gamma_dup}%
  \BibitemOpen
  \bibfield  {author} {\bibinfo {author} {\bibfnamefont {M.}~\bibnamefont
  {Abramowitz}}\ and\ \bibinfo {author} {\bibfnamefont {I.~A.}\ \bibnamefont
  {Stegun}},\ }\href@noop {} {\emph {\bibinfo {title} {Handbook of Mathematical
  Functions with Formulas, Graphs, and Mathematical Tables}}},\ \bibinfo
  {edition} {9th}\ ed.\ (\bibinfo  {publisher} {Dover Publications},\ \bibinfo
  {year} {1972})\ p.\ \bibinfo {pages} {256}\BibitemShut {NoStop}%
\bibitem [{\citenamefont {Gasper}\ and\ \citenamefont
  {Rahman}(2004)}]{hypergeometric_series_book}%
  \BibitemOpen
  \bibfield  {author} {\bibinfo {author} {\bibfnamefont {G.}~\bibnamefont
  {Gasper}}\ and\ \bibinfo {author} {\bibfnamefont {M.}~\bibnamefont
  {Rahman}},\ }\href@noop {} {\emph {\bibinfo {title} {Basic Hypergeometric
  Series (Encyclopedia of Mathematics and its Applications)}}},\ \bibinfo
  {edition} {2nd}\ ed.,\ edited by\ \bibinfo {editor} {\bibfnamefont {R.~S.}\
  \bibnamefont {Doran}}, \bibinfo {editor} {\bibfnamefont {P.}~\bibnamefont
  {Flajolet}}, \bibinfo {editor} {\bibfnamefont {M.}~\bibnamefont {Ismail}},
  \bibinfo {editor} {\bibfnamefont {T.-Y.}\ \bibnamefont {Lam}}, \ and\
  \bibinfo {editor} {\bibfnamefont {E.}~\bibnamefont {Lutwak}}\ (\bibinfo
  {publisher} {Cambridge University Press},\ \bibinfo {year} {2004})\ p.\
  \bibinfo {pages} {XIV}\BibitemShut {NoStop}%
\end{thebibliography}%

\clearpage
\includepdf[pages={1}]{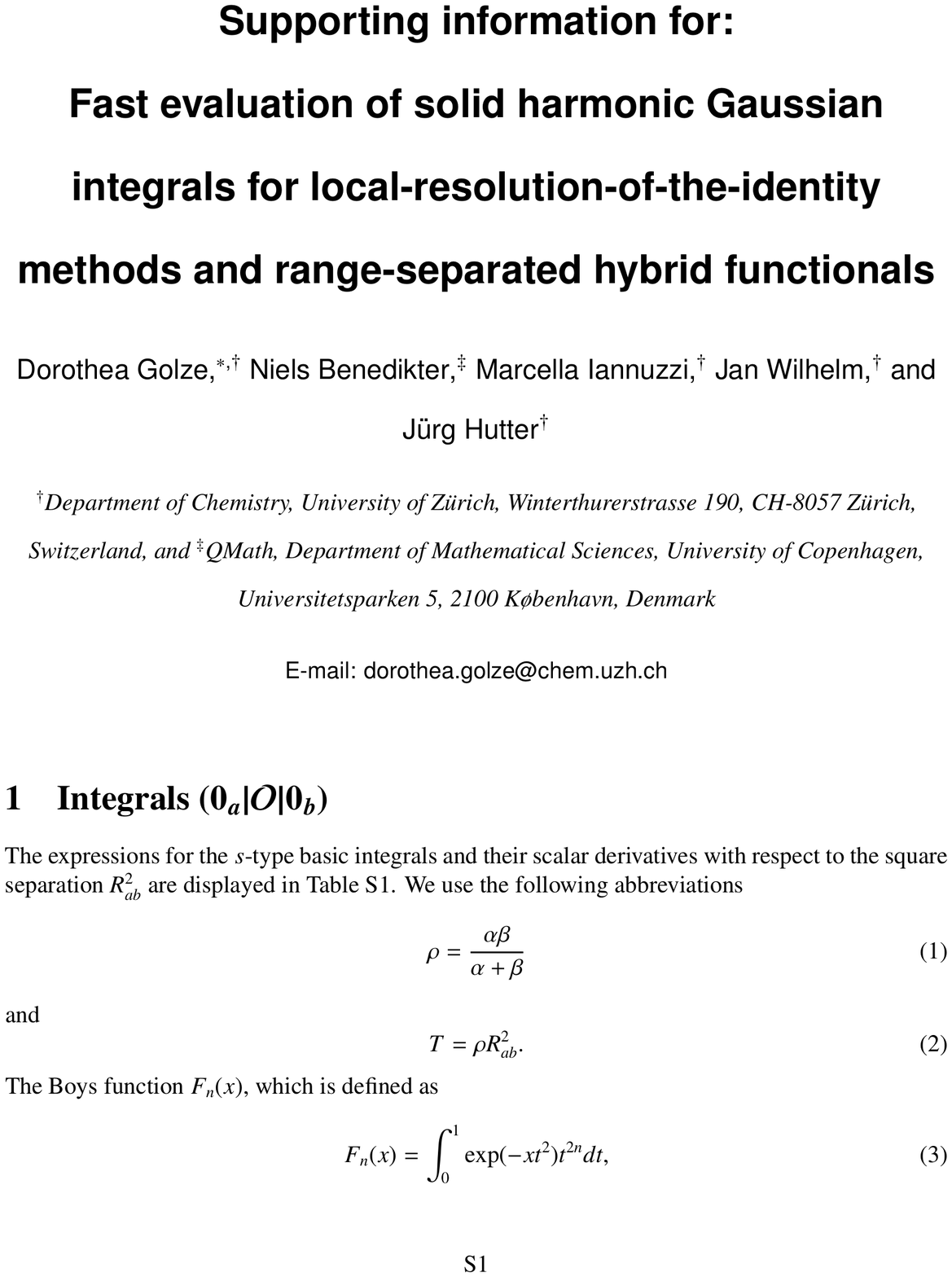}
\clearpage
\includepdf[pages={2}]{supp_info.pdf}
\clearpage
\includepdf[pages={3}]{supp_info.pdf}
\clearpage
\includepdf[pages={4}]{supp_info.pdf}
\clearpage
\includepdf[pages={5}]{supp_info.pdf}
\clearpage
\includepdf[pages={6}]{supp_info.pdf}
\clearpage
\includepdf[pages={7}]{supp_info.pdf}
\clearpage
\includepdf[pages={8}]{supp_info.pdf}
\clearpage
\includepdf[pages={9}]{supp_info.pdf}

\end{document}